\newtheorem{theorem}{Theorem}[section]
\newtheorem{lemma}[theorem]{Lemma}
\newtheorem{proposition}[theorem]{Proposition}
\theoremstyle{definition}
\theoremstyle{remark}
\newtheorem{remark}[theorem]{Remark}
\numberwithin{equation}{section}
\begin{document}
\title{Analysis of an initial value problem for an
 extracellular and intracellular
 model of hepatitis C virus infection}


\author{Alexis NANGUE}
\address{Higher Teachers' Training College of the University of Maroua,
P.O.Box 55, Maroua, Cameroon} \curraddr{}
\email{alexnanga02@yahoo.fr}
\thanks{}

\author{Alan D. Rendall}
\address{Institute for Mathematics,
 Johannes Gutenberg University, Staudingerweg 9, 55099
Mainz, Germany} \curraddr{} \email{rendall@uni-mainz.de}
\thanks{}

\author{Brice  KAMMEGNE TCHEUGAM}
\address{African Institute for Mathematical Sciences, P.O. Box 608 Limbe, Cameroon}
\curraddr{} \email{brice.kammegne@aims-cameroon.org}
\thanks{}

\author{Patrick Steve  KAMDEM SIMO}
\address{Sub-regional Institute of Statistics and Applied Economics,
 P.O. Box  294 Yaound\'{e}, Cameroon}
\curraddr{} \email{ksimopatrick@gmail.com}
\thanks{}

\subjclass[2010]{34A05, 34A06,34A34,34D23,37N25}

\keywords{ Extracellular and intracellular model, HCV, Lyapunov
functions, Lasalle's invariance principle, Uniform persistence,
Li-Muldowney global stability criterion, $\omega$-limit set. }

\date{}

\dedicatory{}

\begin{abstract}
In this paper, a mathematical analysis of the global dynamics of a
viral infection model in vivo is carried out. We study the dynamics
of a hepatitis C virus (HCV) model, under therapy, that considers
both extracellular and intracellular levels of infection. At present
most mathematical modelling of viral kinetics after treatment only
addresses the process of infection of a cell by the virus and the
release of virions by the cell, while the processes taking place
inside the cell are not included. We prove that the solutions of the
new model with positive initial values are positive, exist globally
in time and are bounded. The model has two virus-free steady states.
They are distinguished by the fact that viral RNA is absent inside
the cells in the first state and present inside the cells in the
second. There are basic reproduction numbers associated to each of
these steady states. If the basic reproduction number of the first
steady state is less than one then that state is asymptotically
stable. If the basic reproduction number of the first steady state
is greater than one and that of the second less than one then the
second steady state is asymptotically stable. If both basic
reproduction numbers are greater than one then we obtain various
conclusions which depend on different restrictions on the parameters
of the model. Under increasingly strong assumptions we prove that
there is at least one positive steady state (infected equilibrium),
that there is a unique positive steady state and that the positive
steady state is stable. We also give a condition under which every
positive solution converges to a positive steady state. This is
proved by methods of Li and Muldowney. Finally we illustrate the
theoretical results by numerical simulations.
\end{abstract}
\maketitle

\section{Introduction}
Infection with hepatitis C virus (HCV) is one of the most common
causes of chronic liver disease. An account of its global
epidemiology can be found in \cite{Shepardetal2005}, where the
number of people infected worldwide is estimated as 123 million.
Achieving a sustained viral response (SVR), defined as undetectable
HCV-RNA in serum (viral load) 24 weeks after the end of treatment,
is the most effective way to prevent disease progression
\cite{SeeffandHoofnagle2003}. Recently the classical treatment
regimes with pegylated interferon (IFN) and ribavirin have been
improved on by the use of direct-acting antiviral agents (DAA). The
new treatments can produce a cure in more than 90\% of chronic cases
\cite{who2016}. In the past, mathematical models of the viral
dynamics of HCV have proven useful in describing the interaction
between the virus and host cells. In recent years, several papers on
the dynamics of HCV and other related pathogens such as the human
immunodeficiency virus (HIV) and the hepatitis B virus (HBV) have
appeared \cite{Anangue22019, Xiang, hattaf1, Gong, nangue2,
nangue3}. These studies have provided insights into viral
replication, cell death rate and treatment effectiveness but they
did not take into account the intracellular level of the infection.
\\\indent
In the basic model of virus dynamics, often used to describe the
dynamics of HCV, HBV and HIV infections, a simple view of viral
infection is proposed through the coupled evolution of three
populations: uninfected cells, infected cells and free virus
particles. The viral dynamics is therefore described by the temporal
evolution of the three populations. Mathematical modelling of HCV
infection and treatment has provided valuable insights into
viral-host-IFN dynamics \cite{Neumannetal1998} and has helped to
improve the treatment of HCV \cite{Herrmann2003}. In this model, the
virus is produced and released from productively infected cells into
the systemic circulation, where it can be cleared or infect further
target cells. It was shown using this model that the first phase of
viral decline is due to IFN acting to reduce the average rate of
virion production and release per infected cell, whereas the slower
second phase viral decline was attributed to the progressive loss of
infected cells \cite{Neumannetal1998}.
\\\indent
Denote by $T$, $I$ and $V$ the concentrations of healthy
hepatocytes, hepatocytes infected with HCV, and free HCV virions.
Because of the interpretation of these quantities they are
non-negative in any biologically relevant solution. The dynamics of
HCV infection is the result of the dynamics of the compartments $T$,
$I$, and $V$, and the various interactions between them. The
following system is a modification of an extracellular model given
in \cite{guedj2010}:
\begin{numcases}\strut
\nonumber \frac{d T}{dt} = s+r_{T}T\left(1-\frac{T+I}{T_\textrm{max}}\right) -d T - \frac{b TV}{T+I};\\
          \frac{d I}{dt} = r_{I} I\left(1-\frac{T+I}{T_\textrm{max}}\right) + \frac{b TV}{T+I}-\delta I;
          \label{cellmod}\\
\nonumber \frac{d V}{dt} = (1-\epsilon)p I -cV -\frac{b TV}{T+I}.
\end{numcases}

\noindent Its key features are as follows:

\noindent (i) The rate of change of the amount of healthy
hepatocytes $T$ is given by the first equation of (\ref{cellmod}).
Healthy hepatocytes are produced at a constant rate $s$ from an
external source and die at rate $d T$. The model in \cite{guedj2010}
has $s=0$. The population of uninfected hepatocytes is assumed to
maintain itself logistically, with homeostatic carrying capacity
$T_\textrm{max}$ as proposed in \cite{Anangue22019, Xiang}. Thus the
recruitment of healthy hepatocytes is given by
$r_{T}T\left(1-\frac{T+I}{T_\textrm{max}}\right)$, where $r_{T}$ is
the maximal per capita growth rate or the proliferation rate.
Virions infect the healthy hepatocytes at the rate $\frac{b
TV}{T+I}$, where $b$ is the rate of transmission of the infection,
an expression only defined when $T+I>0$. For the significance of
this term in the modelling of hepatitis we refer to \cite{hews2010}.
This standard incidence function replaces the mass action function
(used in the model of \cite{guedj2010}) which has been shown lead to
the unrealistic feature that a larger liver mass favours the
establishment of a chronic hepatitis infection. \\ \noindent (ii)
The second equation of (\ref{cellmod}) gives the rate of change of
infected cells $I$. The hepatocytes which are infected with HCV die
at rate $\delta$ per day so that $\frac{1}{\delta}$ is the life
expectancy of hepatocytes infected with HCV. Healthy hepatocytes
become infected at the rate $\frac{b TV}{T+I}$. As in
\cite{Anangue22019, Xiang}, we assume that hepatocytes infected with
HCV proliferate by a complete logistic term $r_{I}
I\left(1-\frac{T+I}{T_\textrm{max}}\right) $ , where $r_{I}$ is the
proliferation rate or maximal per capita growth rate of hepatocytes
infected with HCV. The model of \cite{guedj2010} has $r_I=0$.

\noindent (iii) The third equation of (\ref{cellmod}) gives the rate
of change of the free virus $V$. The infected hepatocytes produce
virus at rate $(1-\epsilon)pI$, and virus is cleared at the rate
$cV$. Also, the population of virions decreases due to the infection
at the rate $\frac{b TV}{T+I}$ : this is the absorption phenomenon
\cite{hattaf1}, which is not included in the model of
\cite{guedj2010}. The efficacy of treatment in blocking virion
production  is described by the parameter, $\epsilon$ whose value is
non-negative and less than one.

\indent The intracellular and cellular infection (ICCI) model is a
multi-scale model that encompasses the original cellular infection
(CI) model (\ref{cellmod}) but includes the viral production rate as
a dynamical process that may vary with time according to
intracellular treatment pressure and viral evolution. It is also a
modification of a model presented in \cite{guedj2010} and we adopt
some of the terminology of that reference in naming the models. To
avoid having too many parameters, the modelling of the intracellular
replication cycle is simplified to involve only the two
intracellular variables that are essential for RNA replication.
Hence the intracellular model is given by :
\begin{equation}\label{intramod}
 \left\{
  \begin{array}{ll}
\frac{d U}{dt} = \beta R\left(1-\frac{U}{U_\textrm{max}}\right)-\gamma U;     & \hbox{} \\
 \frac{d R}{dt} = \alpha U -\sigma R .    & \hbox{}
  \end{array}
\right.
\end{equation}
Here $R(t)$ is the number of positive genomic RNA strands that are
available for transcription and translation. It does not include RNA
which is packaged into the replication units which are responsible
for the production of new virus RNA in hepatitis C. $U(t)$ is the
number of RNA molecules  within the replication units which are
available as templates for RNA production with rate constant
$\alpha$. On the other hand the RNA included in $R(t)$ serves as a
template for the formation of replication units $U(t)$ with a
maximal rate constant $\beta$. This leads to a replication feedback
loop, where a large number of replication units can be formed and
function at the same time in each cell \cite{Daharietal2007}.
However, replication units are embedded in a replication complex,
including the vesicular membranous structure \cite{Eggeretal2002,
Gosertetal2003}, which requires a large amount of resources.
Limitations on these resources give rise to a maximum number of
replication units possible within a cell, $U_\textrm{max}$. Thus, we
assume here that the formation of replication units $U(t)$ is
rate-limited by $\beta R \left(1-\frac{U}{U_\textrm{max}}\right) $.
$R$ is lost by degradation with rate $\sigma$. In addition, we
assume that replication units $U$ are intrinsically unstable, and
that they are thus lost with a degradation rate constant $\gamma$.
Although there are no data in vivo for $\gamma$, there are good
indications in vitro that this rate is faster than the loss rate of
infected cells $(d)$ but slower than the viral clearance $(c)$
according to \cite{Daharietal2009}.
\\\indent
The link between the intracellular replication dynamics
(\ref{intramod}) and the cellular infection dynamics (\ref{cellmod})
is mediated by replacing the constant production/release rate, $p$,
in the cellular infection model with a time-dependent
production/release rate $p(t)= \rho R(t) $ \cite{guedj2010}, where
we assume that the packaged virus is exported on a rapid time scale.
Thus, in this paper we consider the full intracellular and cellular
infection (ICCI) model in presence of treatment, given by the
following system :
\begin{subequations}\label{a6}
\begin{align}
\frac{d T}{dt}&= s+r_{T}T\left(1-\frac{T+I}{T_\textrm{max}}\right) -d T - \frac{b TV}{T+I}; \label{fullICCImodel1} \\
  \frac{d I}{dt}&=r_{I} I\left(1-\frac{T+I}{T_\textrm{max}}\right) + \frac{b TV}{T+I}-\delta I ;  \label{fullICCImodel2} \\
  \frac{d V}{dt}&= \rho RI -cV -\frac{b TV}{T+I};  \label{fullICCImodel3} \\
 \frac{d U}{dt}&= \beta R\left(1-\frac{U}{U_\textrm{max}}\right)-\gamma U ;  \label{fullICCImodel4} \\
\frac{d R}{dt}&= \alpha(1-\epsilon) U -\sigma
R.\label{fullICCImodel5}
\end{align}
\end{subequations}
The system (\ref{a6}) is a modification of a system used in
\cite{guedj2010}. Note that the factor $1-\epsilon$ which represents
the effect of treatment occurs in a different place in (\ref{a6})
than in (\ref{cellmod}). This implements the fact that, as discussed
in \cite{guedj2010}, the primary effect of the DAA is to block the
synthesis of RNA.

The initial conditions associated to system (\ref{a6}) are given by
:
\begin{eqnarray}\label{a7}
&&T(0)=T_{0},\; I(0)=I_{0}, \; V(0)=V_{0} ,\;U(0)=U_{0}, \;
R(0)=R_{0}.
\end{eqnarray}
where the constants $T_0$, $I_0$, $V_0$, $U_0$ and $R_0$ are
positive. For biological significance of the parameters, four
assumptions are employed. (a) Due to the burden of supporting virus
replication, infected cells proliferate more slowly than uninfected
cells, i.e. $ r_{I} \leq r_{T}$. (b) To have a physiologically
realistic model, in an uninfected liver when $T_\textrm{max}$ is
reached, liver size should no longer increase, i.e. $s \leq d
T_\textrm{max} $. (c) Infected cells have a higher turnover rate
than uninfected cells, i.e. $ \delta \geq d$. (d) The rapid first
phase of viral decline is limited either by $c$ or by $\sigma$. If
$c < \sigma $, then the initial slope of decline is mainly due to
the clearance of virus ; if $\sigma < c $, then this is due to the
loss rate of genomic RNA and the export. If we assume that the first
phase of viral decline is due to the viral clearance
\cite{Daharietal2009, Ramratnametal1999}, then $\sigma
> c $. Hence the parameters are such that $d\le\delta < \gamma < c <
\sigma $.

\indent This paper is organized as follows. The positivity, global
existence and boundedness of solutions are obtained in
Section~\ref{sec2}. The equilibria of system (\ref{a6}) are studied
in Section~\ref{sec3} and the basic reproduction numbers of the
virus-free equilibria are given. The local asymptotic stability of
the virus-free steady states is established under appropriate
conditions. It is shown that when both reproductive numbers are
greater than one and $r_I>\delta$ there exists at least one infected
steady state. We identify conditions on the parameters under which
the infected steady state is unique and further conditions under
which it is locally asymptotically stable. Section~\ref{sec4} proves
a result on the global asymptotic stability of the virus-free steady
state $E_0$. In Section~\ref{sec5} a condition on the parameters is
identified under which every positive solution converges to a
positive steady state. In section~\ref{sec6}, numerical simulations
are carried out to illustrate the theoretical results obtained.
Finally, a brief discussion concludes the paper.
\section{Existence and global boundedness of solutions of the initial value problem (\ref{a6})-(\ref{a7})}\label{sec2}
\subsection{Existence, uniqueness and positivity of local and global solutions
of the initial value problem (\ref{a6}), (\ref{a7})} The main task
of this subsection is twofold. Firstly we are going to show that the
solution cannot approach the boundary of the domain of definition of
the system (\ref{a6}) arbitrarily closely and from this we deduce
the positivity. Secondly we show that the solution of the initial
value value problem (\ref{a6}), (\ref{a7}) is bounded on each finite
time interval. It is well known by the fundamental theory of
ordinary differential equations (ODE), that the system (\ref{a6})
has a unique local solution $(T(t),I(t),V(t),U(t),R(t))$ satisfying
the initial conditions (\ref{a7}) since the right hand side of the
system (\ref{a6}) is locally Lipschitz on the region where all the
variables are positive.
\begin{lemma}\label{lem11}
The infimum of $T(t) + I(t)$ is different from zero for any positive
solution on an interval $[0, t_{0})$, where $t_0=\infty$ is allowed.
\end{lemma}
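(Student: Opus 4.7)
The plan is to reduce the multi-dimensional problem to a scalar differential inequality for $W(t)=T(t)+I(t)$. First I would add equations (\ref{fullICCImodel1}) and (\ref{fullICCImodel2}); the decisive observation is that the nonlinear incidence term $bTV/(T+I)$ appears in the two equations with opposite signs and therefore cancels. This yields
\[
\frac{dW}{dt} = s + (r_{T}T+r_{I}I)\left(1-\frac{W}{T_{\max}}\right) - dT - \delta I,
\]
a right-hand side with no singular terms.

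Assuming the non-negativity of $T(t)$ and $I(t)$ (which follows from a standard tangent-to-the-boundary argument on the faces $T=0$ and $I=0$, using that $dT/dt|_{T=0}=s>0$ and $dI/dt|_{I=0}=bV\ge 0$), I would then exploit two elementary bounds. As long as $W\le T_{\max}$, the logistic factor $1-W/T_{\max}$ is non-negative and $r_{T},r_{I}\ge 0$, so the second summand is non-negative; meanwhile the linear loss obeys $-dT-\delta I \ge -M W$, where $M=\max(d,\delta)$. Together these give the scalar differential inequality
\[
\frac{dW}{dt} \geq s - M\,W \quad \text{whenever } W \leq T_{\max}.
\]

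Set $A = \min\!\bigl(W(0),\, s/(2M)\bigr)$. Hypothesis (b), $s\le dT_{\max}$, together with $d\le M$, yields $s/(2M) < T_{\max}$, so $A<T_{\max}$ and the differential inequality is in force at the level $W=A$. I would then argue by contradiction: if $W$ ever drops below $A$, let $t^{*}$ be the first such time; by continuity and $W(0)\ge A$ we have $W(t^{*})=A$, and the preceding inequality gives
\[
W'(t^{*})\;\ge\; s-MA\;\ge\; s/2\;>\;0.
\]
Hence $W$ is strictly increasing at $t^{*}$, so $W(t)>A$ just to the right of $t^{*}$, contradicting the definition of $t^{*}$. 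Thus $W(t)\ge A>0$ for all $t\in[0,t_{0})$, which is exactly the conclusion.

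The only delicate point is the handling of the logistic term, which could a priori be negative when $W>T_{\max}$; the resolution is that we only need the differential inequality on the sublevel set $\{W\le T_{\max}\}$, and this is precisely the set through which any hypothetical crossing of the threshold $A<T_{\max}$ must pass. Everything else is essentially a one-line comparison with the affine ODE $y'=s-My$.
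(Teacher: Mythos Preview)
Your proof is correct and follows essentially the same approach as the paper's: both add equations (\ref{fullICCImodel1}) and (\ref{fullICCImodel2}) to cancel the incidence term, derive the scalar inequality $W' \ge s - MW$ (with $M=\max(d,\delta)$) on the region $\{W\le T_{\max}\}$, and use it to produce a positive lower bound. The only cosmetic differences are that the paper integrates this inequality over a maximal subinterval where $W<T_{\max}$ rather than using your barrier argument at the level $A$, and that your parenthetical tangent-to-the-boundary justification of non-negativity is unnecessary---the lemma is stated for positive solutions, and in the paper's logic positivity is in fact deduced \emph{from} this lemma afterwards.
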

\begin{proof}
The evolution equation of $T+I$ is given by :
\begin{equation}\label{evolt+i}
    \frac{d}{dt}(T+I)=s+(r_{T}T+r_{I}I)
    \left(1-\frac{T+I}{T_\textrm{max}}\right)-dT-\delta I.
\end{equation}
If the infimum of $T(t) + I(t)$ were zero then there would have to
exist a time $t_{1} \in [0, t_{0})$ such that $T(t_{1}) +
I(t_{1})<T_\textrm{max}$. Let $t_{2}$ be the infimum of the times
$t$ for which $T(t) + I(t)<T_\textrm{max}$ on $(t, t_{1})$. We do
not know a priori if $t_{2}=0$ or $t_{2}> 0$. Let $a = \max\{d,
\delta\} $. Then from (\ref{evolt+i}) on $(t_{2}, t_{1})$ we have
\begin{equation*}
\frac{d}{dt}(T+I) \geq s-a (T+I).
\end{equation*}
Hence
\begin{equation*}
(T + I)(t_{1}) \geq (T +
I)(t_{2})e^{-a(t_{1}-t_{2})}+sa^{-1}(1-e^{-a(t_{1}-t_{2})}).
\end{equation*}
\begin{description}
    \item If $t_{2}=0$ then  $(T + I)(t_{2})=I(0)+T(0)>0$.
    \item If $t_{2}>0$ then $(T + I)(t_{2})=T_\textrm{max} >0$.
\end{description}
Thus in both cases we get a positive lower bound for $(T +
I)(t_{2})$. On the other hand $e^{-a(t_{1}-t_{2})} \geq e^{-at_{1}}
$. This contradicts the assumption that the infimum of $T+I$ was
zero and completes the proof.
\end{proof}
\begin{remark}
As a consequence the solution cannot approach those points of the
boundary of the domain of definition of system (\ref{a6}) where the
right hand side of the equations does not have a continuous
extension and therefore if a solution exists on an interval
$[0,t_*)$
 it satisfies a bound of the form $T+I\geq C>0$.
\end{remark}
We show in the following  proposition that solutions of the initial
value problem (\ref{a6})-(\ref{a7}) are positive which means that
the model is well-posed biologically.
\begin{proposition}
Let $(T(t), I(t), V(t), U(t), R(t))$ be a solution of the initial
value problem (\ref{a6})-(\ref{a7}) on an interval $[0, t_{1})$ with
$t_{1}< +\infty$. If $T_0
> 0$, $I_0 > 0$, $V_0 > 0$, $U_0 > 0$ and $R_0 > 0$ then
$$  \liminf\limits_{t \to t_{0}}\min\{ T(t),\, I(t),\,
 V(t),\, U(t),\, R(t)\} > 0 .$$
\end{proposition}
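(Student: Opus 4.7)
The plan is to establish strict positivity of each of the five components on $[0, t_1)$ and then upgrade this to the stated $\liminf$ bound. The essential input is Lemma~\ref{lem11}, which produces a constant $C > 0$ with $T(t) + I(t) \geq C$, so that the denominators $T+I$ appearing in (\ref{a6}) are uniformly bounded away from zero.

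Two of the five variables admit immediate exponential lower estimates. Since $T/(T+I) \leq 1$, equation (\ref{fullICCImodel3}) gives $\frac{dV}{dt} \geq -(c+b)V$, hence $V(t) \geq V_0 e^{-(c+b)t_1} > 0$ on $[0, t_1)$; from (\ref{fullICCImodel5}) one gets $\frac{dR}{dt} \geq -\sigma R$, so $R(t) \geq R_0 e^{-\sigma t_1} > 0$. Both bounds are uniform, so the $\liminf$ of $V$ and of $R$ is strictly positive.

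To rule out any of $T$, $I$, $U$ vanishing, I would use a first-exit-time argument. Set
\[
\tau^{\ast} := \sup \bigl\{ \tau \in [0, t_1] : T, I, V, U, R > 0 \text{ on } [0, \tau) \bigr\};
\]
positivity of the initial data and continuity give $\tau^{\ast} > 0$. Suppose for contradiction that $\tau^{\ast} < t_1$. Since $V$ and $R$ are positive at $\tau^{\ast}$, at least one of $T, I, U$ vanishes there, and in each case the right-hand side of (\ref{a6}) evaluated at $\tau^{\ast}$ is strictly positive: if $U(\tau^{\ast}) = 0$ then $dU/dt = \beta R(\tau^{\ast}) > 0$; if $T(\tau^{\ast}) = 0$ then Lemma~\ref{lem11} forces $I(\tau^{\ast}) \geq C$, the infection term $bTV/(T+I)$ vanishes, and $dT/dt = s > 0$; if $I(\tau^{\ast}) = 0$ then $T(\tau^{\ast}) \geq C$ and $dI/dt = bV(\tau^{\ast}) > 0$. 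Each case contradicts the fact that the left derivative of a positive function reaching zero must be non-positive, so $\tau^{\ast} = t_1$.

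Finally, to upgrade pointwise positivity to a uniformly positive $\liminf$ for $U$, $T$, $I$, I would discard the non-negative contributions in their equations and extract inequalities of the form $dX/dt \geq -K_X(t)\, X$ for $X \in \{U, T, I\}$, where $K_U = \gamma + \beta R/U_{\mathrm{max}}$, and $K_T, K_I$ are built from $d$ or $\delta$, the proliferation factor $(T+I)/T_{\mathrm{max}}$, and $bV/(T+I) \leq bV/C$. An integrating factor then yields an exponential lower bound on $[0, t_1)$ that is uniform as soon as $R$, $V$, and $T+I$ are known to be bounded above. The main obstacle I anticipate is verifying these a priori upper bounds; that is exactly the content of the boundedness step the paper will carry out next in this subsection, so the natural logical order is pointwise positivity first (as above), a priori upper bounds next, and the uniform $\liminf$ conclusion last.
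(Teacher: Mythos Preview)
Your argument is correct but proceeds differently from the paper. The paper handles all five components uniformly: writing each equation as $\dot X_i = -X_i f_i(X) + g_i(X)$ with $g_i \geq 0$ gives $\frac{d}{dt}\log X_i \geq -f_i(X)$ on $[0,t^*)$; if $t^* < t_1$, continuity of the solution on the compact interval $[0,t^*]$ together with Lemma~\ref{lem11} bounds each $f_i$ by some constant $M$, so $X_i(t) \geq X_i(0)e^{-Mt^*}$ and a contradiction follows. Your decomposition---explicit exponential lower bounds for $V$ and $R$, then a derivative-sign check at the first zero for $T$, $I$, $U$---works equally well; it delivers the bounds on $V$ and $R$ with explicit constants and no boundedness input at all, and your Step~3 is in fact the paper's logarithmic-derivative mechanism, simply deferred until positivity is already secured. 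You are also right that the full $\liminf$ conclusion at $t_1$ requires a priori upper bounds on $R$, $V$, and $T+I$: the paper's proof, strictly read, only establishes $t^* = t_1$ (pointwise positivity), since its constant $M$ depends on $t^*$ through compactness; the uniform lower bound becomes available once the finite-interval estimates in the proof of Theorem~\ref{theo11} are in hand, exactly as you note.
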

\begin{proof}
For convenience we introduce the notation $X_1 = T$, $ X_2 = I$,
$X_3 = V$, $X_4 = U$, $X_5 = R$. Let $t^*$ be the supremum of times
$t$ for which $X_i(t) > 0$ on $[0,t)$ for all $i \in \{1 , 2 , 3 , 4
, 5\}$. Each $X_{i}$ satisfies an ordinary differential equation of
the form
$$\dot{X}_{i}=-X_{i}f_{i}(X)+g_{i}(X),$$ where $g_{i}$
are some functions of $(T, I, V, U, R)$ and $g_{i} \geq 0 $ for all
$ i \in \{1, 2, 3, 4, 5\} $. As consequence $\dot{X}_{i} \geq
-X_{i}f_{i}(X) $ and $ \frac{d}{dt}(\log X_{i}) \geq -f_{i}(X)$ on
$[0,t^*)$. Suppose that $t^*<t_1$. Then according to
Lemma~\ref{lem11},  $(T+I)^{-1}$ is known to be bounded for this
solution and therefore $f_{i}(T(t), I(t), V(t), U(t), R(t))$ is
bounded by a constant $M$. Hence $\frac{d}{dt}(\log X_{i}) \geq -M$
and $X_{i}(t)\geq X_{i}(0)e^{-M t^{*}} > 0 $. It follows that the
infimum of $X_{i}$ is strictly positive, contradicting the
assumption that $t^*<t_1$. Hence $t=t^*$ and this completes the
proof of the proposition.
\end{proof}
It will now be shown that all solutions of (\ref{a6})-(\ref{a7})
with positive initial data exist globally in time in the future.
\begin{theorem}\label{theo11}
The initial value problem (\ref{a6})-(\ref{a7})admits a unique
global solution defined on $[0, +\infty[$.
\end{theorem}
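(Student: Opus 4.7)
The plan is to combine the local existence (from the locally Lipschitz right-hand side on $\{T+I>0\}$) with a continuation/extension argument: if the maximal interval of existence is $[0,T^+)$ with $T^+<\infty$, then either the solution leaves every compact subset of the domain (i.e., some component blows up), or it approaches the boundary where the vector field fails to be Lipschitz (i.e., $T+I\to 0$). Lemma~\ref{lem11} already excludes the second possibility, so the whole task reduces to establishing \emph{a priori} upper bounds on each of $T,I,V,U,R$ on $[0,T^+)$.

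I would derive the bounds in a cascade. First, from (\ref{evolt+i}) I would observe that whenever $T+I\ge T_\textrm{max}$ the logistic term is non-positive, so $\frac{d}{dt}(T+I)\le s-dT-\delta I$. Using assumption (c) $\delta\ge d$ and assumption (b) $s\le dT_\textrm{max}$ gives $dT+\delta I\ge d(T+I)\ge dT_\textrm{max}\ge s$, so $(T+I)'\le 0$ on the set $\{T+I\ge T_\textrm{max}\}$. Hence
\begin{equation*}
T(t)+I(t)\le \max\{T_0+I_0,\,T_\textrm{max}\}\qquad\text{for all }t\in[0,T^+),
\end{equation*}
which, combined with positivity, bounds $T$ and $I$ separately. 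Next, for (\ref{fullICCImodel4}), if $U\ge U_\textrm{max}$ then $\beta R(1-U/U_\textrm{max})\le 0$ and positivity of $R$ gives $U'\le -\gamma U<0$; consequently $U(t)\le U^{\star}:=\max\{U_0,U_\textrm{max}\}$. Feeding this into (\ref{fullICCImodel5}) yields $R'\le\alpha U^{\star}-\sigma R$, so by the scalar comparison principle $R(t)\le R^{\star}:=\max\{R_0,\alpha U^{\star}/\sigma\}$. Finally, (\ref{fullICCImodel3}) gives $V'\le\rho R^{\star}I^{\star}-cV$, whence $V(t)\le\max\{V_0,\rho R^{\star}I^{\star}/c\}$, with $I^{\star}$ the bound from the first step.

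Having these uniform bounds together with the positive lower bound on $T+I$ from Lemma~\ref{lem11}, the solution remains in a compact subset of the open domain $\{T+I>0,\ I,V,U,R>0\}$ on which the vector field is (globally) Lipschitz, so the standard continuation theorem forces $T^+=+\infty$; uniqueness is inherited from the local theory. The main obstacle I expect is the $T+I$ bound: one must juggle assumptions (b) and (c) carefully, since the naive estimate $(T+I)'\le s+(r_TT+r_II)$ alone only yields exponential growth via Grönwall, and the sharper dichotomy between $\{T+I<T_\textrm{max}\}$ and $\{T+I\ge T_\textrm{max}\}$ is what delivers a uniform, time-independent ceiling that then cascades through the other variables.
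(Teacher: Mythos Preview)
Your proof is correct, but it takes a different and more elaborate route than the paper. Both arguments invoke Lemma~\ref{lem11} for the positive lower bound on $T+I$ and then close with the standard continuation criterion; the difference lies in how the upper bounds are obtained. The paper simply uses the crude differential inequalities $\frac{d}{dt}(T+I)\le s+\lambda(T+I)$ with $\lambda=\max\{r_T,r_I\}$ and $\frac{d}{dt}(U+R)\le\kappa(U+R)$ with $\kappa=\max\{\alpha,\beta\}$, which by Gr\"onwall give \emph{exponential-in-time} bounds, and then $V'\le\rho RI-cV$ forces at most linear growth of $V$; these finite-interval bounds are all that is needed to exclude blow-up on a bounded maximal interval. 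In particular, the paper uses neither assumption~(b) nor assumption~(c). You instead exploit the logistic saturation together with (b) and (c) to obtain \emph{uniform, time-independent} ceilings, essentially anticipating the content of Theorem~\ref{theoglobound} inside the existence proof. That works, but your closing remark that the naive Gr\"onwall bound ``only yields exponential growth'' and therefore falls short is a misconception: for the continuation argument, exponential growth on finite intervals is entirely sufficient, and that is precisely what the paper does. Your approach buys sharper information at the cost of extra hypotheses; the paper's buys simplicity and independence from the standing assumptions.
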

\begin{proof}
Taking the sum of equations (\ref{fullICCImodel1}) and
(\ref{fullICCImodel2}) shows that $$\frac{d(T+I)}{dt} \leq
s+\lambda(T+I), $$ where $\lambda=\max\{r_{I},r_{T}\}$. It follows
from this differential inequality that $T$ and $I$ are bounded on
any finite interval. Morover, taking the sum of equations
(\ref{fullICCImodel3}) and (\ref{fullICCImodel4}) shows that
$$\frac{d(U+R)}{dt} \leq \kappa(U+R), $$ where
$\kappa=\max\{\alpha,\beta\}$, and hence $R$ and $U$ are bounded on
any finite interval. Equation (\ref{fullICCImodel4}) implies that
$$ \frac{d V}{dt} \leq
\rho R I -c V, $$ which shows that $V(t)$ cannot grow faster than
linearly and is also bounded on any finite interval. By the
arguments above the solution on a finite maximum interval of
existence is positive and admits a positive lower bound for $T+I$.
By the estimates just proved it is bounded. Hence it remains in a
compact subset of the domain of definition of the system. The
standard continuation criterion for ODE then implies global
existence and this completes the proof of the theorem.
\end{proof}
\subsection{Global boundedness of
the solutions  for the initial value problem
 (\ref{a6})-(\ref{a7})}

We are now going to prove that the solution is globally bounded.

\begin{theorem}\label{theoglobound}
For any positive solution $(T, I, V, U, R)$ of the initial value
problem (\ref{a6}), (\ref{a7}) and any $\zeta>0$, we have that for
$t$ sufficiently large:
\begin{equation*}
    T(t)+I(t) \leq (1+\zeta)p_0, \; V(t)\leq \frac{(1+\zeta)M}{c}, \;
U(t) \leq M_{1}
 \; \mbox{and}\;  R(t)   \leq  M_{2}.
\end{equation*}
with
\begin{eqnarray*}
   &&  p_{0} = \bigg(r_T - d + \bigg((r_T - d)^2 +
\frac{4sr_T}{T_\textrm{max}}\bigg)^{\frac{1}{2}}\bigg)\frac{T_\textrm{max}}{2r_T}
> 0, \; M_{1}= \max\{U(0), U_\textrm{max}\} ; \\
   && \;\; M_{2}=\frac{(1+\zeta)\alpha}{\sigma}(1-\epsilon)
      U_\textrm{max} \;\;\mbox{and}\;\; M=(1+\zeta)p_0M_2.
\end{eqnarray*}
\end{theorem}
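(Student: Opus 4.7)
The plan is to establish the four bounds in the order $T+I$, then $U$, then $R$, then $V$, reflecting the hierarchical coupling of the system: the cross-infection terms cancel when the $T$ and $I$ equations are summed, the equation for $U$ can be handled by a direct scalar argument, $R$ is controlled using the bound on $U$, and finally $V$ is controlled using the bounds on $R$ and $I$.

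The substantive step is the bound on $T+I$. Summing (\ref{fullICCImodel1}) and (\ref{fullICCImodel2}) cancels the $\pm bTV/(T+I)$ terms and gives
\[ \frac{d(T+I)}{dt} = s + (r_T T + r_I I)\left(1-\frac{T+I}{T_\textrm{max}}\right) - dT - \delta I. \]
Using assumptions (a) $r_I\le r_T$ and (c) $\delta\ge d$ from the introduction, and distinguishing according to the sign of $1 - (T+I)/T_\textrm{max}$, one obtains: in the region $T+I\le T_\textrm{max}$ the right-hand side is bounded above by $g(T+I)$, where $g(y):=s + (r_T-d)y - r_T y^2/T_\textrm{max}$; in the region $T+I>T_\textrm{max}$ the quadratic factor is negative and one gets instead the linear bound $d(T+I)/dt \le s - d(T+I)$. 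Assumption (b), $s\le d\,T_\textrm{max}$, guarantees on the one hand that the linear bound forces any solution with $T+I > T_\textrm{max}$ into the region $T+I\le T_\textrm{max}$ in finite time, and on the other hand (by a short squaring computation) that the unique positive root $p_0$ of $g$, which is exactly the expression in the theorem statement, satisfies $p_0\le T_\textrm{max}$. Since $g$ is concave with $g(0)=s>0$ and $g(p_0)=0$, the scalar ODE $\dot y = g(y)$ has $p_0$ as a globally asymptotically stable positive equilibrium, and a standard scalar differential-inequality comparison then yields $\limsup_{t\to\infty}(T(t)+I(t))\le p_0$; hence $T(t)+I(t)\le(1+\zeta)p_0$ for $t$ sufficiently large.

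The remaining three bounds reduce to scalar comparisons. For $U$: equation (\ref{fullICCImodel4}) gives $dU/dt < 0$ whenever $U\ge U_\textrm{max}$, because the factor $1 - U/U_\textrm{max}$ is non-positive while $-\gamma U$ is strictly negative; hence $U$ cannot cross $U_\textrm{max}$ upward and decreases monotonically whenever it lies above, yielding $U(t)\le M_1$ for all $t$ and $\limsup U(t)\le U_\textrm{max}$. Equation (\ref{fullICCImodel5}) is linear in $R$ with forcing $\alpha(1-\epsilon)U(t)$; combining the asymptotic bound on $U$ with a small auxiliary parameter, linear comparison gives $\limsup R(t)\le \alpha(1-\epsilon)U_\textrm{max}/\sigma$, hence $R(t)\le M_2$ for $t$ large. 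Finally, (\ref{fullICCImodel3}) yields $dV/dt \le \rho R I - cV$ (discarding the non-positive absorption term); plugging in the asymptotic bounds $R\le M_2$ and $I\le T+I\le (1+\zeta)p_0$ gives $dV/dt \le \rho(1+\zeta)p_0 M_2 - cV$ for $t$ large, from which a final scalar linear comparison delivers the stated bound on $V$. The main obstacle is Step 1: the logistic comparison is valid only once the solution lies in $\{T+I\le T_\textrm{max}\}$, so the argument genuinely needs assumption (b) both to ensure entry into this region in finite time and to identify $p_0$ (rather than some larger quantity) as the relevant stable equilibrium.
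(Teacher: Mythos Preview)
Your proof is correct and follows essentially the same comparison-argument approach as the paper: bound $U$ by a scalar argument on (\ref{fullICCImodel4}), bound $R$ by linear comparison in (\ref{fullICCImodel5}), bound $T+I$ by a logistic differential inequality obtained from summing (\ref{fullICCImodel1})--(\ref{fullICCImodel2}), and finally bound $V$ by linear comparison in (\ref{fullICCImodel3}). The only notable difference is that the paper asserts the logistic inequality $\frac{d(T+I)}{dt}\le s + r_T(T+I)(1-(T+I)/T_\textrm{max}) - d(T+I)$ directly, without distinguishing cases, whereas you correctly observe that replacing $r_I$ by $r_T$ in the proliferation term is an upper bound only when $T+I\le T_\textrm{max}$; your case-split and explicit invocation of assumption (b) to force entry into that region (and to ensure $p_0\le T_\textrm{max}$) is a refinement that patches this minor gap in the paper's presentation.
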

\begin{proof}
We first claim that for any solution there is a time $ t_{1}\geq 0$
such that $U(t_{1}) \leq U_\textrm{max}$. Either $U(0) \leq
U_\textrm{max} $ in which case we can take $t_{1} =0 $ or $U(0) >
U_\textrm{max}$. In the latter case let
$$ t_{2}= \sup\{t \geq 0\, :\, U(t)> U_\textrm{max}\} .$$
On interval $(0, t_{2})$ we have $ U(t) \leq U(0)e^{-\gamma t} $. It
follows that $t_{2} < +\infty $ and we can take $t_{1}=t_{2}$. At
any time where $U=U_\textrm{max}$ the derivative of $U$ is negative.
Thus $U$ becomes less than $U_\textrm{max}$ for $t$ slightly greater
than $t_{1}$ and it can never again reach the value
$U_\textrm{max}$. Hence $U(t) < U_\textrm{max} $ for all $t > t_{1}
$. In particular $U$ is globally bounded by $$ \max\{U(0),
U_\textrm{max}\}  .$$
\\\indent
We can now go with this information to the equation for $R$. From
equation (\ref{fullICCImodel5}) we obtain the differential
inequality
\begin{equation}\label{a31}
    \frac{d R }{dt} \leq \alpha(1-\epsilon)U_\textrm{max}-\sigma R.
\end{equation}
If we compare this differential inequality with the corresponding
differential equation we can see that
\begin{equation*}
   \limsup_{t \to +\infty }
   R(t) \leq  \frac{\alpha}{\sigma}(1-\epsilon)U_\textrm{max} .
\end{equation*}
In particular this proves that $R$ is globally bounded.
\\\indent
Now adding the first two equations of system (\ref{a6}), yields
\begin{equation}\label{a38}
   \frac{d(T+I)}{dt} = s + r_{T}T\left(1-\frac{T+I}{T_\textrm{max}}\right)+r_{I}I\left(1-\frac{T+I}{T_\textrm{max}}\right)-dT-\delta
   I.
\end{equation}
Since $d \leq \delta$ and $r_I \leq r_T$, it follows that
\begin{equation}\label{a388}
  \frac{d(T+I)}{dt}\leq s + r_{T}(T+I)\left(1-\frac{T+I}{T_\textrm{max}}\right) - d(T+I).
\end{equation}
Setting $P = T+I $ the differential inequality (\ref{a388}) becomes
\begin{equation}\label{a39}
 \frac{dP}{dt}\leq s+(r_{T}-d)P - \frac{r_{T}}{T_\textrm{max}}P^{2}.
\end{equation}
The right hand side of (\ref{a39}) has a unique positive root given
by
\begin{equation*}
    p_0 = \left(r_T - d + \left((r_T - d)^2 +
\frac{4sr_T}{T_\textrm{max}}\right)^{\frac{1}{2}}\right)\frac{T_\textrm{max}}{2r_T}
> 0.
\end{equation*}
Comparing a solution of (\ref{a39}) with a solution of the
corresponding differential equation gives
\begin{equation}\label{a47}
\limsup_{t \to +\infty} P(t) \leq p_0.
\end{equation}
This proves that $T$ and $I$ are globally bounded.
\\\indent
Now consider the third equation of system (\ref{a6}). We have
\begin{eqnarray*}
  \frac{dV}{dt} &=& \rho IR-cV-\frac{bTV}{T+I}, \\
   &\leq& \rho RI-cV, \\
   &\leq& M-cV,
\end{eqnarray*}
where $M = (1+\zeta)p_0M_2$. Thus for all $t> 0$ :
\begin{equation}\label{a48}
    \frac{dV(t)}{dt} + cV(t) \leq M.
\end{equation}
Solving (\ref{a48}) yields
\begin{equation}\label{a50}
    V(t) \leq V(0)\exp(-ct) + \frac{M}{c}.
\end{equation}
We deduce that,
\begin{equation*}
\limsup_{t\rightarrow +\infty} V(t) \leq \frac{M}{c}.
\end{equation*}
This proves that $V$ is globally bounded and completes the proof of
Theorem~\ref{theoglobound}.
\end{proof}
\begin{remark}
From the above results we have that any solution of the initial
value problem (\ref{a6}), (\ref{a7}) enters the region :
\begin{eqnarray*}
&&\Omega =\left\{(T, I, V, U, R)\in \mathbb{R}^{5}_{+} :
0<T(t)+I(t)\leq
2p_{0},\; \right.\\
&&\left.  0<V(t)\leq\frac{(1+\zeta)M}{c}, \; 0< U(t)\leq M_{1},\;0<
R(t)\leq M_{2} \right\},
\end{eqnarray*}
and remains there.
\end{remark}
\section{Stability Analysis of the full ICCI model }\label{sec3}
\subsection{Equilibria and the basic reproduction numbers}
Consider the equilibria of the ICCI model. One important case is
that where $V=0$, i.e. no virus is present. In a steady state with
$V=0$ it follows from (\ref{fullICCImodel3}) that $R=0$ or $I=0$.
Consider first the possibility that $I\ne 0$. Then
(\ref{fullICCImodel4}) implies that
$\left(1-\frac{T+I}{T_\textrm{max}}\right)=\frac{\delta}{r_I}$.
Together with (\ref{fullICCImodel1}) this implies that
$s=dT\left(1-\frac{\delta r_T}{dr_I}\right)$. Under the given
assumptions on the parameters this is a contradiction. Hence in fact
$I=0$. It then follows from equation (\ref{fullICCImodel1}) that $T$
is equal to the quantity $p_0$ introduced in the statement of
Theorem \ref{theoglobound}. The quantities $R$ and $T$ are only
constrained by the equations (\ref{fullICCImodel4}) and
(\ref{fullICCImodel5}). In one solution $U=R=0$ and the only other
possible solution has the explicit form
\begin{equation*}
U^{*}=U_1^*=U_\textrm{max}\bigg(1 -
\dfrac{1}{\mathcal{R}_{0}^{'}}\bigg)\  ,\
R^*=R_1^*=U_\textrm{max}\dfrac{\gamma}{\beta}\bigg(
\mathcal{R}_{0}^{'} - 1\bigg),
\end{equation*}
where $\mathcal{R}_{0}^{'} = \dfrac{\alpha\beta(1 -
\epsilon)}{\gamma\sigma}$. A positive solution of this type exists
precisely when ${\mathcal{R}_{0}^{'}}>1$. The two virus-free
equilibria are $E_0 = (p_0,0,0,0,0)$ and
$E_0'=(p_0,0,0,U_1^*,R_1^*)$.

At this point it is appropriate to comment on the notation
$\mathcal{R}_{0}^{'}$ just introduced. It is an example of a concept
often used in models for infection called the basic reproduction
number. Intuitively the basic reproduction number $\mathcal{R}_{0}$
is defined as the average number of secondary infections that occur
when one infective is introduced into a completely susceptible host
population \cite{Diek, Dietz, vander}. Note that $\mathcal{R}_{0}$
is also called the basic reproduction ratio \cite{Diek} or basic
reproductive rate \cite{12}. It is implicitly assumed that the
infected outsider is in the host population for the entire
infectious period and mixes with the host population in exactly the
same way that a population native would mix. A rigorous mathematical
definition of the basic reproductive number and a method for
calculating it are given in \cite{vander}. In fact this quantity is
not a feature of a system of ODE as a whole but of a boundary
equilibrium of such a system. Since we have just shown that in
general the model
 (\ref{a6}) has
two boundary equilibria it also has two basic reproduction numbers
associated to it. That associated to $E_0$ is
${\mathcal{R}_{0}^{'}}$. It is referred to in \cite{guedj2010} as
the intracellular basic reproductive number. It defines the critical
threshold of antiviral effectiveness for intracellular virus
stability. The other, that associated to $E_0'$, is what is referred
to in \cite{guedj2010} as the composite basic reproductive number.
It is given by
\begin{equation*}
\mathcal{R}_{0}''= \frac{b\rho \overline{R}}{(b + c)\big(\delta -
r_{I} \big(1-\frac{p_0}{T_\textrm{max}}\big)\big)}.
\end{equation*}

It defines the critical threshold of antiviral effectiveness for
extra-cellular virus stability, where
$\overline{R}=U_\textrm{max}\left(\frac{\alpha}{\sigma}-\frac{\gamma}{\beta}
\right)$ is the pre-treatment steady-state value for $R$.

\subsection{The existence of infected equilibria}
In this subsection, we investigate the existence of infected
equilibria of the ODE system (\ref{a6}). Thus, let $E^* = (T^*, I^*,
V^*, U^*, R^*)$ be an equilibrium point with infection, where
 $T^*>0$ ,\;  $I^*>0$, \; $V^*>0$ ,\; $U^*>0$,
 \; $R^*>0$. Note that a non-negative steady state automatically satisfies
 $T^*>0$ and can only satisfy $I^*=0$ or $R^*=0$ if $V^*=0$. In the latter case it is one
of the virus-free steady states considered above. Thus any
non-negative steady state other than the virus-free steady states is
positive. It satisfies the following two algebraic systems:
\begin{equation}\label{p3} \left\{
    \begin{array}{ll}
       s + r_{T}T^* \Bigg(1 - \dfrac{T^* + I^*}{T_\textrm{max}}\Bigg) - dT^* - \dfrac{bT^*V^*}{T^*+I^*} = 0, & \\
       r_{I}I^*\Bigg(1 - \dfrac{T^* + I^*}{T_\textrm{max}}\Bigg) + \dfrac{bT^*V^*}{T^*+I^*  } - \delta I^* = 0, &  \\
       \rho R^*I^*  - cV^* - \dfrac{bT^*V^*}{T^* + I^*} = 0,  & \\
    \end{array}
  \right.
  \end{equation}
and
\begin{equation}\label{p4}
\left\{
    \begin{array}{ll}
       \beta R^* \Bigg(1 - \dfrac{U^*}{U_\textrm{max}}\Bigg) - \gamma U^* = 0, \\
       \alpha (1 - \epsilon)U^* - \sigma R^* = 0.
    \end{array}
  \right.
  \end{equation}
Note that (\ref{p4}) is decoupled from (\ref{p3}). Its unique
positive solution, which only exists when $\mathcal{R}_{0}^{'} > 1$,
was given in the last section. It remains to solve (\ref{p3}) after
substituting in the value of $R^*$ given by that positive solution.

Now let $\mu = \rho R^* $ and $X = \dfrac{T^*}{T^* + I^*}$. Thus,
since $T^*$ and $I^*$ are positive, it follows that $0 < X < 1$. The
system (\ref{p3}) can be rewritten in the form
\begin{equation}\label{p5} \left\{
    \begin{array}{ll}
       s + r_{T}T^* \Bigg(1 - \dfrac{T^*}{T_\textrm{max}X}\Bigg) - dT^* - bV^*X = 0, & \\
       r_{I}I^*\Bigg(1 - \dfrac{T^*}{T_\textrm{max}X}\Bigg) + bV^*X - \delta I^* = 0, &  \\
       \mu I^*  - cV^* - bV^*X = 0.  & \\
    \end{array}
  \right.
  \end{equation}
If a positive steady state $(T^*,I^*,V^*)$ is given a corresponding
value of $X$ can be calculated. Conversely, under an additional
condition introduced below, $T^*$, $I^*$ and $V^*$ can be expressed
in terms of $X$, as will now be shown. Solving the last equation of
(\ref{p5}) with respect to  $V^*$ gives
\begin{equation}\label{p6}
V^* = \dfrac{\mu}{c + bX}I^*.
\end{equation}
Substituting (\ref{p6}) into the second equation of (\ref{p5})
yields
\begin{equation*}
 r_{I}I^*\left(1 - \dfrac{T^*}{T_\textrm{max}X}\right) + b\dfrac{\mu}{c + bX}I^*X - \delta I^* =
 0;
\end{equation*}
since $I^* \neq 0 $, the previous equation implies that
\begin{equation*}
 r_{I}\left(1 - \dfrac{T^*}{T_\textrm{max}X}\right) + \dfrac{b\mu X}{c + bX} - \delta =
 0.
\end{equation*}
Thus,
\begin{equation}\label{p81}
T^* = \frac{T_\textrm{max}X\bigg[b\mu X + (c + bX)(r_I - \delta)
\bigg]}{r_I(c + bX)}.
\end{equation}
A sufficient condition for the positivity of the right hand side of
(\ref{p81}) is that $r_I-\delta\ge 0$ and this assumption will be
made from now on. We are not aware whether the right hand side of
(\ref{p81}) is always positive in the absence of this assumption.
Having calculated $T^*$ in terms of $X$ we can calculate $I^*$ using
the relation $I^*=\frac{T^*(1-X)}{X}$ and $V^*$ using (\ref{p6}).
Under the assumption $0<X<1$ these quantities $(T^*,I^*,V^*)$ are
positive. When do quantities defined in this way in terms of
$X\in(0,1)$ define a steady state of (\ref{p3})? The equation
defining $I^*$ shows that the equation originally used to define $X$
holds. It follows from (\ref{p6}) that the third equation of
(\ref{p5}) holds and this implies the third equation of (\ref{p3}).
The defining equation for $T^*$ together with (\ref{p6}) implies
that the second equation of (\ref{p3}) holds. Substituting the
expression for $(T^*,I^*,V^*)$ into the first equation of (\ref{p5})
and multiplying by $r_I^2(c+bX)^2$ gives
\begin{eqnarray}
&&sr_I^2(c+bX)^2
+ T_\textrm{max}X\bigg[b(\mu+r_I - \delta)X+ c(r_I - \delta) \bigg]\nonumber\\
&&\times\Bigg[b(\delta r_{T}-dr_I-\mu (r_{T}-r_I))X+c(\delta
r_{T}-dr_I) -b\mu r_I\Bigg]=0
\end{eqnarray}
We see that under the assumption $r_I-\delta\ge 0$ positive steady
states are in one to one correspondence with roots of a cubic
polynomial $p(X)$ in the interval $(0,1)$, where
$$p(X)=a_3X^3+a_2X^2+a_1X+a_0$$
and
\begin{eqnarray}
&&a_0=c^2sr_I^2,\nonumber\\
&&a_1=2bcsr_I^2+cT_\textrm{max}(r_I-\delta)
(c(\delta r_{T}-dr_I)-b\mu r_I),\nonumber\\
&&a_2=b^2sr_I^2+bT_\textrm{max}[(\mu+r_I - \delta)(c(\delta
r_{T}-dr_I)-b\mu r_I)
\nonumber\\
&& \;\;\;\;\; +c(r_I - \delta)(\delta r_{T}-dr_I-\mu (r_{T}-r_I))],\nonumber\\
&&a_3=b^2T_\textrm{max}(\mu+r_I - \delta) (\delta r_{T}-dr_I-\mu
(r_{T}-r_I)). \nonumber
\end{eqnarray}
The roots of the polynomial $p$ depend continuously on the
parameters. If the parameters vary in a compact set then the roots
cannot approach $X=0$ since $a_0$ is bounded away from zero. On the
other hand the roots might approach $X=1$. Consider a sequence in
parameter space which converges to a positive limit and a sequence
of roots $X_n\in (0,1)$ of $p$ corresponding to these parameter
values with $\lim\limits_{n\to\infty} X_n=1$. Let
$(T^*_n,I^*_n,V^*_n)$ be the corresponding sequence of positive
steady states. $T^*_n$ converges to a positive limit. It follows
that $I^*_n\to 0$ and $V^*_n\to 0$. Thus this sequence of steady
states converges to a steady state on the boundary. We know the
steady states on the boundary explicitly. Since in this limit two
steady states approach each other the steady state in the limiting
case must be degenerate. It will be shown in the next subsection
that this can only happen when one of the basic reproduction numbers
is one. Consider now a convergent sequence of parameters for which
both reproduction numbers remain strictly greater than one. Then the
corresponding sequence of roots of $p$ remains in a compact subset
of $(0,1)$. It follows that the number of roots of the polynomial,
counting multiplicity, is independent of the parameters in this
region modulo two.

Consider next what happens if $s$ tends to zero while the other
parameters are held fixed. The polynomial $p$ converges to the
product of $X$ with a quadratic polynomial $q$ and the the values of
the roots can be read off. When $s=0$ we have
$p_0=\frac{(r_T-d)T_\textrm{max}}{r_T}$. Under the assumption that
$r_I>\delta$ one of the roots of $q$ is negative. The second factor
in the denominator of the expression for $\mathcal{R}_0''$ can be
bounded below by $\delta\left(1-\frac{r_Id}{r_T\delta}\right)$,
which is positive. Thus $\mathcal{R}_0''$ is positive. If we make
$\alpha$ large while fixing all other parameters then
$\mathcal{R}_0'$ and $\mathcal{R}_0''$ can be made as large as
desired, in particular greater than one. In this situation $\mu$
also becomes arbitrarily large and this implies that the other root
of $q$ is also negative. $q$ tends to $-\infty$ when $|X|$ is large
and $q(0)<0$. Hence $p'(0)<0$. It can be concluded that under these
circumstances for $s$ small and positive, where $p'(0)$ remains
negative but $p(0)>0$, the polynomial $p$ has precisely one root in
$(0,1)$ and there exists precisely one infected steady state. Since
we have now shown that there are points in this region of parameter
space where this number is one it follows that it is always odd. In
particular there always exists at least one positive steady state
under these assumptions. There are always one, two or three positive
steady states but we will not answer the question of whether there
can be more than one in this paper.
\\
Let us study the local stability of the uninfected equilibrium
$E_0$.
\subsection{Local stability of HCV uninfected equilibria}
\begin{proposition}\label{propplocstau}
If $\mathcal{R}_0^{'} < 1$, then the uninfected equilibrium  $E_0$
of the ODE model  (\ref{a6}) is locally asymptotically stable.
\end{proposition}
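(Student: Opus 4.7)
The plan is to apply the standard linearization principle: compute the Jacobian $J(E_0)$ of the right-hand side of (\ref{a6}) at $E_0=(p_0,0,0,0,0)$ and show that its entire spectrum lies in the open left half-plane. Because $T=p_0>0$ at $E_0$, the standard-incidence term $bTV/(T+I)$ is smooth on a neighbourhood of $E_0$, so $J(E_0)$ is well defined and the Hartman--Grobman theorem applies.

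First I would compute $J(E_0)$ entry by entry. Since $V$, $I$ and $R$ all vanish at $E_0$, many cross-partials drop out; in particular the coupling of $(T,I,V)$ to $(U,R)$ disappears entirely and the block corresponding to $(T,I,V)$ turns out to be upper triangular. Ordering the variables as $(T,I,V,U,R)$, the result is the block lower-triangular matrix
\begin{equation*}
J(E_0)=\begin{pmatrix} A & -r_T p_0/T_\textrm{max} & -b & 0 & 0 \\ 0 & B & b & 0 & 0 \\ 0 & 0 & -(b+c) & 0 & 0 \\ 0 & 0 & 0 & -\gamma & \beta \\ 0 & 0 & 0 & \alpha(1-\epsilon) & -\sigma \end{pmatrix},
\end{equation*}
with $A=r_T(1-2p_0/T_\textrm{max})-d$ and $B=r_I(1-p_0/T_\textrm{max})-\delta$. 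Its spectrum therefore consists of $A$, $B$, $-(b+c)$ together with the two eigenvalues of the $(U,R)$-block.

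Next I would check the sign of each eigenvalue. The value $-(b+c)$ is manifestly negative. For $A$: note that $p_0$ is the positive root of the upward-opening quadratic $q(X)=r_TX^2/T_\textrm{max}+(d-r_T)X-s$ with $q(0)=-s<0$, hence $q'(p_0)>0$, and a direct calculation gives $A=-q'(p_0)$, so $A<0$. For $B$: dividing the equilibrium equation for $T$ by $p_0$ yields $r_T(1-p_0/T_\textrm{max})=d-s/p_0$, so $B=(r_I/r_T)(d-s/p_0)-\delta$; combining $r_I\leq r_T$, $s>0$ and $\delta\geq d$ one obtains $(r_I/r_T)(d-s/p_0)<d\leq\delta$ both when $d-s/p_0\geq 0$ (since the factor $r_I/r_T\leq 1$ can only decrease the positive quantity $d-s/p_0<d$) and when $d-s/p_0<0$ (since the left-hand side is then non-positive), giving $B<0$. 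Finally, the $(U,R)$-block has trace $-(\gamma+\sigma)<0$ and determinant $\gamma\sigma-\alpha\beta(1-\epsilon)=\gamma\sigma(1-\mathcal{R}_0^{'})$, which is positive precisely when $\mathcal{R}_0^{'}<1$; by Routh--Hurwitz its two eigenvalues have negative real part.

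Assembling these observations, every eigenvalue of $J(E_0)$ has strictly negative real part, and the linearization theorem yields the local asymptotic stability of $E_0$. The main delicate point is the sign of the determinant of the $(U,R)$-block, which is exactly where the intracellular reproduction number $\mathcal{R}_0^{'}$ and the threshold $\mathcal{R}_0^{'}<1$ enter; the remaining signs follow cleanly from the explicit characterisation of $p_0$ and the parameter inequalities (a)--(d) recorded after (\ref{a7}).
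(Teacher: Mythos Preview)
Your proposal is correct and follows essentially the same approach as the paper: compute the Jacobian $J(E_0)$, observe that its spectrum decouples into the three explicit eigenvalues from the upper-triangular $(T,I,V)$ block and the two eigenvalues of the $(U,R)$ block, and apply the Routh--Hurwitz criterion to the latter to obtain the threshold $\mathcal{R}_0'<1$. The only difference is cosmetic: the paper writes the $(1,1)$ entry directly as $-\sqrt{(r_T-d)^2+4sr_T/T_\textrm{max}}$ and simply asserts $a_{22}<0$, whereas you give short arguments for the signs of $A$ and $B$; your justifications are fine and in fact more explicit than the paper's.
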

\begin{proof}
The Jacobian matrix at $E_0$ of the ODE model  (\ref{a6}) is given
by
\[  J(E_{0})=\left(
                       \begin{array}{ccccc}
                         -a_{11}& -a_{12} & -b & 0 & 0 \\
                         0     & a_{22} & b & 0 & 0 \\
                         0     & 0 & -c-b & 0 & 0 \\
                         0     & 0 & 0 & -\gamma & \beta \\
                         0     & 0 & 0 & \alpha(1-\epsilon) & -\sigma \\
                       \end{array}
                     \right),
\]
where \[a_{11}=\sqrt{(r_{T}-d)^{2}+\frac{4sr_{T}}{T_\textrm{max}}}>0
\,\,,\,\, a_{12}=\frac{p_{0}}{T_\textrm{max}}r_{T}>0\,\, , \,
a_{22}=-\Big[\delta+r_{I}\left(\frac{p_{0}}{T_\textrm{max}}-1\right)\Big]<0.
\]
The characteristic polynomial $P_J$ associated to $J(E_{0})$ is
given by
$$P_{J}(X)=(-a_{11}-X)( a_{22}-X)(-c-b-X)\Big( X^{2} +
(\gamma+\sigma)X +\gamma\sigma-\alpha (1-\epsilon)\beta \Big).$$

Since $-a_{11}<0$ , $a_{22}<0$ and $-(b+c)<0$, the real part of the
roots of  $P_{J}$ are negative if and only if the roots of the
quadratic polynomial defined by :
\begin{equation}
     T(X)= X^{2} + (\gamma+\sigma)X +\gamma\sigma-\alpha(1-\epsilon)
\beta
\end{equation}
have negative real part. Applying the Routh-Hurwitz criterion to the
previous quadratic equation, the roots of $T(X)$ have negative real
part if and only if
\begin{equation}
    0<\gamma\sigma-\alpha(1-\epsilon) \beta,
\end{equation}
which is equivalent to
\begin{equation*}
    1 > \frac{\alpha\beta(1-\epsilon)}{\gamma\sigma},
\end{equation*}
 i.e ,
$$\mathcal{R}_{0}^{'}<1.$$
which completes the proof of the proposition~\ref{propplocstau}.
\end{proof}

\begin{proposition}\label{proplocstab2}
If $\mathcal{R}_0^{'} > 1$ and $\mathcal{R}_0^{''} < 1$, then the
second uninfected equilibrium $E'_0$ of the ODE model  (\ref{a6}) is
locally asymptotically stable.
\end{proposition}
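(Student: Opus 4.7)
The plan is to linearise the system (\ref{a6}) at $E_0'=(p_0,0,0,U_1^*,R_1^*)$ and read off local asymptotic stability from the signs of the real parts of the eigenvalues of the Jacobian $J(E_0')$, exactly in the spirit of Proposition~\ref{propplocstau}. The crucial structural observation I would exploit is that $I=V=0$ at $E_0'$: the only terms in (\ref{a6}) coupling cellular variables to intracellular variables are $\rho RI$ and $\pm bTV/(T+I)$, whose derivatives with respect to $U$ or $R$ all carry a factor of $I$ or $V$. Consequently $J(E_0')$ is block diagonal, $J(E_0')=\mathrm{diag}(A,B)$, with a $3\times 3$ cellular block $A$ and a $2\times 2$ intracellular block $B$, and its spectrum splits as the union of those of $A$ and $B$.

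I would handle the intracellular block first. Using the explicit forms $R_1^*=(\gamma/\beta)U_\textrm{max}(\mathcal{R}_0'-1)$ and $1-U_1^*/U_\textrm{max}=1/\mathcal{R}_0'$, the block reduces to
\[
B=\begin{pmatrix}-\gamma\mathcal{R}_0' & \beta/\mathcal{R}_0' \\ \alpha(1-\epsilon) & -\sigma\end{pmatrix}.
\]
Its trace is $-\gamma\mathcal{R}_0'-\sigma<0$ and, after substituting $\alpha\beta(1-\epsilon)=\gamma\sigma\mathcal{R}_0'$, its determinant collapses to $\gamma\sigma(\mathcal{R}_0'-1)$, which is positive under the standing hypothesis $\mathcal{R}_0'>1$. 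Routh--Hurwitz then provides two eigenvalues with negative real part.

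For the cellular block $A$ I would observe that it agrees with the corresponding $3\times 3$ submatrix of $J(E_0)$ everywhere except in the $(3,2)$ entry, which now reads $\rho R_1^*$ instead of $0$. Expanding along the first column peels off an obvious eigenvalue $-a_{11}<0$ and leaves the quadratic
\[
\lambda^2+\bigl((b+c)-a_{22}\bigr)\lambda+\bigl(-a_{22}(b+c)-b\rho R_1^*\bigr).
\]
A preparatory step is to check that $a_{22}<0$: the defining equation of $p_0$ gives $r_T(1-p_0/T_\textrm{max})=d-s/p_0<d$, so the parameter assumptions $r_I\le r_T$ and $d\le\delta$ force $r_I(1-p_0/T_\textrm{max})<\delta$, i.e.\ $a_{22}<0$. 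The linear coefficient of the quadratic is then automatically positive, and the Routh--Hurwitz condition reduces to the single threshold inequality $-a_{22}(b+c)>b\rho R_1^*$, which upon rearrangement is precisely $\mathcal{R}_0''<1$.

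The main bookkeeping subtlety I anticipate is aligning this linearisation threshold with the formula for $\mathcal{R}_0''$ written in the excerpt: that formula uses the pre-treatment value $\overline{R}=U_\textrm{max}(\alpha/\sigma-\gamma/\beta)$, whereas the Jacobian naturally delivers the post-treatment value $R_1^*=U_\textrm{max}(\alpha(1-\epsilon)/\sigma-\gamma/\beta)$. I would therefore track the factor $1-\epsilon$ carefully and, if need be, restate $\mathcal{R}_0''$ in terms of $R_1^*$ so that $-a_{22}(b+c)-b\rho R_1^*>0$ is literally the assumption $\mathcal{R}_0''<1$. Once this reconciliation is performed, all five eigenvalues of $J(E_0')$ have negative real part and the claimed local asymptotic stability of $E_0'$ follows.
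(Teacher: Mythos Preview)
Your approach is correct and essentially identical to the paper's: the same block-diagonal Jacobian, the same trace/determinant argument for the intracellular $2\times 2$ block, and the same factoring-off of $-a_{11}$ from the cellular block to reduce to a quadratic whose constant term $-a_{22}(b+c)-b\rho R_1^*$ must be shown positive. The paper resolves the $\overline{R}$ versus $R_1^*$ discrepancy in the simplest of the ways you anticipate: since $R_1^*\le\overline{R}$ (with equality only when $\epsilon=0$), the hypothesis $\mathcal{R}_0''<1$ gives $-a_{22}(b+c)=b\rho\overline{R}/\mathcal{R}_0''>b\rho\overline{R}\ge b\rho R_1^*$, so no restatement of $\mathcal{R}_0''$ is needed.
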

\begin{proof}
The Jacobian matrix associated to the ODE model (\ref{a6}) at
$E'_{0}$ is given by
\[  J(E'_{0})=\left(
                       \begin{array}{ccccc}
                         -a_{11}& -a_{12} & -b & 0 & 0 \\
                         0     & a_{22} & b & 0 & 0 \\
                         0     & \rho R^{*} & -c-b & 0 & 0 \\
                         0     & 0 & 0 & -\beta e_{44}-\gamma & \beta(1-e_{45}) \\
                         0     & 0 & 0 & \alpha(1-\epsilon) & -\sigma \\
                       \end{array}
                     \right),
\]
where
\[e_{44}=\frac{R^{*}}{U_\textrm{max}}>0
\,\,,\,\, e_{45}=\frac{U^{*}}{U_\textrm{max}}>0,
\] and $a_{11}$, $a_{12}$, $a_{22}$ are defined in the same way as
in the previous proof. The characteristic polynomial $P_J(X)$
associated to $J(E'_{0})$ is the product of a quadratic polynomial
generalizing the polynomial $T$ introduced above with a cubic
polynomial. The signs of the coefficients in the quadratic
polynomial remain the same and thus its roots have negative real
parts. The cubic contains a factor $X+a_{11}$ and so to show that
all eigenvalues of the linearization have negative real part it
suffices to control the roots of the remaining quadratic polynomial.
It is given by
\begin{equation*}
X^2+(-a_{22}+b+c)X-a_{22}(b+c)-b\rho R^*.
\end{equation*}
The coefficient of $X$ is negative and so it is enough to show that
the constant term is positive.
\begin{equation*}
  -a_{22}(b+c)-b\rho R^*=\rho\left(\frac{1}{\mathcal{R}_0''-1}\right)b\rho\bar R
  +bp(\bar R-R^*)>0
\end{equation*}
when $\mathcal{R}_0''<1$ since $\bar R>R^*$. This completes the
proof of the proposition~\ref{proplocstab2}.
\end{proof}
Now let us study the local stability of infected equilibria.
\subsection{Local stability of HCV infected equilibria}
\begin{proposition}\label{propplocstai}
  If $\mathcal{R}_0^{'} > 1$, $\mathcal{R}_0^{''} > 1$, $s$ is sufficiently
  small and $\alpha$ and $b$
  sufficiently large then the  unique equilibrium  with infection
$E^{*}$ of the ODE model (\ref{a6}) is locally asymptotically
stable.
\end{proposition}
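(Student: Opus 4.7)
I would analyse the spectrum of the linearisation $J(E^*)$ by exploiting its block structure and then applying the Routh--Hurwitz criterion in the asymptotic regime dictated by the hypotheses.

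Observe that equations (\ref{fullICCImodel4})--(\ref{fullICCImodel5}) for $U$ and $R$ involve neither $T$, $I$ nor $V$. Ordering the variables as $(T,I,V,U,R)$, the Jacobian at $E^*$ is therefore block upper triangular, with a $3\times 3$ block $A$ acting on $(T,I,V)$ and a $2\times 2$ block $C$ acting on $(U,R)$; the only nonzero coupling is the entry $\partial\dot V/\partial R=\rho I^*$ sitting in the upper-right block and it does not affect the spectrum. Hence $\sigma(J(E^*))=\sigma(A)\cup\sigma(C)$, and the eigenvalue count reduces to the two blocks independently. For $C$, the infected-equilibrium values $U^*=U_{\textrm{max}}(1-1/\mathcal{R}_0')$ and $R^*=(U_{\textrm{max}}\gamma/\beta)(\mathcal{R}_0'-1)$ give, after a short computation, $\mathrm{tr}(C)=-\gamma\mathcal{R}_0'-\sigma$ and $\det(C)=\gamma\sigma(\mathcal{R}_0'-1)$; both have the correct sign whenever $\mathcal{R}_0'>1$, so $\sigma(C)$ lies in the open left half-plane.

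It remains to control $A$. Writing its characteristic polynomial as $P_A(X)=X^3+c_2X^2+c_1X+c_0$, the Routh--Hurwitz conditions reduce to $c_0,c_1,c_2>0$ and $c_1c_2>c_0$. I would first use the equilibrium identities---$V^*=\rho R^*I^*/(c+bX^*)$, the expression (\ref{p81}) for $T^*$, and the relation $r_I(1-(T^*+I^*)/T_{\textrm{max}})+b\mu X^*/(c+bX^*)=\delta$ derived in the existence discussion---to rewrite every entry of $A$ in terms of the triple $(X^*,T^*,\mu)$ with $\mu=\rho R^*$. I would then pass to the limit $s\downarrow 0$, $\alpha,b\uparrow\infty$, using the asymptotic information already extracted in the existence argument: the unique root of the cubic $p$ satisfies $X^*\to 0$, while $T^*\to(r_T-d)T_{\textrm{max}}/r_T$ and $I^*,V^*$ remain bounded. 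Because certain entries of $A$ would diverge with $b$ or $\alpha$ in the original coordinates, I would first apply a rescaling (e.g. $\widetilde V=bV$, together with a comparable scaling of $R$ to absorb the growth of $\mu$) so that a well-defined limit matrix $A_\infty$ exists. Checking the Routh--Hurwitz inequalities at $A_\infty$ by direct sign inspection, then transferring the strict inequalities to a full neighbourhood of the limit by continuity of the eigenvalues with respect to the parameters, would complete the argument.

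\noindent\textbf{Main obstacle.} The critical step is the sign analysis of the Routh--Hurwitz expressions for $A$. Even after substituting the equilibrium identities, the coefficients $c_1$ and $c_1c_2-c_0$ contain contributions that scale differently in $s$, $\alpha$ and $b$, and selecting a rescaling of the dependent variables that keeps every Jacobian entry finite in the limit---without losing information about subdominant terms---is the conceptual heart of the proof. The fact that the hypotheses bundle ``$s$ small'' and ``$\alpha,b$ large'' together strongly suggests that several potentially destabilising contributions can only be simultaneously tamed in this combined regime, so one cannot simplify the argument by sending a single parameter to its limit; a careful joint asymptotic analysis is unavoidable.
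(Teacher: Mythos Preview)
Your decomposition into the $3\times 3$ block $A$ and the $2\times 2$ block $C$ is correct and coincides with the paper's, and your treatment of $C$ (trace and determinant) is fine.

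The gap is in your asymptotics for the infected equilibrium. You assert that as $s\downarrow 0$ one has $X^*\to 0$ while $T^*\to (r_T-d)T_{\textrm{max}}/r_T$ and $I^*,V^*$ stay bounded. These statements are mutually inconsistent: $X^*=T^*/(T^*+I^*)$, so $X^*\to 0$ with $T^*$ bounded away from zero would force $I^*\to\infty$. In fact formula (\ref{p81}) shows $T^*$ carries an explicit factor of $X^*$, so $X^*\to 0$ gives $T^*\to 0$. The quantity $(r_T-d)T_{\textrm{max}}/r_T$ that appears in the existence discussion is the value of $p_0$ at $s=0$, i.e. the uninfected cell count, not the infected $T^*$. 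In the true limit $s\to 0$ one has $T^*\to 0$, $I^*\to T_{\textrm{max}}(r_I-\delta)/r_I$ and $V^*\to \rho R^* I^*/c$. With the wrong limiting point your subsequent sign analysis of the Routh--Hurwitz quantities cannot be carried out correctly.

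Once the correct limit $T^*=0$ is used, the argument is also considerably simpler than you anticipate. Because $T^*=0$ kills every term containing a factor $T^*$, the coefficients $\lambda_0,\lambda_1,\lambda_2$ of the cubic reduce to short expressions in which the only $b$-dependence enters through a single term $bV^*/I^*$. One then lets $b\to\infty$ with the other parameters fixed: each $\lambda_i$ is eventually positive and grows linearly in $b$, while $\lambda_1\lambda_2-\lambda_0$ grows like $b^2$, so Routh--Hurwitz holds for $b$ large. Continuity in $s$ then transfers this to small positive $s$. No rescaling of variables and no joint limit is needed; the limits are taken sequentially ($s\to 0$ first, then $b$ large), and the role of $\alpha$ large is only to secure existence and uniqueness of $E^*$ via $\mathcal{R}_0',\mathcal{R}_0''>1$, not to control the Jacobian directly.
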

\begin{proof}
The Jacobian matrix at $E^*$ of the ODE model  (\ref{a6}) is given
by :
\[  J(E^*)=\left(
                       \begin{array}{ccccc}
                            -a_{1}& b_{1} & -\frac{bT^{*}}{T^{*}+I^{*}} & 0     & 0 \\
                            a_{2}& -b_{2} & \frac{bT^{*}}{T^{*}+I^{*}} & 0     & 0 \\
                            -\frac{bI^{*}V^{*}}{(T^{*}+I^{*})^{2}}& \rho R^{*}+\frac{bT^{*}V^{*}}{(T^{*}+I^{*})^{2}} & -c-\frac{bT^{*}}{T^{*}+I^{*}} & 0     & \rho I^{*} \\
                              0  & 0     & 0     & -\frac{\beta R^{*}}{U_\textrm{max}}-\gamma & \beta\Bigg(1-\frac{U}{U_\textrm{max}} \Bigg) \\
                              0  & 0     & 0     & \alpha(1-\epsilon) & -\sigma \\
                       \end{array}
                     \right),
\] where $$ \left\{
            \begin{array}{ll}
              a_{1}=d-r_{T}+\frac{2r_{T}T^{*}}{T_\textrm{max}}+\frac{r_{T}I^{*}}{T_\textrm{max}}
              +\frac{bI^{*}V^{*}}{(T^{*}+I^{*})^{2}}, & \hbox{} \\
              a_{2}=-\frac{r_{I}I^{*}}{T_\textrm{max}}+\frac{bI^{*}V^{*}}{(T^{*}+I^{*})^{2}}, & \hbox{} \\
              b_{1}=-\frac{r_{T}T^{*}}{T_\textrm{max}}+\frac{bT^{*}V^{*}}{(T^{*}+I^{*})^{2}}, & \hbox{} \\
              b_{2}=\delta-r_{I}+\frac{r_{I}T^{*}}{T_\textrm{max}}+ \frac{2r_{I}I^{*}}{T_\textrm{max}}+\frac{bT^{*}V^{*}}{(T^{*}+I^{*})^{2}}. &\hbox{}
            \end{array}
          \right.
$$
This matrix is block triangular and thus its characteristic
polynomial is the product of those of the top left $3\times 3$
matrix and the bottom right $2\times 2$ matrix. The latter is equal
to a polynomial we studied in the previous case and thus its roots
have negative real parts. It remains to analyse the other factor,
call it
\begin{equation}
P_1(X)=X^{3}+\lambda_{2}X^{2}+\lambda_{1}X+\lambda_{0}\label{polyinfone}
\end{equation}
where
 \begin{eqnarray*}
\lambda_{2}
&=&(\delta-r_{I})+(d-r_{I})+c+\frac{bT^{*}}{T^{*}+I^{*}}+\frac{bV^{*}I^{*}}{(T^{*}+I^{*})^{2}}
+\frac{(T^{*}+I^{*})(r_{T}+r_{I})+r_{T}T^{*}+r_{I}I^{*}}{T_\textrm{max}}
\\
&& +\frac{bT^{*}V^{*}}{(T^{*}+I^{*})^{2}},
\end{eqnarray*}
\begin{eqnarray*}
\lambda_{1}
&=&c\left(d-r_{T}+\frac{2r_{T}T^{*}}{T_\textrm{max}}+\frac{r_{T}I^{*}}{T_\textrm{max}}
+\frac{bI^{*}V^{*}}{(I^{*}+T^{*})^{2}}
 \right)+\left(\delta-r_{I}+\frac{r_{I}T^{*}}{T_\textrm{max}}+\frac{2 r_{I}I^{*}}{T_\textrm{max}}\right)
 \bigg(c+\frac{bT^{*}}{T^{*}+I^{*}}+d-r_{T}\\
    &&+\frac{2 r_{T}T^{*}}{T_\textrm{max}}
    +\frac{bV^{*}I^{*}}{(T^{*}+I^{*})^{2}}\bigg) +\frac{bT^{*}}{T^{*}+I^{*}}
   \left( d-r_{T}+\frac{2r_{T}T^{*}}{T_\textrm{max}}+\frac{r_{T}I^{*}}{T_\textrm{max}}\right)
   +\frac{r_{T}I^{*}}{T_\textrm{max}}\left(\delta-r_{I}+\frac{2r_{I}T^{*}}{T_\textrm{max}}\right)\\
   &&\frac{b r_{I}T^{*}I^{*}V^{*}}{T_\textrm{max}(T^{*}+I^{*})^{2}} +\frac{bT^{*}V^{*}}{(T^{*}+I^{*})^{2}}
  \left(\rho R^{*}+d-r_{T}+\frac{2r_{T}T^{*}}{T_\textrm{max}}+\frac{ r_{T}I^{*}}{T_\textrm{max}}\right)\\
   &&+\frac{b
   r_{T}T^{*}I^{*}V^{*}}{T_\textrm{max}(T^{*}+I^{*})^{2}}-\frac{\rho
   R^{*}bT^{*}}{T^{*}+I^{*}},
\end{eqnarray*}
\begin{eqnarray*}
\lambda_{0}&=&c\left(d-r_{T}+\frac{r_{T}(2T^{*}+I^{*})}{T_\textrm{max}}+\frac{bI^{*}V^{*}}{(T^{*}+I^{*})^{2}}
\right) \left( \delta-r_{I}+\frac{r_{I}(T^{*} + 2I^{*})}{T_\textrm{max}}\right)\\
 && +\frac{bT^{*}}{T^{*}+I^{*}}\left(d-r_{T}+\frac{r_{T}(2T^{*} + I^{*})}{T_\textrm{max}}
\right)\left( \delta-r_{I}+\frac{r_{I}(T^{*} + 2I^{*})}{T_\textrm{max}}\right)\\
&&+\frac{r_{I}I^{*}}{T_\textrm{max}}\left(c+\frac{bT^{*}}{T^{*}+I^{*}}\right)\left(\frac{bT^{*}V^{*}}
{(T^{*}+I^{*})^{2}}-\frac{r_{T}T^{*}}{T_\textrm{max}} \right)
-\frac{b^{2}r_{T}T^{*}I^{*}V^{*}}{T_\textrm{max}(T^{*}+I^{*})^{2}}\\
           &&+ \frac{bcT^{*}V^{*}}{(T^{*}+I^{*})^{2}}\left(d-r_{T}+\frac{2r_{T}I^{*}}{T_\textrm{max}}
            +\frac{r_{T}T^{*}}{T_\textrm{max}}\right)
            +\frac{r_{T}bcT^{*}I^{*}V^{*}}{T_\textrm{max}(T^{*}+I^{*})^{2}}
            + \frac{b^{3}I^{*}(T^{*}V^{*})^{2}}{(T^{*}+I^{*})^{5}}(T^{*}+I^{*}-1)\\
           && +\frac{\rho
           R^{*}T^{*}}{T^{*}+I^{*}}\left(\frac{I^{*}V^{*}}{(T^{*}
           +I^{*})^{2}}-1\right)- \frac{\rho r_{I}R^{*}T^{*}I^{*}}{T_\textrm{max}(T^{*}+I^{*})}.
\end{eqnarray*}
The expressions for the coefficients $\lambda_i$ are so complicated
that we have not succeeded in analyzing them in general. Instead we
concentrate on obtaining information in the limiting regime in which
the existence of a unique steady state was obtained, i.e. that where
$s$ is small. When $s$ tends to zero the steady state tends to the
point where the coordinates take the values $T^*=0$,
$I^*=\frac{(r_I-\delta)T_\textrm{max}}{cr_I}$ and $V^*=\frac{\rho
R^*(r_I-\delta)T_\textrm{max}}{cr_I}$. It will be shown that under
certain conditions the eigenvalues of the linearization about this
point with $s=0$ all have negative real parts. It then follows by
continuity that the same is true for the steady state with $s$
positive and sufficiently small. When $s=0$ the coefficients have
the following forms
\begin{eqnarray}
  &&\lambda_2=(\delta-r_I)+(d-r_T)+c+\frac{bV^*}{I^*}
     +\frac{(r_T+2r_I)I^*}{T_\textrm{max}}\\
  &&\nonumber \lambda_1=c\left(d-r_T+\frac{r_TI^*}{T_\textrm{max}}+\frac{bV^*}{I^*}\right)
     +\left(\delta-r_I+\frac{2r_II^*}{T_\textrm{max}}\right)\\
      && \hspace{1cm} \times\left(c+d-r_T+\frac{bV^*}{I^*}\right)\\
  &&\lambda_0=c\left(d-r_T+\frac{r_TI^*}{T_\textrm{max}}+\frac{bV^*}{I^*}\right)
     \left(\delta-r_I+\frac{2r_II^*}{T_\textrm{max}}\right)
\end{eqnarray}
Note that $\delta-r_I+\frac{2r_II^*}{T_\textrm{max}}=r_I-\delta>0$.
Now choose values of the parameters such that a unique steady state
exists. Then make $b$ large while fixing all the other parameters.
Then for $b$ large $\lambda_2$, $\lambda_1$ and $\lambda_0$ are all
positive for $b$ large and grow like a constant multiple of $b$. The
combination $\lambda_1\lambda_2-\lambda_0$ is positive for $b$ large
and grows like a constant multiple of $b^2$. Applying the
Routh-Hurwitz criterion this completes the proof of
proposition~\ref{propplocstai}.
\end{proof}
\section{Global stability analysis of the full ICCI model (\ref{a6})}
\label{sec4}
\begin{theorem}\label{globaatatu}
Under the conditions $M_{2}\leq \mathcal{R}_{0}^{'} \leq 1$ and
$\mathcal{R}_{0}^{''}\leq \tau_0 \leq 1$, where
    $$\tau_0 = \frac{b\rho \overline{R}}{(b + c)(\delta -
   r_{I})} $$ is a
positive constant, the uninfected equilibrium point $E_0$ of the
full ICCI ODE model (\ref{a6})is globally asymptotically stable in
the positively-invariant region $\Omega $.
\end{theorem}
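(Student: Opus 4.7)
My strategy is to construct a Lyapunov function $L$ on $\Omega$ whose derivative along solutions of (\ref{a6}) is non-positive under the theorem's hypotheses, then invoke LaSalle's invariance principle. The key structural features I would exploit are two algebraic cancellations in the vector field: the interaction term $bTV/(T+I)$ cancels between the $I$ and $V$ equations, and the linear combination $\sigma\dot U + \beta\dot R$ simplifies to $\sigma\gamma(\mathcal{R}_{0}^{'}-1)U - (\sigma\beta/U_{\max})RU$, whose sign is controlled by $\mathcal{R}_{0}^{'}-1$. This suggests trying
\begin{equation*}
L = \alpha_1 I + \alpha_2 V + \alpha_3 U + \alpha_4 R,
\end{equation*}
with weights $\alpha_1:\alpha_2 = (b+c):b$ (to preserve the first cancellation in a flexible form) and $\alpha_3 = \sigma$, $\alpha_4 = \beta$ (to realize the second), possibly adjusted by an overall balancing constant so that the two blocks share a common scale.

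The main computation is then to verify $\dot L \leq 0$ everywhere in $\Omega$. After grouping, $\dot L$ should decompose into a $V$-contribution proportional to $-VI/(T+I)$ (automatically non-positive), a $(U,R)$-contribution of the form $\sigma\gamma(\mathcal{R}_{0}^{'}-1)U - (\sigma\beta/U_{\max})RU$ (non-positive precisely because $\mathcal{R}_{0}^{'} \leq 1$), and an $I$-contribution of the schematic form $[(b+c)(r_I(1-(T+I)/T_{\max}) - \delta) + b\rho R]\,I$. Since $\delta > r_I$ (encoded in the positivity of $\tau_0$) and $1 - (T+I)/T_{\max} \leq 1$, the bracket is bounded above by $-(b+c)(\delta - r_I) + b\rho R$. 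Making this non-positive globally is the principal obstacle: the hypothesis $\tau_0 \leq 1$ reads precisely $b\rho \bar R \leq (b+c)(\delta - r_I)$, and combining it with the a priori bound $R(t) \leq M_2$ from Theorem \ref{theoglobound} yields the required pointwise estimate on $\Omega$. The auxiliary hypothesis $M_2 \leq \mathcal{R}_{0}^{'}$ is what bridges the quantitative gap between $\bar R$ (the threshold appearing in $\tau_0$) and $M_2$ (the a priori ceiling on $R$).

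Once $\dot L \leq 0$ is established on $\Omega$ and $L \geq 0$, I would apply LaSalle's invariance principle. On the set $\{\dot L = 0\} \cap \Omega$, the strict negativity of the coefficient of $I$ forces $I = 0$; equation (\ref{fullICCImodel3}) then forces $V = 0$; the vanishing of the $(U,R)$-block together with invariance under (\ref{fullICCImodel4})--(\ref{fullICCImodel5}) forces $U = R = 0$ (using $\mathcal{R}_{0}^{'} \leq 1$ once more to handle the borderline case via the $RU$ term); and (\ref{fullICCImodel1}) with $I = V = 0$ forces $T = p_0$. The largest invariant subset of $\{\dot L = 0\}$ is therefore $\{E_0\}$, yielding the global asymptotic stability of $E_0$ in $\Omega$. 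The delicate point throughout is the simultaneous compatibility of the two threshold conditions with the a priori bounds from Theorem \ref{theoglobound}; this is where all three hypotheses enter together and is the step I would expect to consume most of the work.
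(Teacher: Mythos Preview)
Your approach is essentially the same as the paper's: both use a linear Lyapunov function $L = \alpha_1 I + \alpha_2 V + \sigma U + \beta R$ followed by LaSalle's invariance principle, exploiting precisely the two cancellations and the a~priori bound $R\le M_2$ that you identify. The paper takes $(\alpha_1,\alpha_2)=(\rho\overline{R},\,\delta-r_I)$ rather than your $(b+c,b)$, which merely shifts the factor $(\tau_0-1)$ from the $I$-group to the $V$-group; otherwise the two arguments are parallel, and the paper's LaSalle step is in fact less detailed than yours.
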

\begin{proof}
Consider the Lyapunov function defined on $\mathbb{R}_{+}$ by
  \begin{equation*}
    L(t) = \rho \overline{R}I(t) + (\delta - r_{I})V(t) + \beta R(t) + \sigma
U(t).
\end{equation*}
$L$ is defined, continuously differentiable and positive definite
for all $T>0$, $I>0$, $V>0$, $U>0$, $R>0$. It is easy to see that
$L$ reaches its global minimum when the solution is at the
infection-free equilibrium $E_0$. Further, the function $L$, along
the solutions of system (\ref{a6}), satisfies :
\begin{eqnarray*}
\frac{dL}{dt} &=& \rho \overline{R}r_{I}I\Bigg(1 - \dfrac{T +
I}{T_\textrm{max}}\Bigg) + \rho \overline{R}\dfrac{bTV}{T+I} - \rho
\overline{R}\delta I + \rho RI(\delta - r_{I})
 - cV(\delta - r_{I})  \\
   &&- b(\delta - r_{I})\dfrac{TV}{T+I} +  \alpha\beta(1 - \epsilon)U - \sigma\beta R +
\sigma\beta R \bigg( 1 - \frac{U}{U_\textrm{max}}\bigg)  -
\sigma\gamma U,
\\
  &\le& \rho \overline{R}\dfrac{bTV}{T+I} - cV(\delta - r_{I}) - b(\delta - r_{I})\dfrac{TV}{T +
   I}  - \rho \overline{R}r_{I}I\dfrac{T + I}{T_\textrm{max}} + (R - \overline{R})\rho I(\delta -
   r_{I})\\
   &&+\ \alpha\beta(1 - \epsilon)U - \sigma\beta R +
\sigma\beta R \bigg( 1 - \frac{U}{U_\textrm{max}}\bigg)  -
\sigma\gamma U.
\end{eqnarray*}
Since $\frac{T}{T + I} \leq 1$, then $-V \leq -\frac{TV}{T + I}$.
Moreover, in the positively invariant set, $R(t)  \leq  M_{2} $.
Thus,
\begin{eqnarray*}
  \frac{dL}{dt} &\leq& \rho \overline{R}\dfrac{bTV}{T+I} - c(\delta - r_{I})\dfrac{TV}{T+I} - b(\delta - r_{I})\dfrac{TV}{T +
   I}  - \rho \overline{R}r_{I}I\dfrac{T + I}{T_\textrm{max}}   \\
   & & +\  ( M_{2} - \overline{R})\rho I(\delta -
   r_{I}) + \alpha\beta(1 - \epsilon)U - \sigma\beta R +
\sigma\beta R \bigg( 1 - \frac{U}{U_\textrm{max}}\bigg)  -
\sigma\gamma U.
\end{eqnarray*}
Since $\mathcal{R}_{0}^{'} \leq \overline{R}$, $-\overline{R} \leq
-\mathcal{R}_{0}^{'}$. As $1 - \frac{U}{U_\textrm{max}} \leq 1$ we
have
\begin{eqnarray*}
  \frac{dL}{dt} &\leq& \dfrac{TV}{T+I}(b\rho \overline{R} -
  (b + c)(\delta - r_{I})) + \rho I( M_{2}
   - \mathcal{R}_{0}^{'})(\delta - r_{I}),\\
   &&+\ \alpha\beta(1 - \epsilon)U - \sigma\beta R + \sigma\beta
    R - \sigma\gamma U, \\
                &\leq& \dfrac{TV}{T+I}(b\rho \overline{R} -
   (b + c)(\delta - r_{I}))  + \rho I(M_{2} - \mathcal{R}_{0}^{'})(\delta - r_{I}), \\
   &&+\ [\alpha\beta(1 - \epsilon) - \sigma\gamma]U,\\
   &\leq& (b + c)(\delta - r_{I})\dfrac{TV}{T+I}
   \bigg( \frac{b\rho \overline{R}}{(b + c)(\delta - r_{I})}
    - 1\bigg) + \bigg(\frac{\alpha\beta(1 - \epsilon)}
    {\sigma\gamma} - 1
       \bigg)\sigma\gamma U \\
       && +  \rho I( M_{2} - \mathcal{R}_{0}^{'})(\delta - r_{I}).
\end{eqnarray*}
It follows that
\begin{eqnarray*}
\frac{dL}{dt} &\leq& (b + c)(\delta - r_{I})\dfrac{TV}{T+I}\bigg( \frac{b\rho \overline{R}}{(b + c)(\delta - r_{I})} - 1\bigg) + (\mathcal{R}_0^{'} - 1) \sigma\gamma U\\
                && + \ \rho I( M_{2} - \mathcal{R}_{0}^{'})(\delta - r_{I}), \\
              &\leq& (b + c)(\delta - r_{I})\dfrac{TV}{T+I}(\tau_0 - 1)
              + (\mathcal{R}_0^{'} - 1) \sigma\gamma U  + \rho I(M_{2} - \mathcal{R}_{0}^{'})(\delta -
               r_{I}).
\end{eqnarray*}
It is clear that the condition $\tau_0 \leq 1$ and $M_{2}\leq
\mathcal{R}_{0}^{'} \leq 1$ give   $\frac{dL}{dt} \leq 0$ for all
$T>0$, $I>0$, $V>0$, $U>0$, $R > 0 $. Therefore, the largest compact
invariant subset of the set
$$  M= \left\{ (T, I, V, U, R) \in \Omega : \frac{dL}{dt} = 0\right\} $$
is the singleton $\{E_{0}\}$. By the Lasalle invariance
principle\cite{khalil}, the uninfected equilibrium point is globally
asymptotically stable if $M_{2} \leq \mathcal{R}_{0}^{'} \leq 1$ and
$\tau_0 \leq 1$. So, we obtain a sufficient condition
$\mathcal{R}_{0}^{''}\leq \tau_0$ which ensures that the
HCV-uninfected equilibrium $E_{0}$ of ODE-model system (\ref{a6}) is
globally asymptotically stable if $\tau_0 < 1$. This completes the
proof of  theorem~\ref{globaatatu}.
\end{proof}
\section{Global convergence to infected equilibria}\label{sec5}

Consider first the system
(\ref{fullICCImodel4})-(\ref{fullICCImodel5}) describing the
intracellular dynamics. Its linearization at any point of the
positive orthant has the property that both off-diagonal elements
are negative. Thus it is a competitive system and since all
solutions exist globally and are bounded it follows that any
positive solution converges to a steady state \cite{smith10}. Next
consider any positive solution of the full system
(\ref{fullICCImodel1})-(\ref{fullICCImodel5}) and any $\omega$-limit
point of that solution. There is a non-negative solution which
passes through that $\omega$-limit point. Since the projection of
the original solution onto its last two components converges to a
steady state, the limiting solution has the property that $U$ and
$R$ have the constant values $(0,0)$ or $(U^*,R^*)$. Its projection
onto the first three coordinates is a solution of the system
obtained from (\ref{fullICCImodel1})-(\ref{fullICCImodel3}) by
fixing $R$ to be equal to $0$ or $R^*$. In the case $R=0$ the
function $V$ is a Lyapunov function for the projected solution and
it must converge to zero. If we again pass to an $\omega$-limit
point and a solution passing through it we get a solution of the
system obtained by setting $V=0$ and projecting on the first two
coordinates. The resulting two-dimensional system is again
competitive and so the solution must converge to a steady state.
This system has no positive steady states and so the convergence
must be to a point of the boundary. On the boundary $I=0$ the only
steady state is given by $T=p_0$. The boundary $T=0$ consists
entirely of steady states. It is a centre manifold of any of its
points and the non-zero eigenvalue of the linearization at any of
these points is positive. Thus no solution can approach a point of
this type. It can be concluded that the original solution converges
to the point $E_0$.

It remains to analyse the case $R=R^*$. For this it suffices to
study the late time behaviour of the system
(\ref{fullICCImodel1})-(\ref{fullICCImodel3}) with $R=R^*$ and so we
will now concentrate on that system and apply the geometric approach
of \cite{liMuldowney1996}. In doing this we use the quantity
$$\tilde{T}=\frac{T_\textrm{max}}{2r_{I}}\Big[r_{I}-\delta+\sqrt{(r_{I}-\delta)^{2}
  +\frac{4sr_{I}}{T_\textrm{max}}}\Big].$$
It follows from Theorem \ref{theoglobound} that for any constant
$\zeta>0$ any solution satisfies $T+I\le (1+\zeta)p_0$ for $t$
sufficiently large. By a very similar argument to that used in the
proof of that theorem $T+I\ge(1-\zeta)\tilde T$ for $t$ large. To
prove the main result of this section we need the following lemma.
\begin{lemma}\label{lemperst}
If $\mathcal{R}''_{0} > 1 $ then the system
(\ref{fullICCImodel1})-(\ref{fullICCImodel3}) with $R=R^*$ is
uniformly persistent.
\end{lemma}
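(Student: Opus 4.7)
The plan is to set up the boundary dynamics of the three-dimensional system (\ref{fullICCImodel1})--(\ref{fullICCImodel3}) with $R=R^*$ and to apply a standard uniform persistence theorem of Butler--Freedman--Waltman type. Only the faces $\{I=0\}$ and $\{V=0\}$ matter, since $T+I$ is already bounded away from zero by Lemma~\ref{lem11}. From $\dot I|_{I=0}=bV$ and $\dot V|_{V=0}=\rho R^*I$, neither face is positively invariant, and the only positively invariant subset of $\{I=0\}\cup\{V=0\}$ is $\{I=V=0\}$. On this set the dynamics reduce to a scalar logistic equation for $T$ whose unique globally attracting equilibrium is $E_0=(p_0,0,0)$. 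Combined with the global bounds from Theorem~\ref{theoglobound}, this gives a dissipative semiflow whose boundary $\omega$-limit set is the singleton $\{E_0\}$, which is trivially acyclic and isolated.

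The heart of the argument is the weak-repeller property for $E_0$ under the hypothesis $\mathcal{R}_{0}^{''}>1$. For this I would use a linear test function $W=I+\alpha V$ with $\alpha>0$ to be chosen, for which a direct computation yields
\begin{equation*}
\dot W=\Bigl(r_I\bigl(1-\tfrac{T+I}{T_\textrm{max}}\bigr)-\delta+\alpha\rho R^*\Bigr)I
+\Bigl(\tfrac{bT}{T+I}(1-\alpha)-\alpha c\Bigr)V.
\end{equation*}
Writing $A_0=\delta-r_I(1-p_0/T_\textrm{max})$ and evaluating at $E_0$, the coefficients of $I$ and $V$ reduce to $\alpha\rho R^*-A_0$ and $b-\alpha(b+c)$ respectively. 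The hypothesis $\mathcal{R}_{0}^{''}>1$ is precisely $b\rho R^*>(b+c)A_0$, which is equivalent to non-emptiness of the open interval $\bigl(A_0/(\rho R^*),\,b/(b+c)\bigr)$; choosing $\alpha$ inside it makes both coefficients strictly positive at $E_0$, so by continuity there exist a neighbourhood $\mathcal{U}$ of $E_0$ and $\eta>0$ with $\dot W\ge\eta W$ throughout $\mathcal{U}$. An interior solution converging to $E_0$ would eventually remain in $\mathcal{U}$, forcing $W$ to grow exponentially and contradicting $W\to 0$. Hence no interior orbit has $E_0$ as a limit.

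Dissipativity, acyclicity of the boundary $\omega$-limit set, and the weak-repeller property are exactly the hypotheses of a standard uniform persistence theorem, which produces $\eta'>0$ such that $\liminf_{t\to\infty}I(t)\ge\eta'$ and $\liminf_{t\to\infty}V(t)\ge\eta'$ for every positive solution. Together with the lower bound on $T+I$ furnished by Lemma~\ref{lem11}, this is the required uniform persistence. The main technical point I foresee is the positivity of $A_0$: it is needed both for the paper's definition of $\mathcal{R}_{0}^{''}$ to be meaningful and for the interval defining $\alpha$ to be non-empty, and it reflects the same condition that makes the coefficient $a_{22}$ of the linearization at $E_0$ negative. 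A secondary concern is ensuring that the local estimate $\dot W\ge\eta W$ truly upgrades to the weak-repeller conclusion rather than merely to non-convergence from inside $\mathcal{U}$; the Butler--McGehee lemma embedded in the persistence theorem handles this, using the fact that $\omega(x_0)$ is compact and invariant.
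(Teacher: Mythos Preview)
Your argument follows the same skeleton as the paper's proof: invoke a uniform-persistence theorem (the paper cites Theorem~4.3 of Freedman--Ruan--Tang \cite{freedman1994}) after establishing that the maximal invariant set on the boundary is the singleton $\{E_0\}$, isolated and acyclic, and that $E_0$ repels interior orbits. The paper's proof is terse: it simply says that by the cited theorem uniform persistence is equivalent to instability of $E_0$, which holds when $\mathcal{R}_0''>1$. You instead verify the weak-repeller property directly via the linear test function $W=I+\alpha V$ and the differential inequality $\dot W\ge\eta W$ near $E_0$. This is more work but more self-contained; it amounts to finding a left eigenvector of the unstable $(I,V)$-block of the linearization, so it encodes the same instability information the paper appeals to.

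One point to be careful about: you write that $\mathcal{R}_0''>1$ is \emph{precisely} the inequality $b\rho R^*>(b+c)A_0$. In the paper, however, $\mathcal{R}_0''$ is defined using $\bar R=U_{\textrm{max}}(\alpha/\sigma-\gamma/\beta)$, the pre-treatment steady-state value, whereas the reduced system carries $R=R^*=U_{\textrm{max}}(\alpha(1-\epsilon)/\sigma-\gamma/\beta)$. Since $R^*\le\bar R$ with strict inequality when $\epsilon>0$, the paper's hypothesis $\mathcal{R}_0''>1$ does not by itself yield $b\rho R^*>(b+c)A_0$. The instability of $(p_0,0,0)$ in the reduced three-dimensional system genuinely hinges on the $R^*$-version of the inequality, so this discrepancy is really an ambiguity in the paper's own formulation (its proof relies on the same instability) rather than a defect specific to your argument; but you should flag it rather than silently identify the two thresholds.
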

\begin{proof}
This result follows from an application of Theorem 4.3 in
\cite{freedman1994} with $X=\mathbb{R}^{3}$ and $E=\Omega$. The
maximal invariant set $M$ on the boundary $\partial\Omega$ is the
singleton $\{E_{0}\}$, and it is isolated. From Theorem~4.3 in
\cite{freedman1994} we can see that the uniform persistence of the
system (\ref{cellmod}) is equivalent to the instability of the
disease-free equilibrium $E_{0}$ . On the other hand, we have proved
in Theorem~\ref{globaatatu} that $E_{0}$ is unstable if
$\mathcal{R}''_{0} > 1 $. Thus, the system
(\ref{fullICCImodel1})-(\ref{fullICCImodel3}) with $R=R^*$ is
uniformly persistent when $\mathcal{R}''_{0} > 1 $.
\end{proof}
\begin{theorem}
Suppose that the following inequality holds
\begin{equation*}
    \xi = \max\bigg\{r_T - d + \frac{(r_I + r_T)p_0}{T_\textrm{max}}
    - \frac{r_T\tilde{T}}{T_\textrm{max}} + 2b ; r_I - \delta +
    \frac{(r_I + r_T)p_0}{T_\textrm{max}} - \frac{r_I\tilde{T}}{T_\textrm{max}}
    + 2b
   \bigg\} < 0.
\end{equation*}
Then every positive solution of the system
(\ref{fullICCImodel1})-(\ref{fullICCImodel3}) with $R=R^*$ converges
to a steady state. If $\mathcal{R}''_0>1$ then every positive
solution converges to a positive steady state.
\end{theorem}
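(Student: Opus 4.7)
The plan is to invoke the geometric method of Li and Muldowney for the reduced three-dimensional system \eqref{fullICCImodel1}--\eqref{fullICCImodel3} with $R=R^*$, exactly as the authors announce. I would split the argument according to the sign of $R^*$. When $R^*=0$ the omega-limit analysis already carried out in the passage reduces the problem to a two-dimensional competitive subsystem with a unique boundary equilibrium $E_0$, so every positive solution converges to $E_0$ without further work. The substance of the proof therefore concerns $R=R^*>0$, and I would first invoke Lemma \ref{lemperst} to obtain a compact subset $K\subset\mathrm{int}\,\Omega$ that is absorbing and forward-invariant under $\mathcal{R}_0''>1$; together with the late-time bounds $(1-\zeta)\tilde T\le T+I\le (1+\zeta)p_0$ established just before the statement, this furnishes the compact absorbing set required by \cite{liMuldowney1996}.

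Next I would compute the Jacobian $J$ of the reduced system at an arbitrary interior point and form its second additive compound matrix $J^{[2]}$. Writing $J^{[2]}$ in the standard $3\times 3$ form and selecting a block-diagonal, solution-dependent matrix $P$ of the form $P=\mathrm{diag}(1,I/V,I/V)$ (the conventional Li--Muldowney choice for models with standard incidence), one arrives at
\begin{equation*}
B \;=\; P_f P^{-1} + P J^{[2]} P^{-1}
\;=\; \begin{pmatrix} B_{11} & B_{12} \\ B_{21} & B_{22} \end{pmatrix},
\end{equation*}
where $P_f$ denotes the derivative of $P$ along the flow. I would then estimate the Lozinskii measure $\mu(B)$ with respect to the $\ell_\infty$ operator norm by the usual Martin-type inequality $\mu(B)\le\max\{\mu_1(B_{11})+|B_{12}|,\mu_1(B_{22})+|B_{21}|\}$. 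Using the logarithmic derivatives $\dot I/I$ and $\dot V/V$ supplied by \eqref{fullICCImodel2}--\eqref{fullICCImodel3} to cancel the production and incidence terms (the standard trick that makes this method work), and invoking the persistence bounds $(1-\zeta)\tilde T\le T+I$ and $T+I\le (1+\zeta)p_0$ to replace the state variables by constants, the two diagonal contributions collapse precisely to the two expressions in the definition of $\xi$. Taking $\zeta$ small enough preserves $\xi<0$, so $\mu(B)\le \xi<0$ on $K$ for $t$ large.

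Given $\mu(B)\le\xi<0$ uniformly on the absorbing set, the Li--Muldowney criterion (Theorem 3.5 of \cite{liMuldowney1996}) yields $\limsup_{t\to\infty}\frac{1}{t}\int_0^t \mu(B)\,ds<0$, which rules out any non-constant periodic orbit as well as heteroclinic and homoclinic cycles. Combined with uniform persistence, this forces every interior $\omega$-limit set to be a single interior equilibrium. Because every positive steady state of the reduced system lifts to an interior equilibrium of the full system, and because Propositions \ref{propplocstau}--\ref{propplocstai} enumerate the equilibria, the only possible limit is the infected equilibrium $E^{*}$ (unique in the parameter regime considered), giving the second sentence of the theorem. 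For the first sentence one returns to the $\omega$-limit analysis preceding the statement: either $R\to 0$, in which case convergence to $E_0$ has already been established, or $R\to R^*$, in which case the Li--Muldowney argument above takes over.

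The main technical obstacle is the choice of the scaling matrix $P$ and the accompanying row-sum bookkeeping: the target value $\xi$ tells one exactly which cancellations are required, but making them appear in closed form demands a careful separation of the incidence term $bTV/(T+I)$ into its $b$-part and $V/I$-part, and a correct use of the bounds on $T+I$ to turn the resulting coefficients into the constants $p_0$ and $\tilde T$. A secondary subtlety is that $P$ is singular where $V=0$, so the Lozinskii estimate is only valid on the absorbing set $K$ produced by Lemma \ref{lemperst}, which is why the persistence hypothesis $\mathcal{R}_0''>1$ is indispensable for the second conclusion.
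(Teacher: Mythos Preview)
Your overall strategy matches the paper's, but there is a concrete gap in the Lozinskii estimate. You assert that after the cancellations ``the two diagonal contributions collapse precisely to the two expressions in the definition of $\xi$'', giving $\mu(B)\le\xi<0$ pointwise for large $t$. This is not what happens: with $P=\mathrm{diag}(1,I/V,I/V)$ and the mixed norm $\|(x,y,z)\|=\max\{|x|,|y|+|z|\}$ (not the $\ell_\infty$ norm you name), the logarithmic derivative $\dot V/V$ indeed cancels the production term $\rho R^* I/V$ in the second block, but a residual $\dot I/I$ survives in \emph{both} $g_1$ and $g_2$. The correct estimate is $\mu(B)\le \dot I/I+\xi$, and the passage to $\bar q_2<0$ genuinely requires the time-averaging step $\tfrac{1}{t}\int_0^t \dot I/I\,ds=\tfrac{1}{t}\log(I(t)/I(0))\to 0$, using only the boundedness of $I$. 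Treating the time-average as a formality following a pointwise bound misses the one place where the argument is not algebraic.

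There is also a structural mismatch. The theorem concerns the three-dimensional system with $R=R^*$ fixed, so your case split ``$R\to 0$ versus $R\to R^*$'' is not part of this proof; that dichotomy belongs to the preceding paragraph on the full five-dimensional flow. For the first conclusion the paper does \emph{not} invoke the global-stability version of Li--Muldowney (your ``Theorem~3.5''), which would require a unique interior equilibrium---something not established in general here. Instead the paper first shows by direct $\omega$-limit analysis that the only possible boundary limit point is $(p_0,0,0)$, and then uses the non-wandering-point form of the criterion (Lemma~2.1 and Theorem~3.1 of \cite{liMuldowney1996}): if a solution fails to converge to a steady state, its $\omega$-limit set contains interior non-equilibrium non-wandering points, which $\bar q_2<0$ forbids. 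This route accommodates the possibility of several positive steady states and yields the first sentence without assuming $\mathcal{R}_0''>1$ for the convergence statement itself; uniform persistence is used only in the second sentence to exclude the boundary equilibrium.
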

\begin{proof}
Any positive solution of
(\ref{fullICCImodel1})-(\ref{fullICCImodel3}) is bounded and so has
a non-empty $\omega$-limit set and that set is connected. Consider
an $\omega$-limit point on the boundary of the positive orthant and
a solution passing through that point at some time. It lies entirely
in the $\omega$-limit set of the original solution and so, in
particular, is non-negative. If $V$ were zero at that point and
$I\ne 0$ then the evolution equation for $V$ would imply that the
limiting solution was negative slightly before the initial time, a
contradiction. Hence for a point of this type $V=0$ implies $I=0$.
By a similar argument the evolution equation for $T$ implies that
$T>0$ at such a point. Finally the evolution equation for $I$
implies that if $I=0$ then $V=0$. Thus the only possible
$\omega$-limit point is $(p_0,0,0)$. The set of steady states is
finite and therefore discrete. Hence if a solution does not converge
to a steady state then by connectedness its $\omega$-limit set must
contain points which are not steady states and which are contained
in the positive orthant. These are non-equilibrium non-wandering
points (see \cite{liMuldowney1996} for the terminology). This
implies that there exists a periodic solution of a system which is a
small perturbation of the original one (\cite{liMuldowney1996},
Lemma 2.1). This is impossible if the quantity $\bar q_2$ in
\cite{liMuldowney1996} is negative (\cite{liMuldowney1996}, Theorem
3.1). Thus to prove the theorem it suffices to show that the
inequality assumed as a hypothesis implies that $\bar q_2<0$. This
will now be done, following closely an argument given in
\cite{liMuldowney1996}.

The Jacobian matrix $J$ associated with a general solution to
(\ref{fullICCImodel1})-(\ref{fullICCImodel3}) is
\begin{equation*}
    J = \left(
  \begin{array}{ccc}
    p & \frac{bTV}{(T + I)^2} - \frac{r_{T}T}{T_\textrm{max}} & -\frac{bT}{T+I} \\
    \frac{bIV}{(T+I)^2} - \frac{r_{I}I}{T_\textrm{max}} & q & \frac{bT}{T+I} \\
    -\frac{bIV}{(T + I)^2} & \rho R^* + \frac{bTV}{(T + I)^2} & -c-\frac{bT}{T+I} \\
  \end{array}
\right),
\end{equation*}
with $p = r_T-d -
\frac{2r_{T}T}{T_\textrm{max}}-\frac{r_{T}I}{T_\textrm{max}}-\frac{bIV}{(T
+ I)^2}$ and $q = r_I - \delta
-\frac{r_{I}T}{T_\textrm{max}}-\frac{2r_{I}I}{T_\textrm{max}}-\frac{bTV}{(T
+ I)^2}$. The second additive compound matrix $J^{[2]}$ is
\begin{equation*}
    J^{[2]} = \left(
  \begin{array}{ccc}
    p + q & \frac{bT}{T+I} & \frac{bT}{T+I} \\
    \rho R^* + \frac{bTV}{(T + I)^2} & p -c-\frac{bT}{T+I}  & \frac{bTV}{(T + I)^2} - \frac{r_{T}T}{T_\textrm{max}}  \\
    \frac{bIV}{(T + I)^2} &\frac{bIV}{(T + I)^2} - \frac{r_{I}I}{T_\textrm{max}}  & q-c-\frac{bT}{T+I} \\
  \end{array}
\right).
\end{equation*}
We consider the  matrix $P = \textrm{diag}\big\{1, \tfrac{I}{V},
 \tfrac{I}{V}\big\}$. It follows then that
$$P_fP^{-1} = \textrm{diag}\big\{0, \tfrac{\dot{I}}{I} - \tfrac{\dot{V}}{V} ,
  \tfrac{\dot{I}}{I} - \tfrac{\dot{V}}{V} \big\}$$
with $\dot{ }=\frac{d }{dt}$ and where the matrix $P_f =
\textrm{diag}\big\{0, \tfrac{d}{dt}(\tfrac{I}{V}),
 \tfrac{d}{dt}(\tfrac{I}{V})\big\}$
 is obtained by replacing each entry $p_{ij}$ of $P$ by its
 derivative in the direction of the solution of (\ref{cellmod}).
Furthermore, we have
  \begin{equation*}
\mathcal{B} = P_fP^{-1} + PJ^{[2]}P^{-1} = \left(
                \begin{array}{cc}
                  B_{11} & B_{12} \\
                  B_{21} & B_{22} \\
                \end{array}
              \right),
  \end{equation*}
where
\begin{eqnarray*}
 B_{11} &=& p+q , \\
 B_{12} &=&  \left(
                                          \begin{array}{cc}
                                            \frac{bTV}{I(T+I)} & \frac{bTV}{I(T+I)} \\
                                          \end{array}
                                        \right)  ,\\
B_{21} &=& \left(
              \begin{array}{c}
                 \frac{\rho R^*I}{V} + \frac{bTI}{(T+I)^2} \\
                \frac{bI^2}{(T+I)^2} \\
              \end{array}
            \right), \\
B_{22} &=&   \left(
                                           \begin{array}{cc}
                                             \frac{\dot{I}}{I} - \frac{\dot{V}}{V} - c +p-\frac{bT}{T+I} & \frac{bTV}{(T + I)^2} - \frac{r_{T}T}{T_\textrm{max}} \\
                                             \frac{bIV}{(T + I)^2} - \frac{r_{I}I}{T_\textrm{max}} &  \frac{\dot{I}}{I} - \frac{\dot{V}}{V} - c +q-\frac{bT}{T+I}  \\
                                           \end{array}
                                         \right).
\end{eqnarray*}
Define the norm in $\mathbb{R}^3 $ as $\|(x, y, z)\|=\max\{|x|, |y|
+ |z|\}$ for $ (x, y, z)\in \mathbb{R}^3$. Then the Lozinskii
measure $\mu$ with respect to the norm $\| \cdot \|_{1}$ can be
estimated as follows(see \cite{MatinJr1974}) : we have
\begin{equation}\label{lozi11}
\mu(\mathcal{B}) \leq \sup\{g_1, g_2\},
\end{equation}
where :
\begin{equation*}
    g_1 = \mu_1(B_{11}) + \|B_{12}\|_1  \;\;\mbox{and} \;\; g_2 =
    \|B_{21}\|_1 + \mu_1(B_{22}).
\end{equation*}
Here $\mu_1$ denotes the  Lozinskii measure with respect to the $\|
\cdot \|_{1}$ vector norm, and $\|B_{12}\|_1$ and  $\|B_{21}\|_1$
are matrix norms with respect to the $\| \cdot \|_{1}$ norm .
Moreover, we have
\begin{equation*}
\mu_1(B_{11}) = p + q \, , \, \,\|B_{12}\|_1 = \frac{bTV}{I(T +
I)}\, , \, \, \|B_{21}\|_1 = \frac{\rho R^*I}{V} + \frac{bT}{T+I}.
\end{equation*}
 To calculate $\mu_1(B_{22})$, add
the absolute value of the off-diagonal elements to the diagonal one
in each column of $B_{22}$, and then take the maximum of two sums.
Thus, for $t$ sufficiently large,
 \begin{eqnarray*}
   \mu_1(B_{22}) &=& \max\bigg\{\frac{\dot{I}}{I} - \frac{\dot{V}}{V} - c +p-\frac{bT}{T+I} + \bigg| \frac{bIV}{(T + I)^2} - \frac{r_{I}I}{T_\textrm{max}} \bigg| ; \frac{\dot{I}}{I} - \frac{\dot{V}}{V} - c +q-\frac{bT}{T+I}  \\
    &&+ \bigg|\frac{bTV}{(T + I)^2} - \frac{r_{T}T}{T_\textrm{max}}\bigg|  \bigg\}, \\
    &\leq & \frac{\dot{I}}{I} - \frac{\dot{V}}{V} - c + \max\bigg\{ r_T-d -
\frac{2r_{T}T}{T_\textrm{max}}-\frac{r_{T}I}{T_\textrm{max}}-\frac{bIV}{(T
+ I)^2} +
\frac{bIV}{(T + I)^2} + \frac{r_{I}I}{T_\textrm{max}} ;    \\
    && r_I - \delta
-\frac{r_{I}T}{T_\textrm{max}}-\frac{2r_{I}I}{T_\textrm{max}}-\frac{bTV}{(T
+
I)^2} + \frac{bTV}{(T + I)^2} + \frac{r_{T}T}{T_\textrm{max}} \bigg\}, \\
    &\leq& \frac{\dot{I}}{I} - \frac{\dot{V}}{V} - c + \max\bigg\{ r_T-d + \frac{(r_I + r_{T})p_0}{T_\textrm{max}} - \frac{r_{T}\tilde{T}}{T_\textrm{max}} ; r_I - \delta + \frac{(r_I + r_{T})p_0}{T_\textrm{max}} -
\frac{r_{I}\tilde{T}}{T_\textrm{max}} \bigg\},
 \end{eqnarray*}
In the last inequality it has been used that any solution satisfies
$T+I\le (1+\zeta)p_0$ and $T+I\ge (1-\zeta)\tilde T$ for $t$
sufficiently large. Note that the inequality in the statement of the
theorem implies that there exists $\zeta>0$ for which the analogous
inequality holds where $p_0$ is replaced by $\bar p_0=(1+\zeta)p_0$
and $\tilde T$ by $\bar T=(1-\zeta)\tilde T$. From the second and
third equations of \eqref{cellmod}, we have
\begin{equation*}
    \frac{\dot{I}}{I} = r_{I}\Bigg(1 - \dfrac{T + I}{T_\textrm{max}}\Bigg) + \dfrac{bTV}{I(T + I)} -
    \delta,
\end{equation*}
\begin{equation*}
    \frac{\dot{V}}{V} = \frac{\rho R^*I}{V}  - c - \dfrac{bT}{T + I}.
\end{equation*}
Hence,
\begin{eqnarray*}
  g_1 &=& p + q +  \frac{bTV}{I(T +
I)},  \\
   &=& \frac{\dot{I}}{I} + r_T-d - \frac{r_{T}T}{T_\textrm{max}}
   - \frac{r_{I}I}{T_\textrm{max}} - \frac{r_{T}(T + I)}{T_\textrm{max}}
    - \frac{bV}{T + I},\\
   &\leq& \frac{\dot{I}}{I} + r_T-d + \frac{(r_I + r_{T})p_0}{T_\textrm{max}} -
   \frac{r_{T}\tilde{T}}{T_\textrm{max}},
\end{eqnarray*}
\begin{eqnarray*}
  g_2 &\leq & \frac{\rho R^*I}{V} + \frac{bT}{T+I} + \frac{\dot{I}}{I} - \frac{\dot{V}}{V} - c + \max\bigg\{ r_T-d + \frac{(r_I + r_{T})p_0}{T_\textrm{max}} - \frac{r_{T}\tilde{T}}{T_\textrm{max}} ; \\
      &&  r_I - \delta + \frac{(r_I + r_{T})p_0}{T_\textrm{max}} - \frac{r_{I}\tilde{T}}{T_\textrm{max}}
      \bigg\} \\
   &\leq & \frac{\dot{I}}{I}+ \frac{2bT}{(T+I)} + \max\bigg\{ r_T-d + \frac{(r_I + r_{T})p_0}{T_\textrm{max}} - \frac{r_{T}\tilde{T}}{T_\textrm{max}} ; r_I - \delta + \frac{(r_I +
r_{T})p_0}{T_\textrm{max}} - \frac{r_{I}\tilde{T}}{T_\textrm{max}}
      \bigg\}, \\
   &\leq& \frac{\dot{I}}{I} + 2b + \max\bigg\{ r_T-d +
    \frac{(r_I + r_{T})p_0}{T_\textrm{max}} - \frac{r_{T}\tilde{T}}
    {T_\textrm{max}} ; r_I - \delta + \frac{(r_I + r_{T})p_0}{T_\textrm{max}} -
   \frac{r_{I}\tilde{T}}{T_\textrm{max}} \bigg\}, \\
   &\leq& \frac{\dot{I}}{I} + \max\bigg\{ r_T-d +
   \frac{(r_I + r_{T})p_0}{T_\textrm{max}} - \frac{r_{T}
   \tilde{T}}{T_\textrm{max}} + 2b ; r_I - \delta + \frac{(r_I + r_{T})p_0}{T_\textrm{max}} -
   \frac{r_{I}\tilde{T}}{T_\textrm{max}} + 2b \bigg\}.
\end{eqnarray*}
Therefore,
\begin{equation*}
    \mu(\mathcal{B}) \leq \frac{\dot{I}}{I} + \xi
\end{equation*}
is valid for $ t\geq t_{1}$, where $t_{1}$ is a sufficiently large
positive constant. Along each solution $(T(t), I(t), V(t))$ of model
(\ref{cellmod}) with $(T_{0},I_{0},V_{0}) \in K $, where $K$ is the
compact absorbing set and exists by Theorem~\ref{theoglobound} and
Lemma~\ref{lemperst}, we have
\begin{eqnarray*}
  \frac{1}{t}\int_{0}^{t}\mu(\mathcal{B})ds &\leq&
  \frac{1}{t}\int_{0}^{t_{1}}\mu(\mathcal{B})ds
  +\frac{1}{t} \left(\log\frac{I(t)}{I(t_{1})}+(t-t_{1})\xi
   \right), \\
   &\leq& \frac{1}{t}\int_{0}^{t_{1}}\mu(\mathcal{B})ds
  +\frac{1}{t} \log\frac{I(t)}{I(t_{1})}+\frac{(t-t_{1})}{t}\xi.
\end{eqnarray*}
The boundedness of $I$ implies that
\begin{equation*}
\limsup\limits_{t \to +\infty}\sup\limits_{y_{0}\in K}
\frac{1}{t}\int_{0}^{t}\mu(\mathcal{B})ds \leq \xi<0.
\end{equation*}
Thus,
$$\bar{q}_{2}=\limsup\limits_{t \to +\infty}\sup\limits_{y_{0}\in K}
\frac{1}{t}\int_{0}^{t}\mu(\mathcal{B})ds \leq \xi<0.
$$
This completes the proof that each solution converges to a steady
state. It remains to note that it has already been shown that when
$\mathcal{R}''_0>1$ no solution can converge to a steady state on
the boundary.
\end{proof}

\section{Numerical simulations}\label{sec6}
In this section, we present some numerical simulations to complement
the theoretical results obtained in the previous sections.
\\\indent
\begin{figure}[!h]
\centering
\begin{subfigure}[b]{0.32\textwidth}
\includegraphics[angle=0,height=3cm,width=\textwidth]{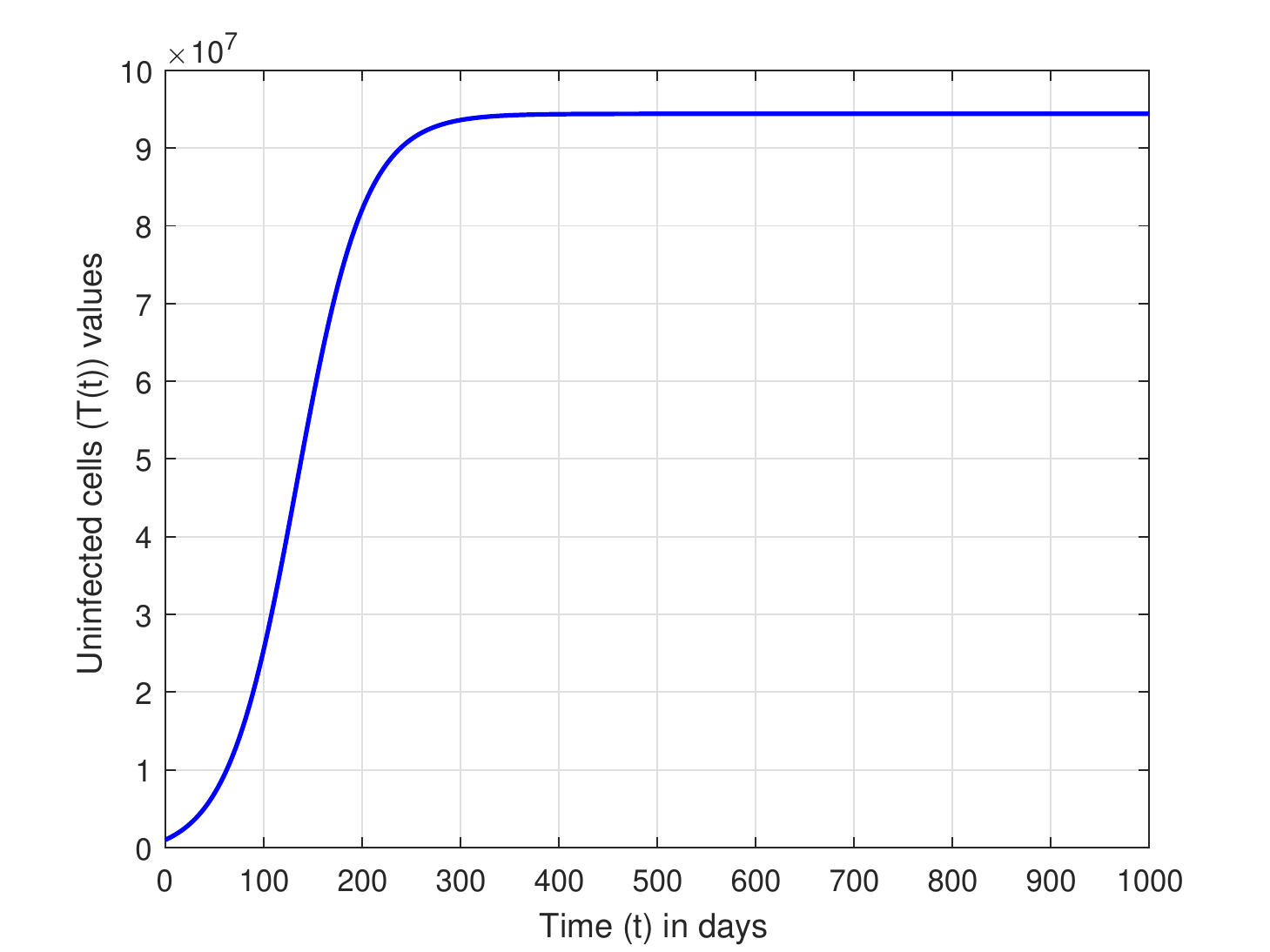}
\end{subfigure}
\begin{subfigure}[b]{0.32\textwidth}
\includegraphics[angle=0,height=3cm,width=\textwidth]{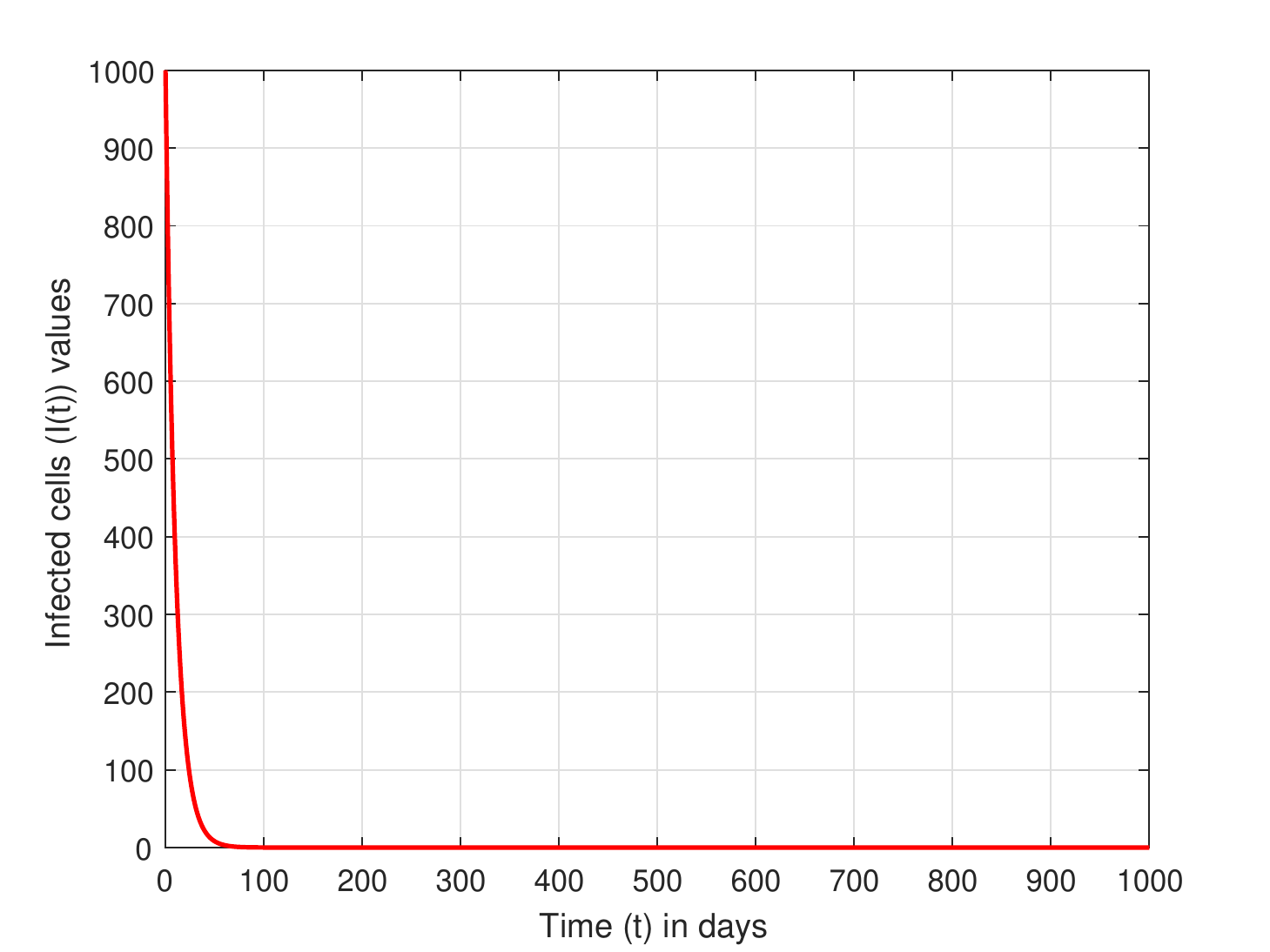}
\end{subfigure}
\begin{subfigure}[b]{0.32\textwidth}
\includegraphics[angle=0,height=3cm,width=\textwidth]{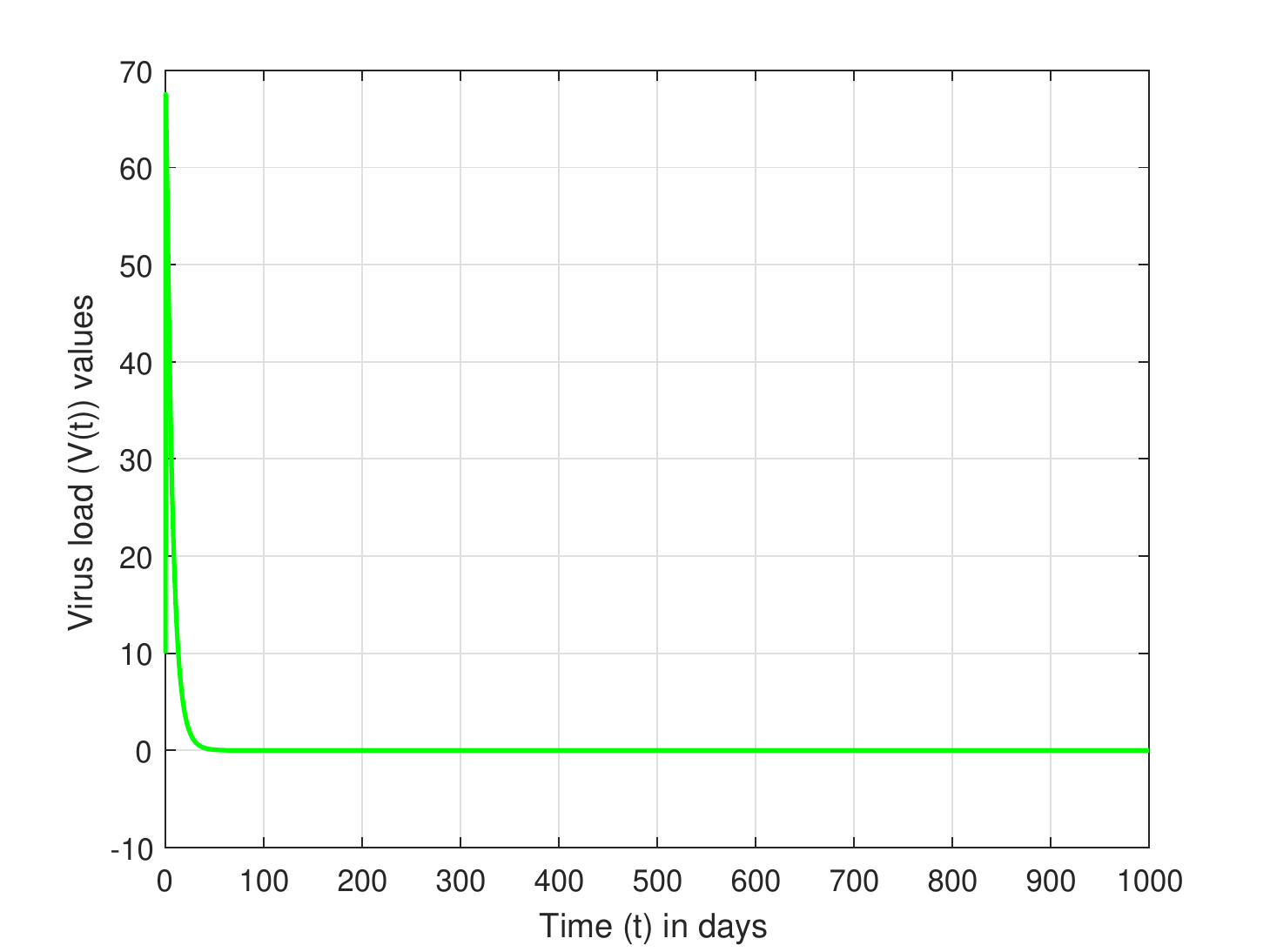}
\end{subfigure}
\begin{subfigure}[b]{0.32\textwidth}
\includegraphics[angle=0,height=3cm,width=\textwidth]{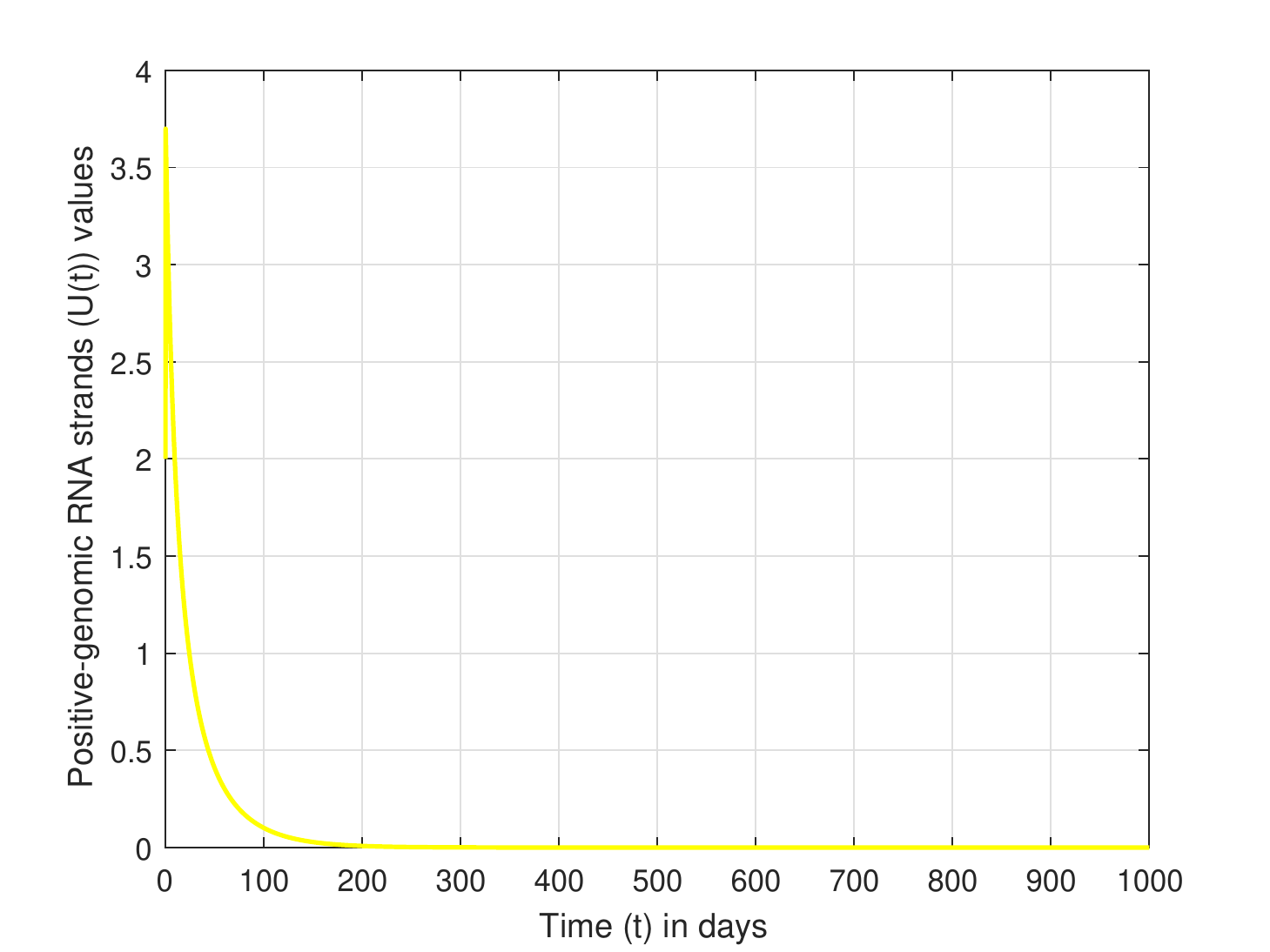}
\end{subfigure}
\begin{subfigure}[b]{0.32\textwidth}
\includegraphics[angle=0,height=3cm,width=\textwidth]{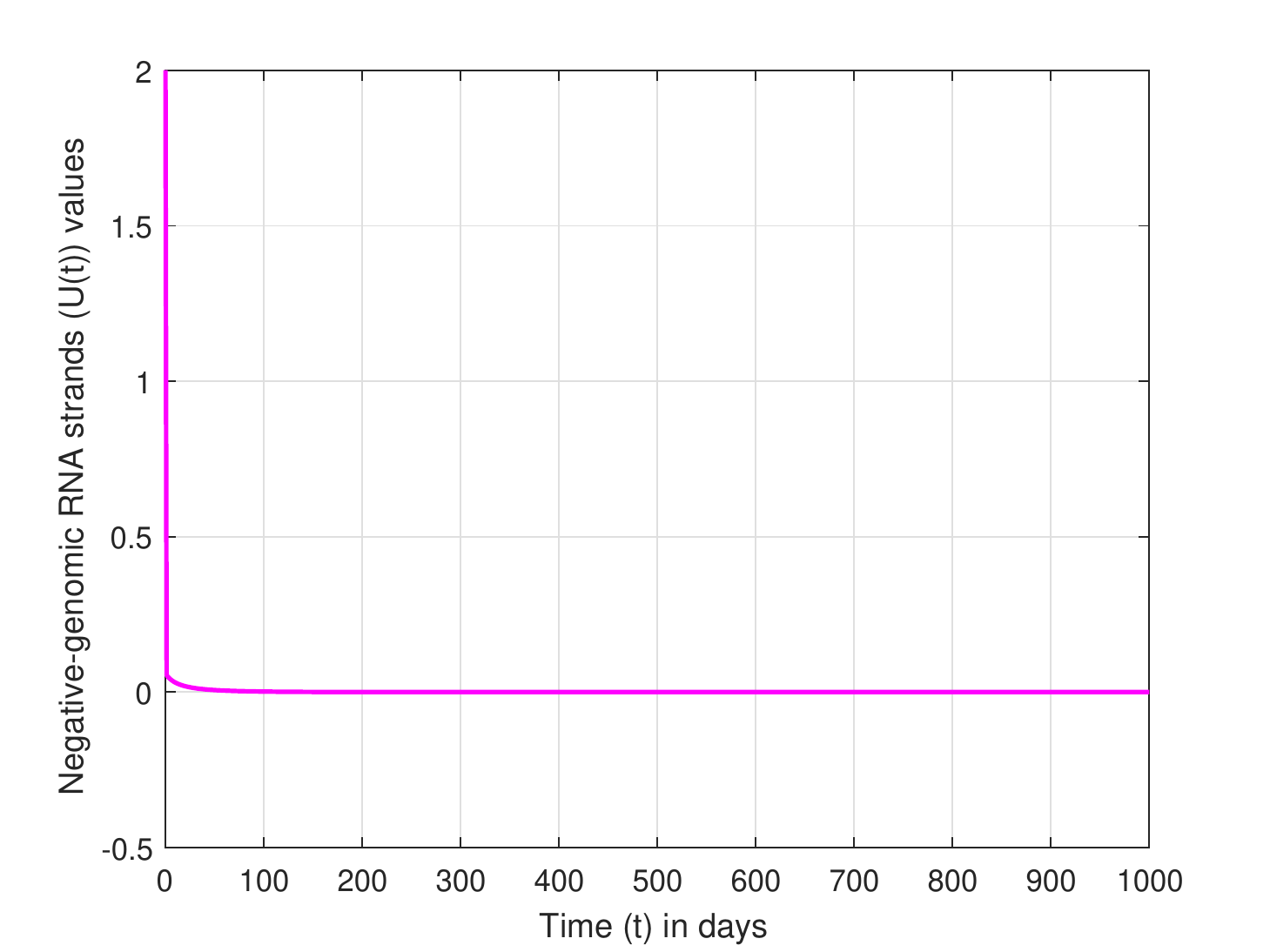}
\end{subfigure}
\begin{subfigure}[b]{0.32\textwidth}
\includegraphics[angle=0,height=3cm,width=\textwidth]{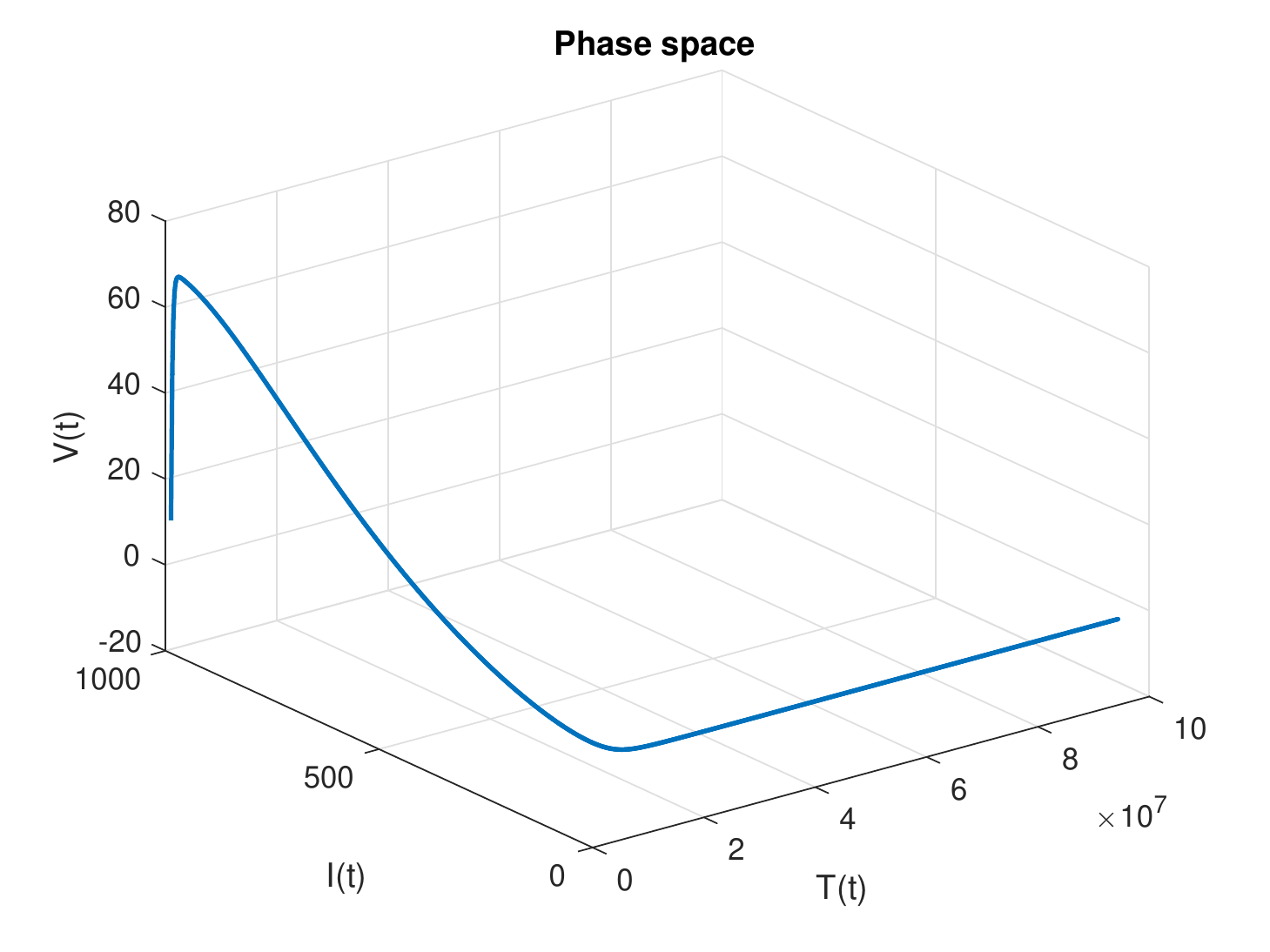}
\end{subfigure}
\begin{subfigure}[b]{0.32\textwidth}
\includegraphics[angle=0,height=3cm,width=\textwidth]{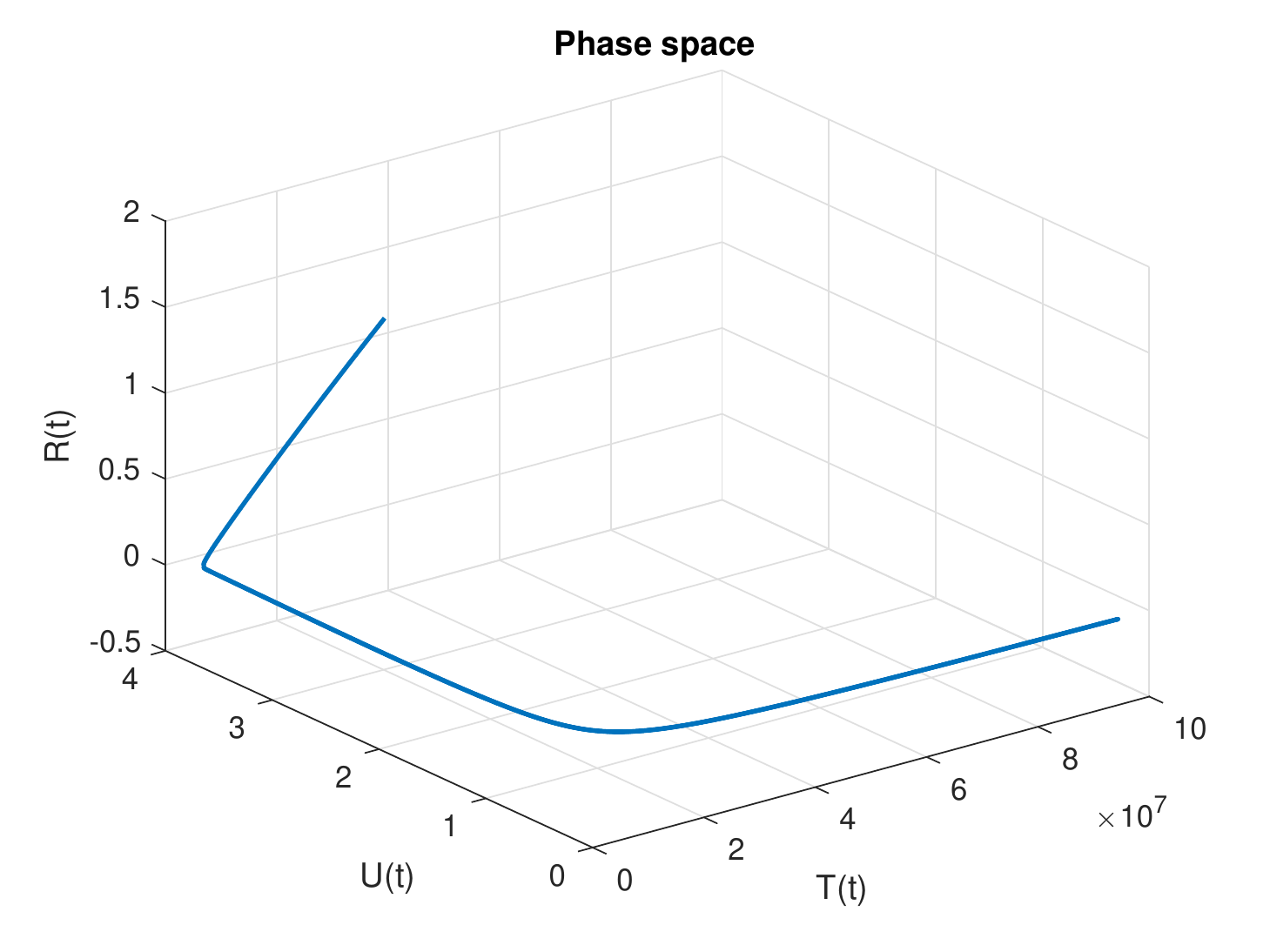}
\end{subfigure}
\caption{Simulations of Initial value problem (\ref{a6}) using
various initial conditions when $s= 3 \times 10^{4}$; $d= 2\times
10^{-3}$; $ r_{T}=3\times 10^{-2}$; $ r_{I}= 10^{-3}$;
$T_\textrm{max}= 10^{8}$; $b = 5.44\times10^{-4}$; $\delta =
0.0987$; $c = 8\times10^{-1}$ ; $\rho = 0.8898$; $ \alpha = 9.5$;
$\beta = 30$; $ \epsilon = 0.95 $; $\sigma = 30$; $U_\textrm{max}=
30$; $\gamma = 0.5$; $E_{0}=(9.439273688\times10^{7}, 0, 0, 0, 0)$
(Such that $\mathcal{R}'_{0}=0.95 $ and
$\mathcal{R}''_{0}=0.0552$).} \label{fig1}
\end{figure}
Figure~\ref{fig1} illustrates the case $\mathcal{R}'_{0} < 1$ and
$\mathcal{R}''_{0} < 1$. From this figure, we observe that the
trajectories converge to the HCV-free equilibrium $E_{0}$. This
corresponds to the case where the equilibrium is globally
asymptotically stable. In this case, the infection could disappear
within the host.
\\\indent
\begin{figure}[!h]
\centering
\begin{subfigure}[b]{0.32\textwidth}
\includegraphics[angle=0,height=3cm,width=\textwidth]{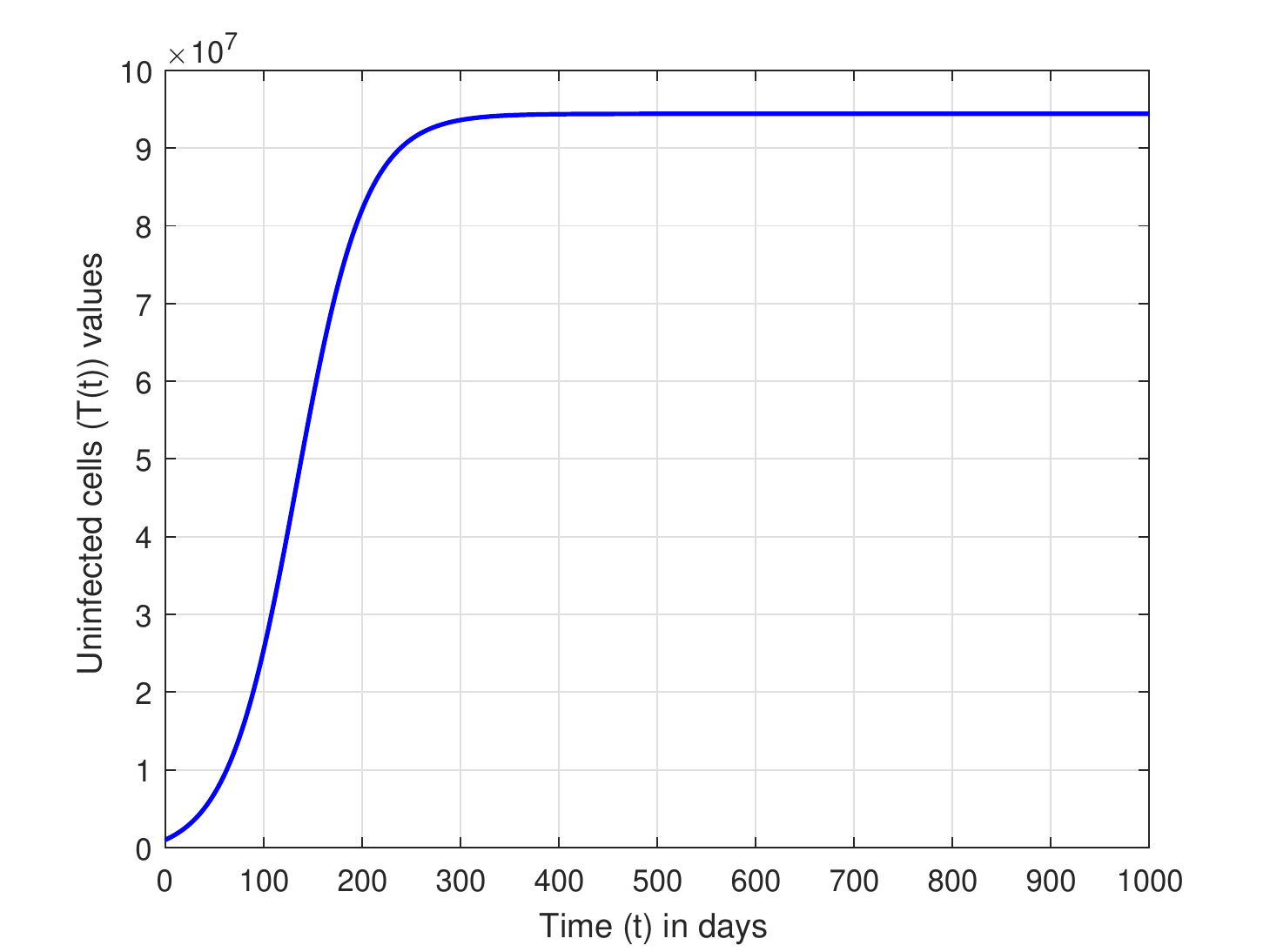}
\end{subfigure}
\begin{subfigure}[b]{0.32\textwidth}
\includegraphics[angle=0,height=3cm,width=\textwidth]{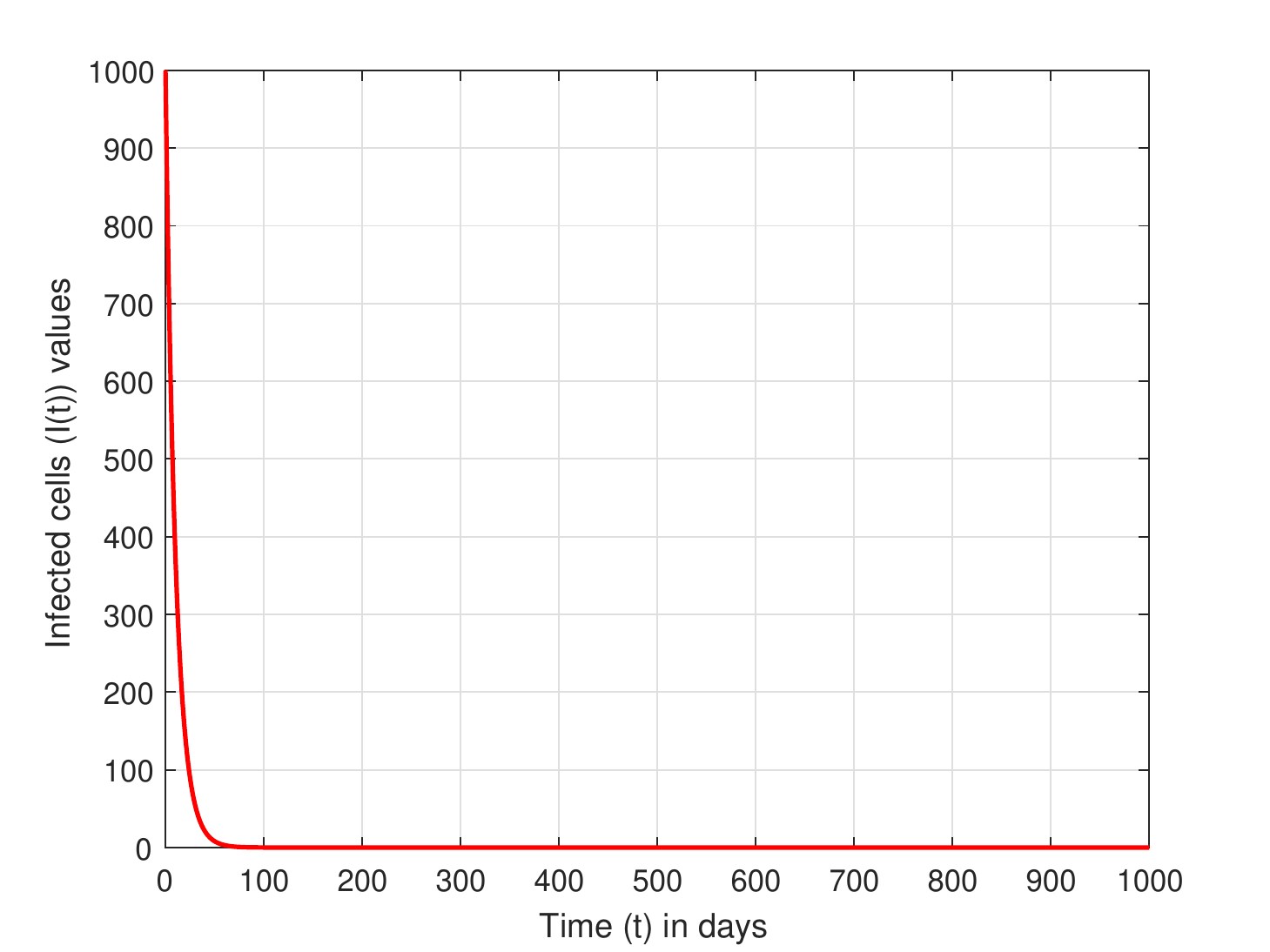}
\end{subfigure}
\begin{subfigure}[b]{0.32\textwidth}
\includegraphics[angle=0,height=3cm,width=\textwidth]{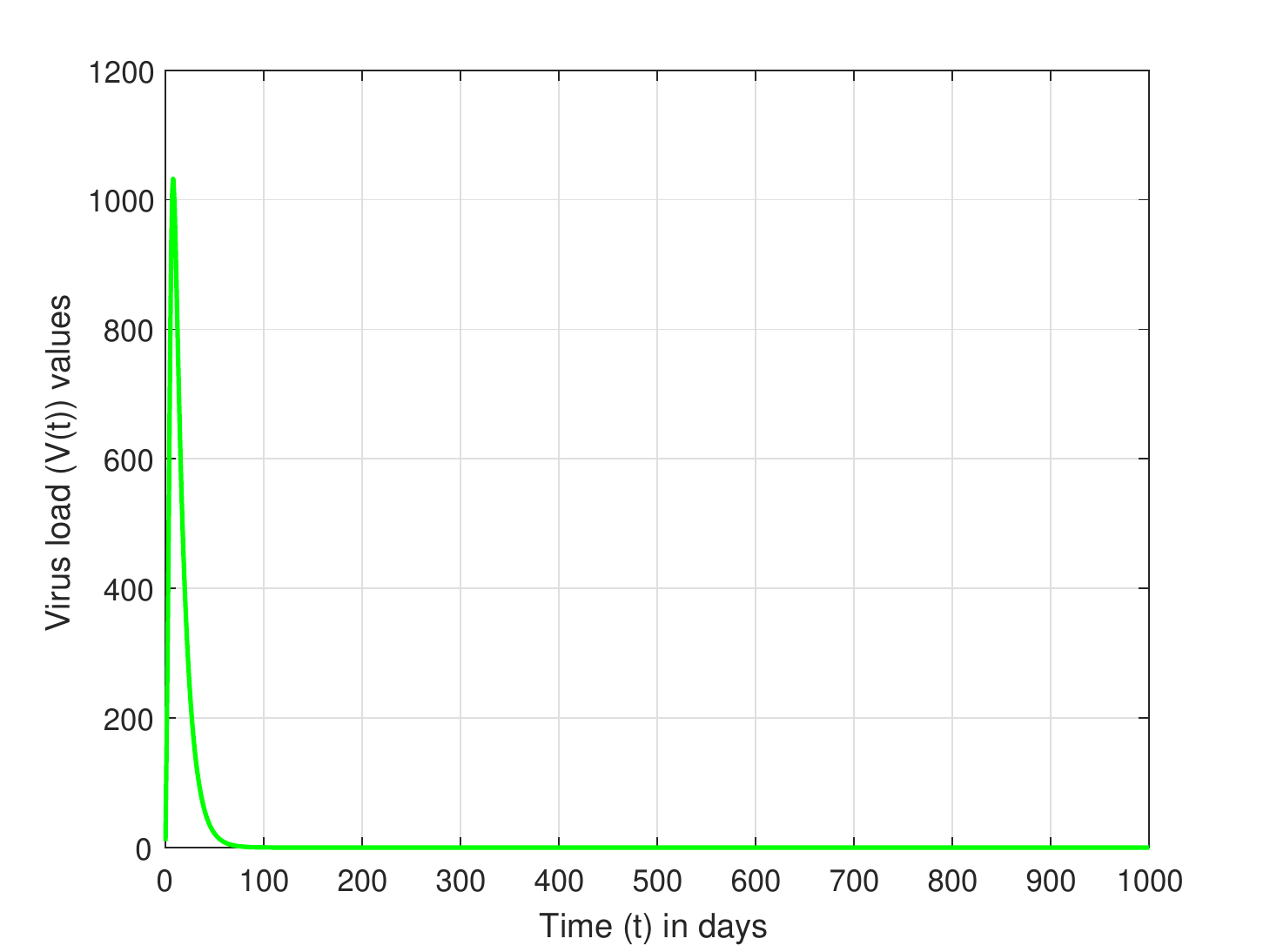}
\end{subfigure}
\begin{subfigure}[b]{0.32\textwidth}
\includegraphics[angle=0,height=3cm,width=\textwidth]{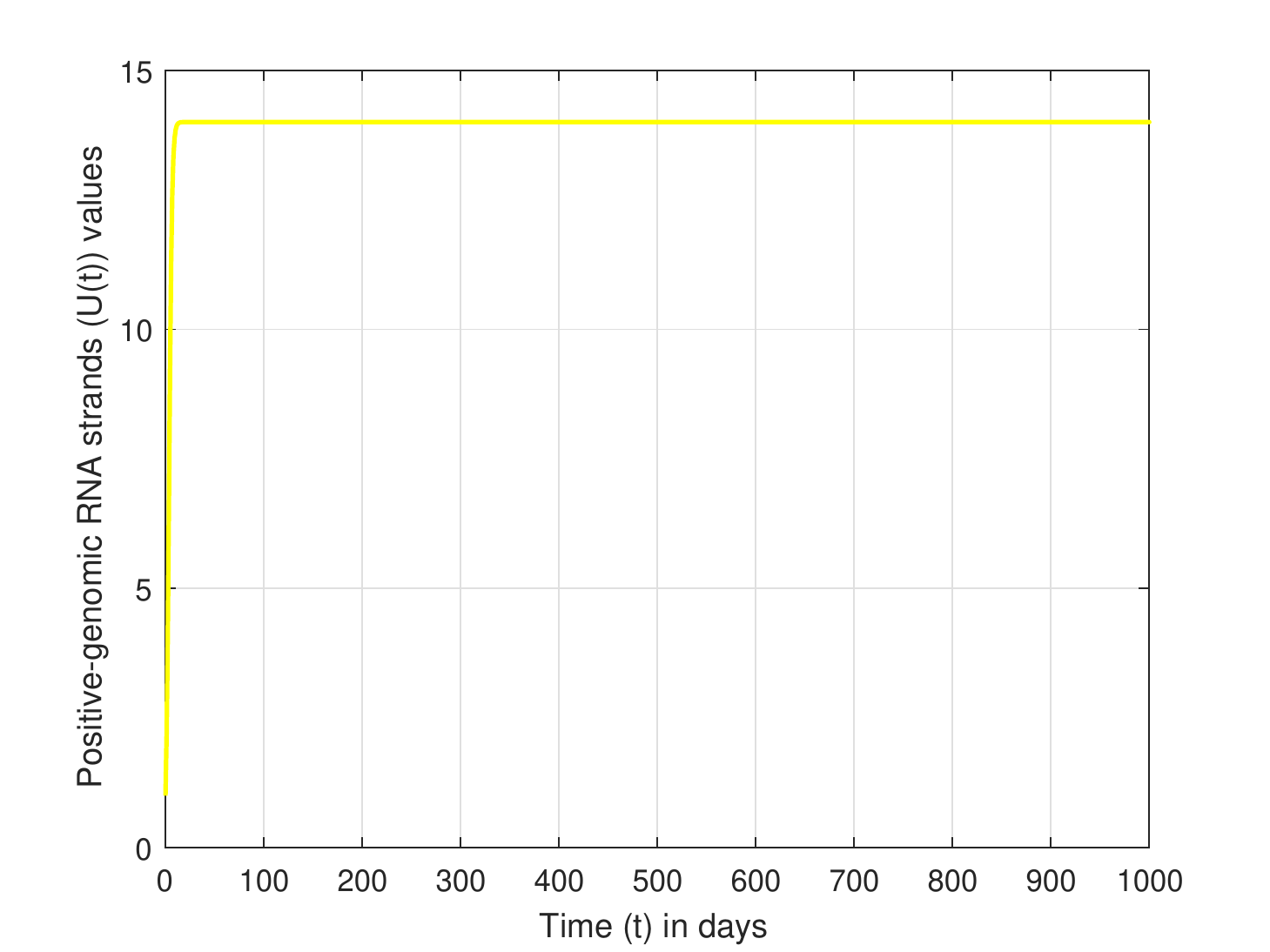}
\end{subfigure}
\begin{subfigure}[b]{0.32\textwidth}
\includegraphics[angle=0,height=3cm,width=\textwidth]{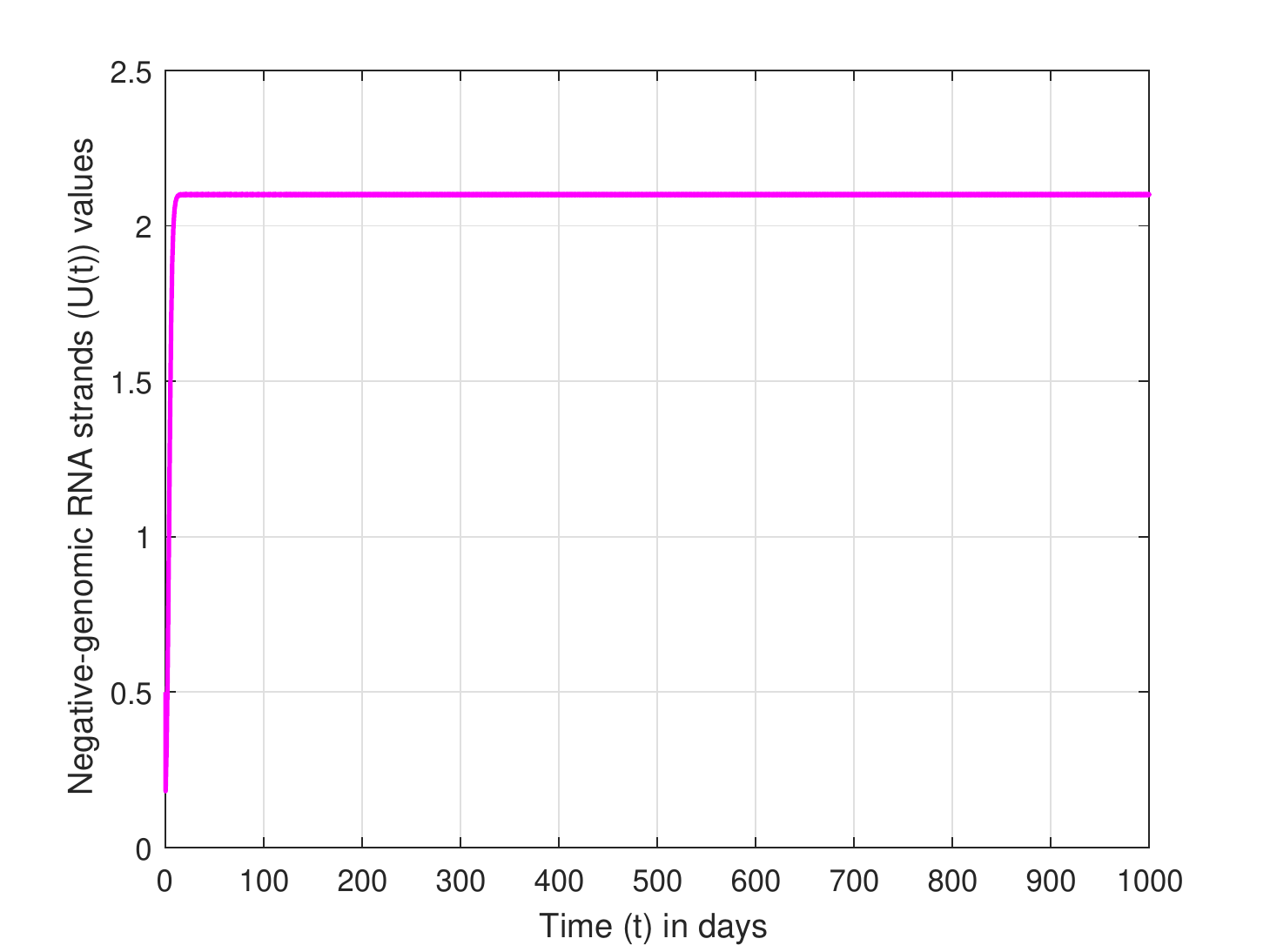}
\end{subfigure}
\begin{subfigure}[b]{0.32\textwidth}
\includegraphics[angle=0,height=3cm,width=\textwidth]{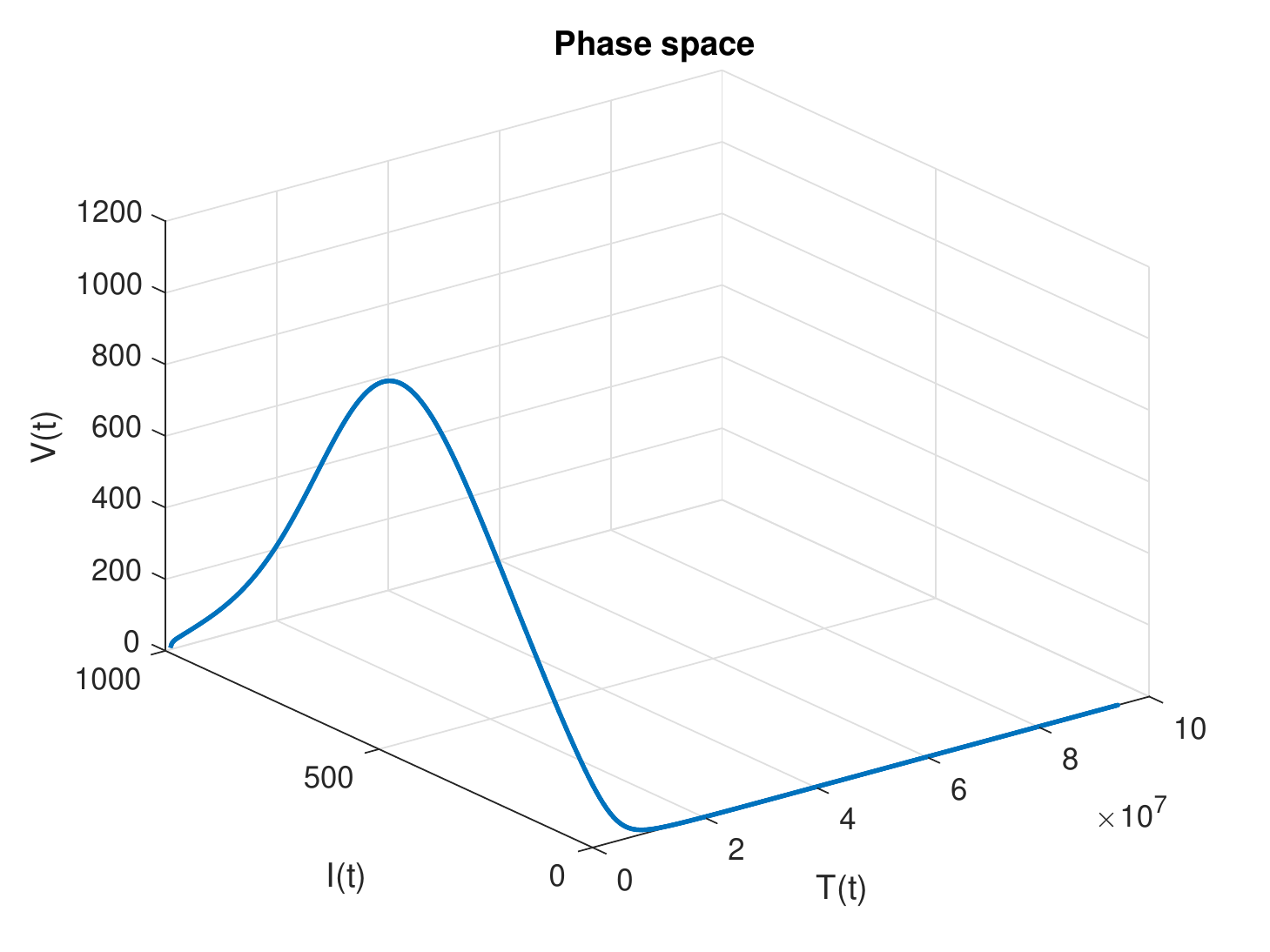}
\end{subfigure}
\begin{subfigure}[b]{0.32\textwidth}
\includegraphics[angle=0,height=3cm,width=\textwidth]{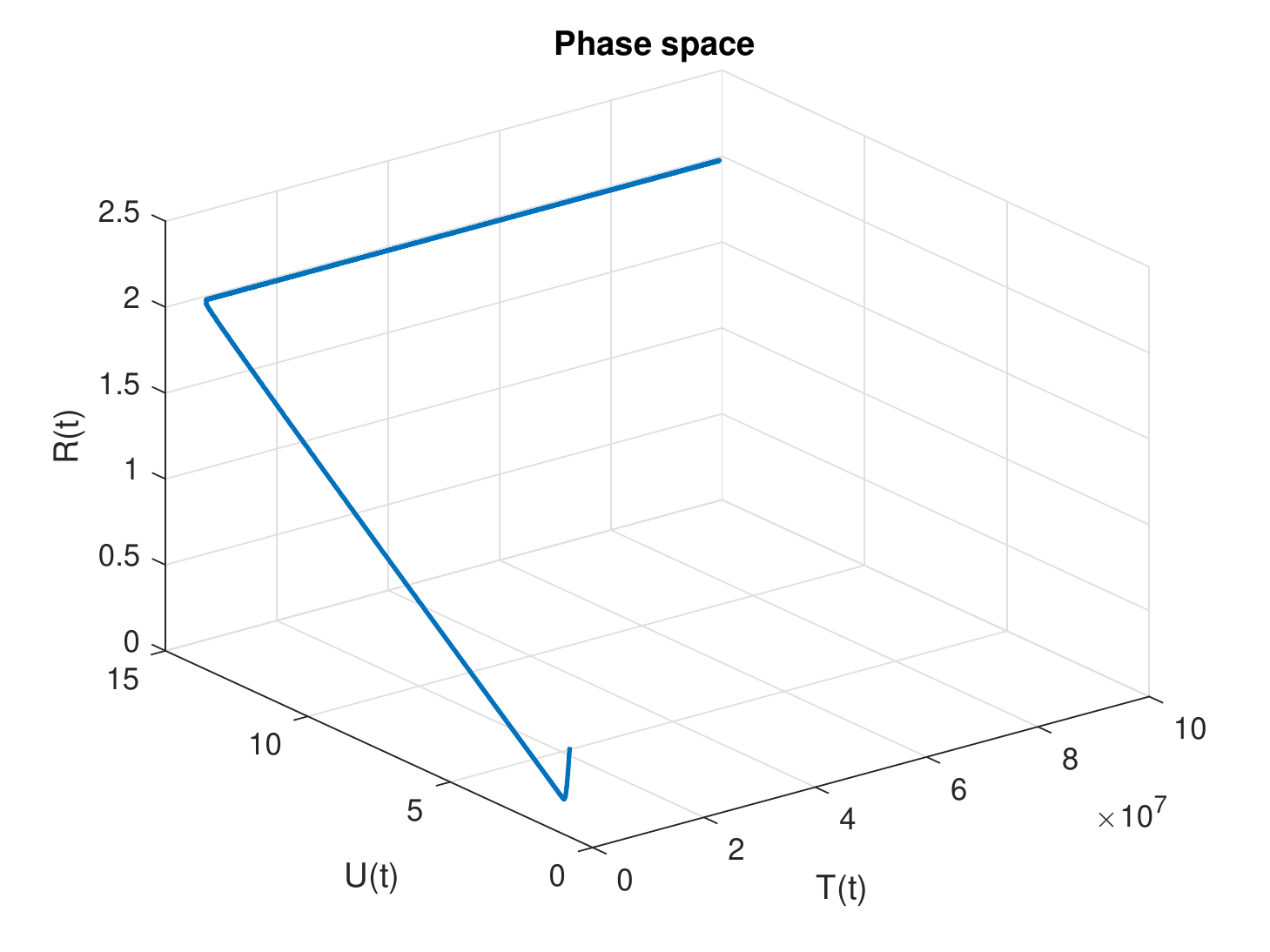}
\end{subfigure}
\caption{Simulations of Initial value problem (\ref{a6}) using
various initial conditions when $s= 3 \times 10^{4}$; $d= 2\times
10^{-3}$; $ r_{T}=3\times 10^{-2}$; $ r_{I}= 10^{-3}$;
$T_\textrm{max}= 10^{8}$; $b = 5.44\times10^{-4}$; $\delta =
0.0987$; $c = 8\times10^{-1}$ ; $\rho = 0.8898$; $ \alpha = 5$;
$\beta = 10$; $ \epsilon = 0.1 $; $\sigma = 30$; $U_\textrm{max} =
30$; $\gamma=0.8$; $E'_{0}=(9.439273688\times10^7, 0, 0, 4.6, 2.20)$
(Such that $\mathcal{R}''_{0}=0.0159$ and $\mathcal{R}'_{0}=
1.8750$).} \label{fig12}
\end{figure}
 Figure~\ref{fig12} illustrates the case
$\mathcal{R}''_{0}< 1$ and $\mathcal{R}'_{0}> 1$. We observe that
the trajectories converge to the second HCV-free equilibrium
$E'_{0}$. This corresponds to the case where $E'_{0}$ is globally
asymptotically stable. In this case, the infection could disappear
within the host but the viral replication units will persist.
\\\indent
\begin{figure}[!h]
\centering
\begin{subfigure}[b]{0.32\textwidth}
\includegraphics[angle=0,height=3cm,width=\textwidth]{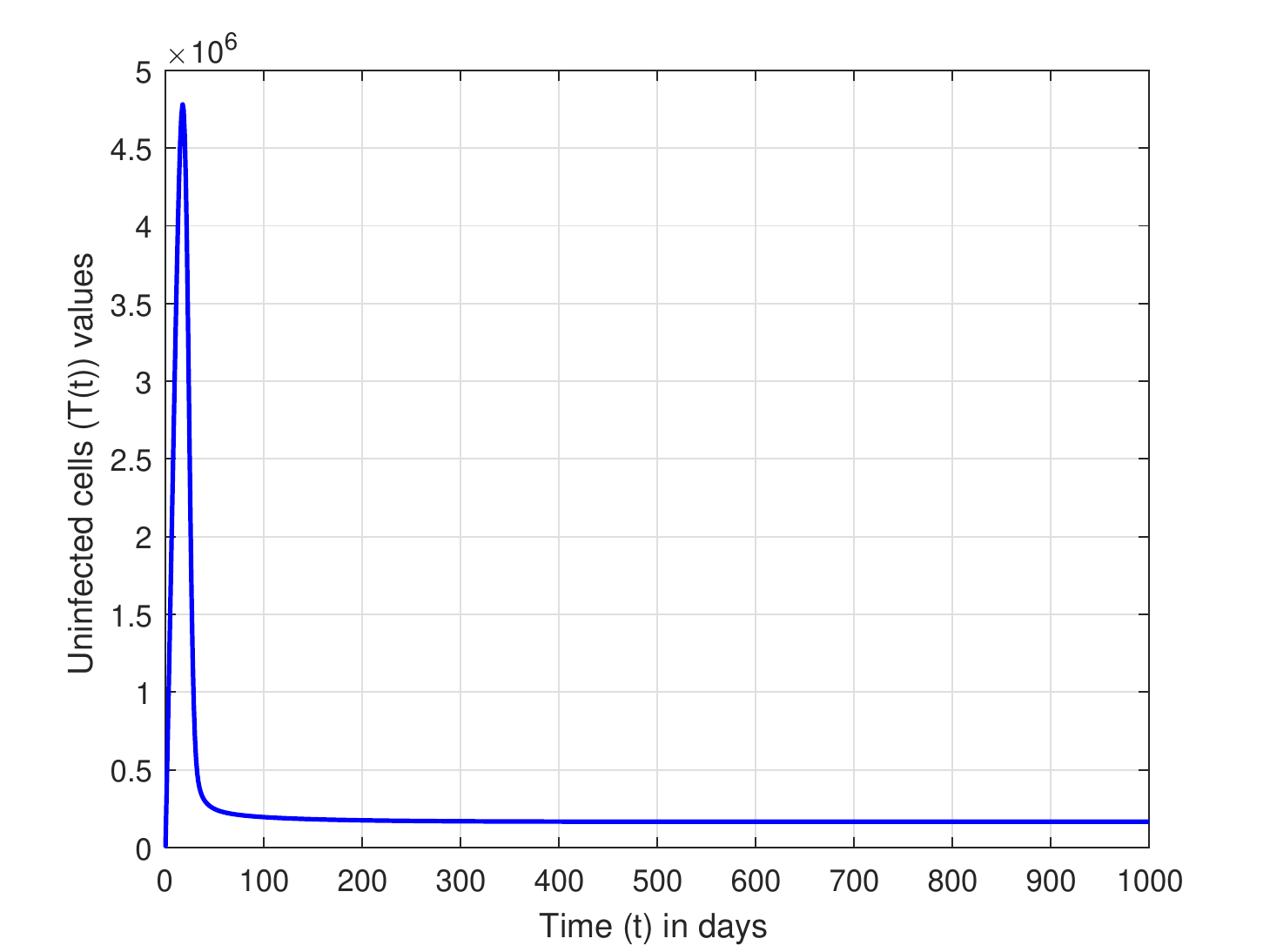}
\end{subfigure}
\begin{subfigure}[b]{0.32\textwidth}
\includegraphics[angle=0,height=3cm,width=\textwidth]{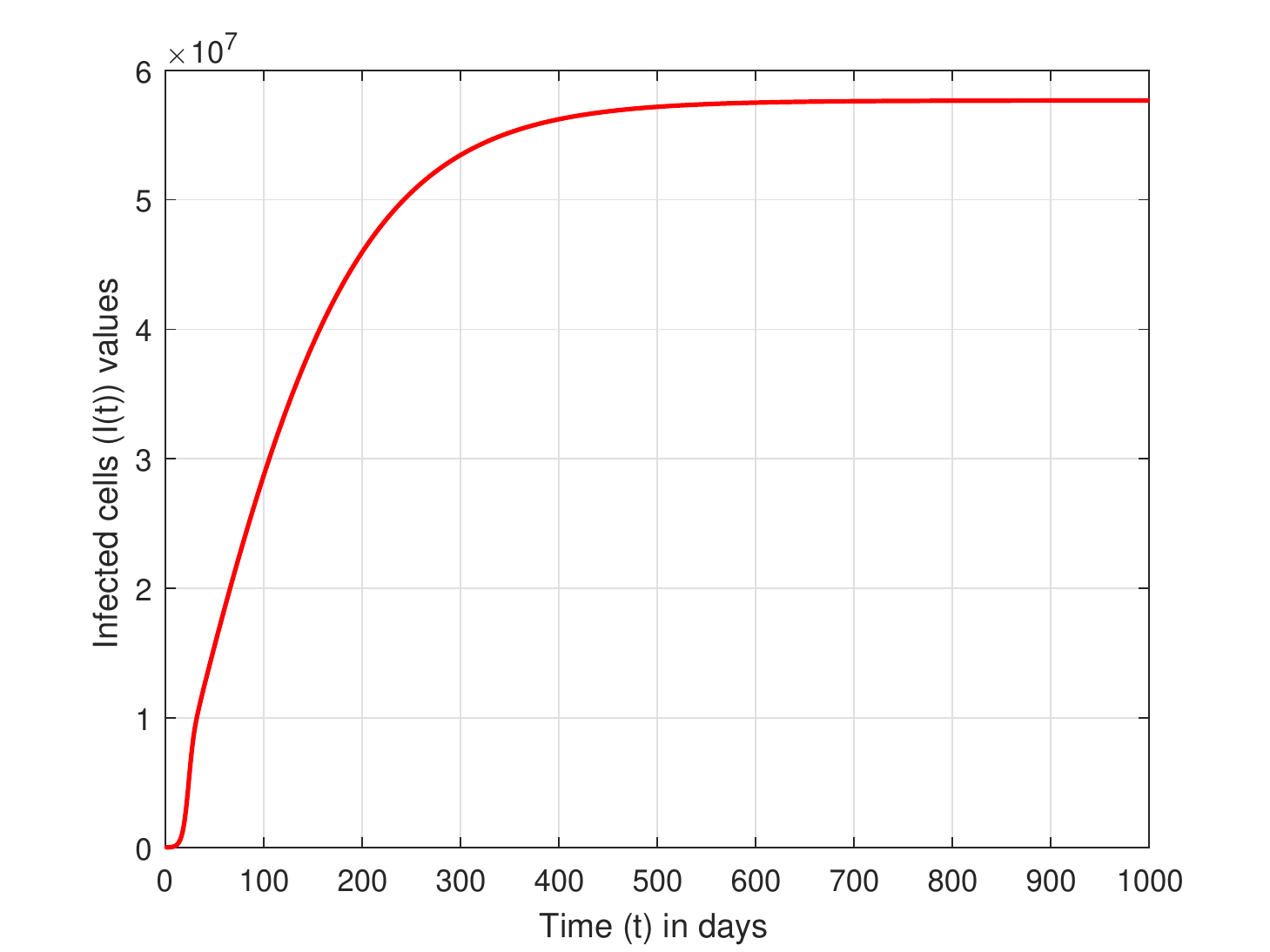}
\end{subfigure}
\begin{subfigure}[b]{0.32\textwidth}
\includegraphics[angle=0,height=3cm,width=\textwidth]{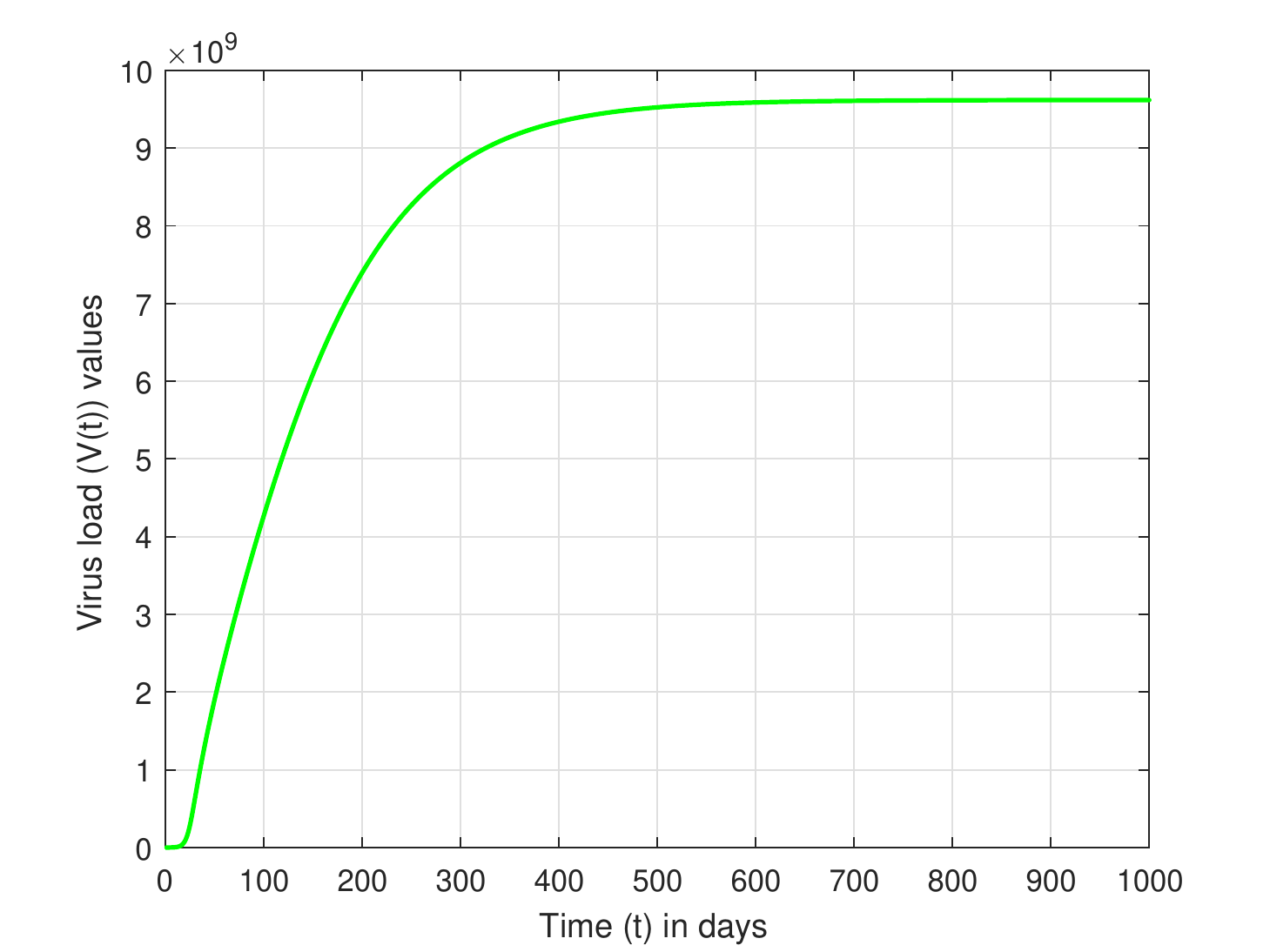}
\end{subfigure}
\begin{subfigure}[b]{0.32\textwidth}
\includegraphics[angle=0,height=3cm,width=\textwidth]{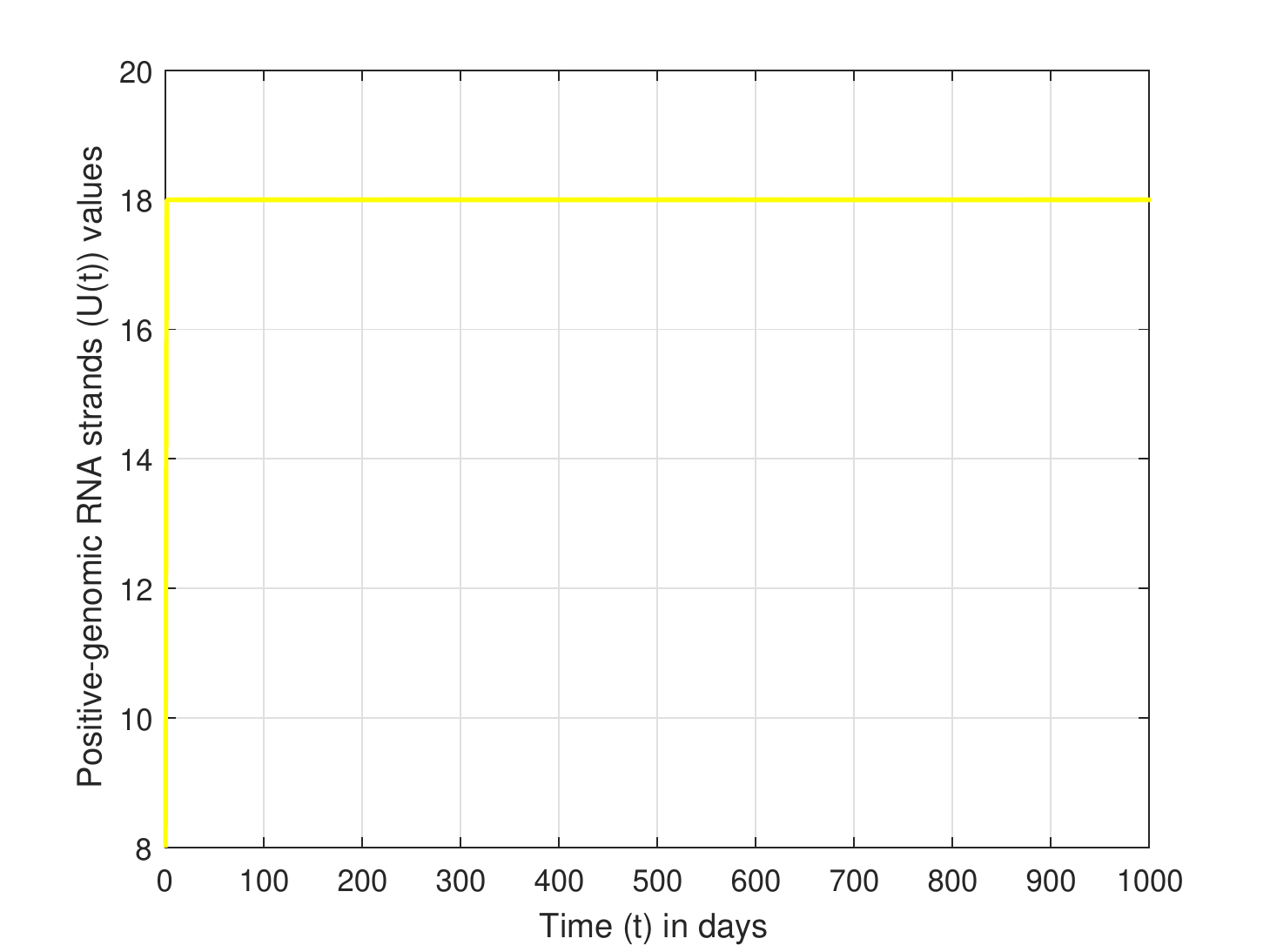}
\end{subfigure}
\begin{subfigure}[b]{0.32\textwidth}
\includegraphics[angle=0,height=3cm,width=\textwidth]{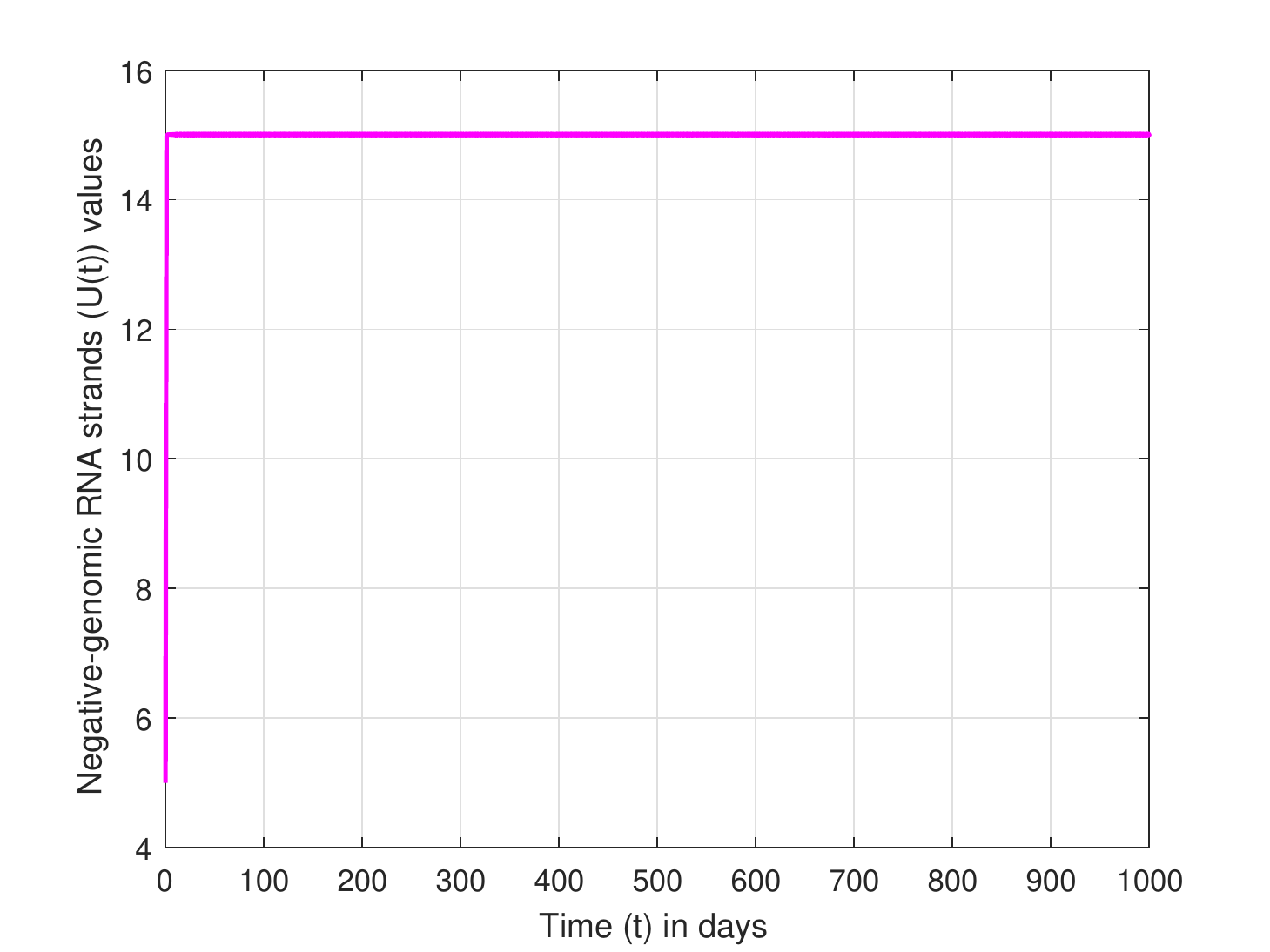}
\end{subfigure}
\begin{subfigure}[b]{0.32\textwidth}
\includegraphics[angle=0,height=3cm,width=\textwidth]{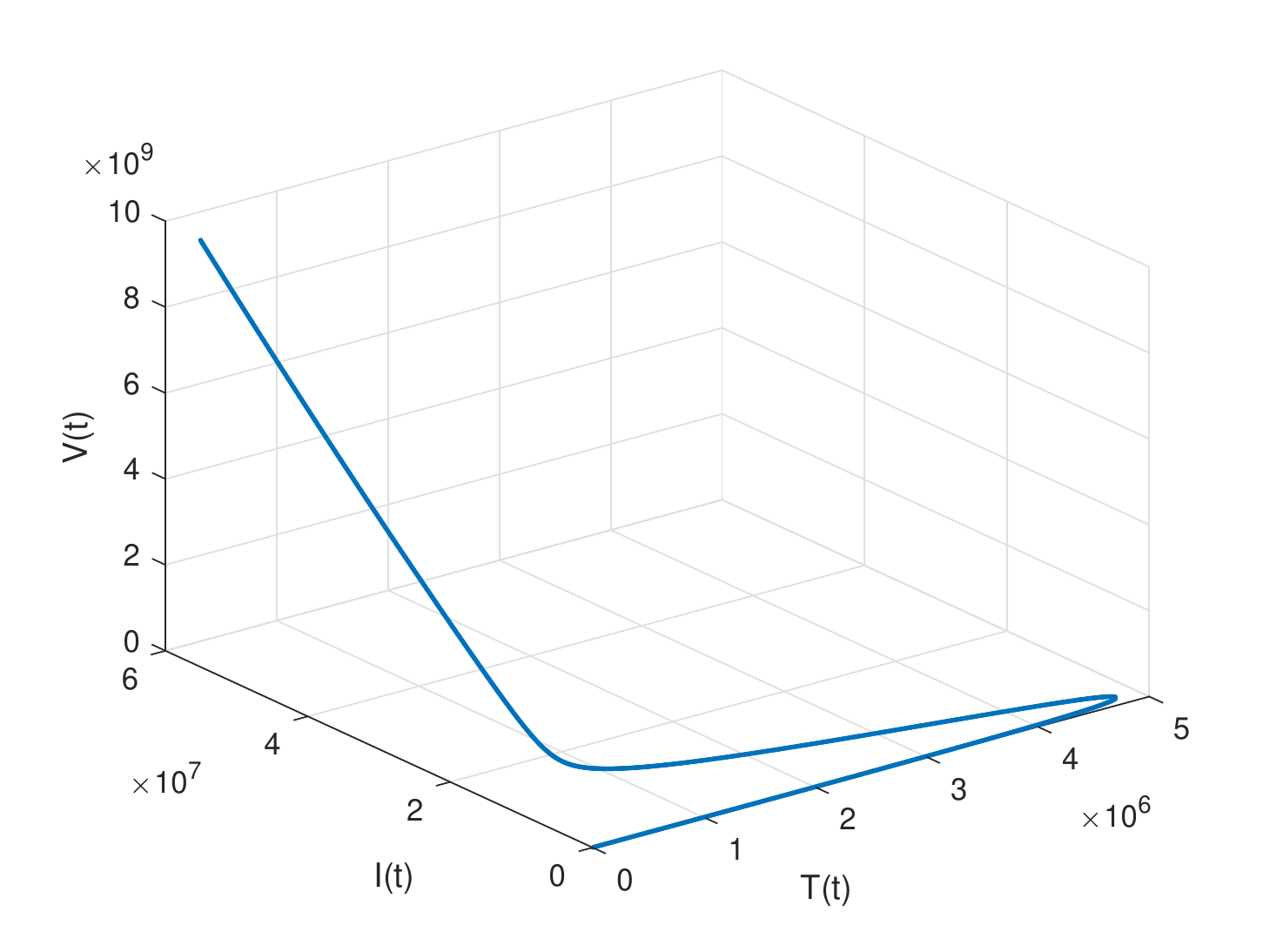}
\end{subfigure}
\begin{subfigure}[b]{0.32\textwidth}
\includegraphics[angle=0,height=3cm,width=\textwidth]{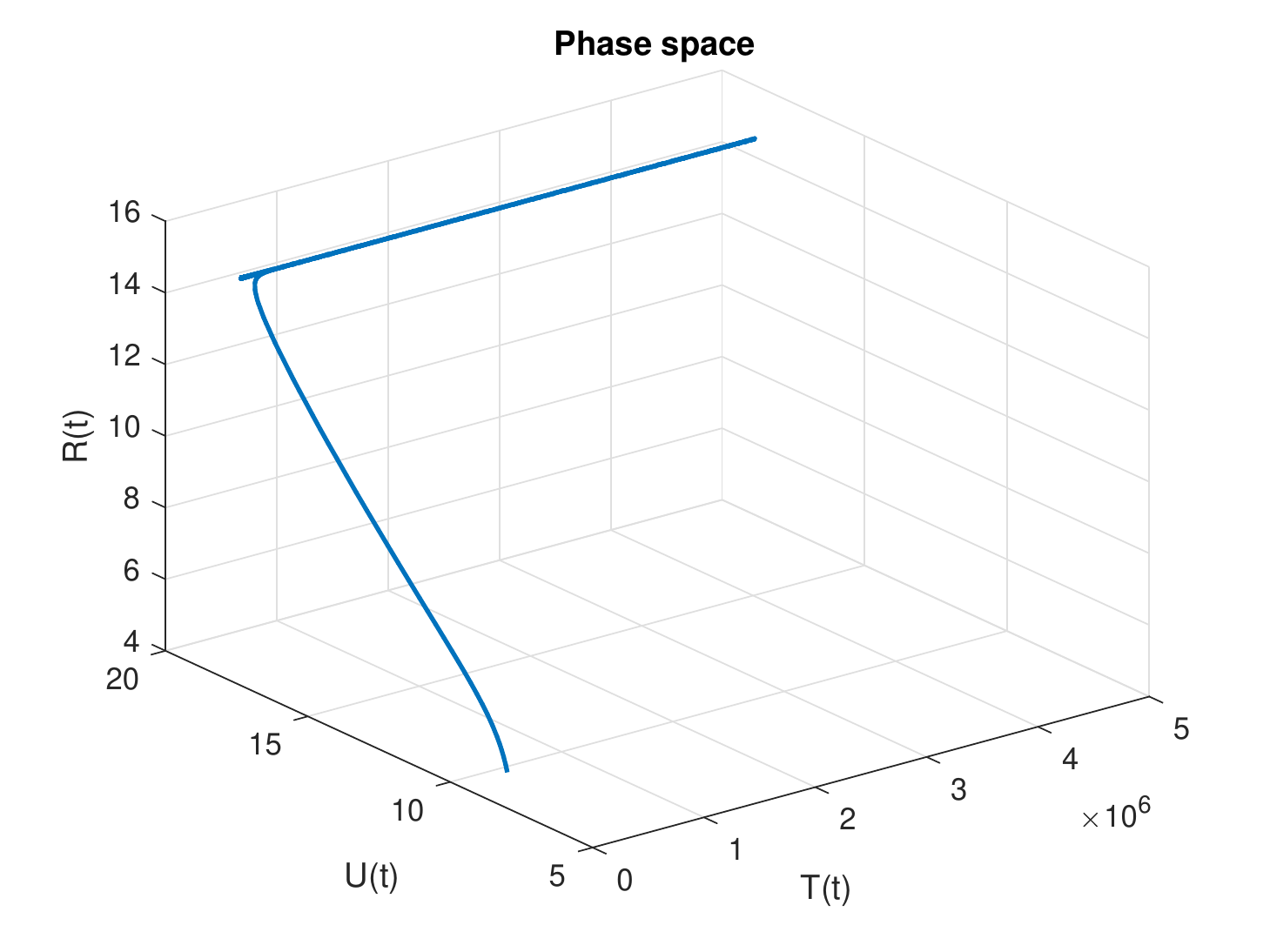}
\end{subfigure}
\caption{Simulations of Initial value problem (\ref{a6}) using
various initial conditions when $s= 3 \times 10^{5}$; $d=
2\times10^{-3}$; $r_{T}= 3\times10^{-2}$; $ r_{I}= 10^{-3}$;
$T_\textrm{max}= 10^{8}$; $b = 0.01$; $ \delta = 10^{-6}$; $c =
8\times10^{-2}$ ; $ \rho = 0.8898$ ; $\alpha= 50$ ; $\beta = 15$;
$\epsilon = 0.5$; $\sigma = 30$ ; $ U_\textrm{max} = 30$ ; $\gamma =
5$; $E^{*}= (0.25\times10^{6}, 5.766414063\times10^{7},
9.617092746\times10^{9}, 18, 15)$ (Such that $\mathcal{R}'_{0}
=6.9235$ and $\mathcal{R}''_{0} =2.5$).} \label{fig2}
\end{figure}
Figure~\ref{fig2} illustrates the case corresponding to
$\mathcal{R}'_{0} > 1$ and $\mathcal{R}''_{0} > 1$. From this
figure, it is seen that the trajectories converge to an infected
equilibrium $E^{*}$. In this case, the infection  persists within
the host.
\\\indent
Apart from, numerical solutions of ODE model system (\ref{a6}), we
also complete the numerical simulations by phase portrait in TIV
space and TUR space.

\section{Conclusion}\label{sec7}
In order to better understand the dynamics of HCV viral infection,
this paper presents a mathematical study on the global dynamics of
improved HCV models based on system~(3) in \cite{guedj2010}. In this
work, we have studied the models describing the dynamics of the
hepatitis C viral cellular and intracellular infection model with
logistic cellular growth. The model includes five equations
illustrating the interaction between the uninfected cells, infected
cells, HCV virus, positive genomic RNA strands and negative strands.
The global existence, the positivity and the boundedness of
solutions are established. The existence of an infected steady state
is also established for certain values of the parameters.
Furthermore, we have studied the local stability of both uninfected
equilibrium and infected equilibrium. Concerning global asymptotic
stability, that of an uninfected equilibrium point was established
by the construction of a suitable Lyapunov function. It was also
shown using the Li-Muldowney global-stability criterion that for
certain values of the parameters every solution converges to a
steady state. Finally, we performed numerical simulations to
illustrate the theoretical results obtained. It would be interesting
to incorporate time delay or spatial dependence into the current
model. These two challenges will be the concerns of future
investigation.

A number of the conclusions of the paper required making
restrictions on the parameters of the model. It would be desirable
to investigate what happens when these restrictions are removed:
which of the conclusions extend? It would also be desirable to
understand the biological meaning of these restrictions. Let us just
comment on one of these, the inequality $r_I-\delta>0$. This means,
roughly speaking, that if all the liver cells were infected the
liver would be able to sustain itself. Note that in practise it
could be that during a chronic hepatitis C infection most of the
hepatocytes are infected. Thus it seems intuitively that this
inequality is related to the condition that a chronic infection
persists. Finally it would be desirable to make a broad comparison
of the properties of the model in this paper with those of other
models for hepatitis C or other related diseases such as hepatitis B
in the literature. Note, for instance, that in contrast to what we
found here, in the model for hepatitis B in \cite{hews2010}, which
also uses the standard incidence function, it does sometimes happen
that $T+I\to 0$. This is connected to the fact that while we choose
$s>0$ the model of \cite{hews2010} corresponds to the case $s=0$.
This in turn is related to the question whether the population of
hepatocytes is maintained by cell division in the liver or whether
is also supported by migration of cells from outside.
\bibliographystyle{amsplain}

\end{document}